\documentclass[11pt]{article}
\usepackage{pgfplots}
\usepackage{fullpage}
\pgfplotsset{compat=1.7}
\usepackage{subcaption}
\usepackage[T1]{fontenc}
\usepackage{lmodern}
\usepackage{amsmath}
\usepackage{amssymb}
\usepackage{amsthm}
\usepackage{color,soul}
\usepackage{xcolor}
\usepackage{graphicx}
\usepackage{tcolorbox}
\usepackage[normalem]{ulem}
\usepackage{float}
\usepackage{thmtools}
\usepackage{hyperref}
\usepackage{cleveref}
\usepackage{microtype}
\usepackage{caption}
\usepackage{bbm}
\usepackage{hyperref, color}
\hypersetup{colorlinks=true,citecolor=blue, linkcolor=blue, urlcolor=blue}
\usepackage[linesnumbered,boxed,ruled,vlined]{algorithm2e}
\usepackage{bm}
\usepackage{bbm}
\usepackage[numbers]{natbib}
\usepackage{xcolor}
\usepackage{enumerate} 
\usepackage{enumitem}
\usepackage{tabularx}
\usepackage{array}
\usepackage{cleveref}
\usepackage{tcolorbox}
\usepackage[thinlines]{easytable}
\usepackage{tikz}

\usetikzlibrary{decorations.pathmorphing, arrows.meta}
\newcolumntype{L}[1]{>{\raggedright\arraybackslash}p{#1}}
\newcolumntype{C}[1]{>{\centering\arraybackslash}m{#1}}
\newcolumntype{R}[1]{>{\raggedleft\arraybackslash}p{#1}}

\usepackage{makecell}
\usepackage{footnote}
\usepackage[ruled, vlined]{algorithm2e}
\makesavenoteenv{tabular}

\newcommand{\DTV}[2]{d_{\mathrm{TV}}\left({#1},{#2}\right)}

\renewcommand{\epsilon}{\varepsilon}

\newtheorem{theorem}{Theorem}[section]
\newtheorem{observation}[theorem]{Observation}

\newtheorem*{claim*}{Claim}

\newtheorem{fact}[theorem]{Fact}
\newtheorem{lemma}[theorem]{Lemma}

\theoremstyle{definition}

\newtheorem{definition}[theorem]{Definition}
\newtheorem{problem}[theorem]{Problem}
\newtheorem{remark}[theorem]{Remark}
\newtheorem*{remark*}{Remark}
\newtheorem{assumption}[theorem]{Assumption}

\def\Pr{\mathop{\mathbf{Pr}}\nolimits}

\renewcommand{\emptyset}{\varnothing}

 \newcommand{\tuple}[1]{\left(#1\right)} 
 
 \newcommand{\tp}{\tuple}

\newcommand{\defeq}{\triangleq}

\def\*#1{\mathbf{#1}} 
\def\+#1{\mathcal{#1}} 
\def\-#1{\mathrm{#1}} 
\def\^#1{\mathbb{#1}} 

\usepackage{todonotes}
\usepackage{xifthen}

\renewcommand{\Pr}[2][]{ \ifthenelse{\isempty{#1}}
  {\mathbf{Pr}\left[#2\right]} {\mathbf{Pr}_{#1}\left[#2\right]} } 
\newcommand{\E}[2][]{ \ifthenelse{\isempty{#1}}
  {\mathbf{\mathbf{E}}\left[#2\right]}
  {\mathbf{\mathbf{E}}_{#1}\left[#2\right]} }
  \newcommand{\Var}[2][]{ \ifthenelse{\isempty{#1}}
  {\mathbf{\mathbf{Var}}\left[#2\right]}
  {\mathbf{\mathbf{Var}}_{#1}\left[#2\right]} }

\crefname{theorem}{Theorem}{Theorems}
\crefname{observation}{Observation}{Observations}
\crefname{claim}{Claim}{Claims}
\crefname{condition}{Condition}{Conditions}
\crefname{algorithm}{Algorithm}{Algorithms}
\crefname{property}{Property}{Properties}
\crefname{example}{Example}{Examples}
\crefname{fact}{Fact}{Facts}
\crefname{lemma}{Lemma}{Lemmas}
\crefname{corollary}{Corollary}{Corollaries}
\crefname{definition}{Definition}{Definitions}
\crefname{remark}{Remark}{Remarks}
\crefname{proposition}{Proposition}{Propositions}
\crefname{equation}{equation}{equations}
\crefname{enumi}{}{}
\crefname{enumii}{}{}
\crefname{enumiii}{}{}
\crefname{enumiv}{}{}
\creflabelformat{enumi}{#2#1#3}
\creflabelformat{enumii}{#2#1#3}
\creflabelformat{enumiii}{#2#1#3}
\creflabelformat{enumiv}{#2#1#3}

\newboolean{DoubleBlind}
\setboolean{DoubleBlind}{false}

\title{On Computing Total Variation Distance \\ Between Mixtures of Product Distributions}

\author{\ifthenelse{\boolean{DoubleBlind}}{Author(s)}
{Weiming Feng \footnote{School of Computing and Data Science, The University of Hong Kong. \\\makebox[1.78em][l]{}Emails: \texttt{wfeng@hku.hk}, \texttt{fyc0130@connect.hku.hk}, and \texttt{ymjessen02@connect.hku.hk}.
} \and Yucheng Fu \footnotemark[1] \and Minji Yang \footnotemark[1] \and Anqi Zhang \footnote{The Institute for Interdisciplinary Information Sciences, Tsinghua University. \\ \makebox[1.78em][l]{}Email: \texttt{zhangaq21@mails.tsinghua.edu.cn}.}
}}

\date{}

\begin{document}

\maketitle

\begin{abstract}
    We study the problem of approximating the total variation distance between two mixtures of product distributions over an $n$-dimensional discrete domain. Given two mixtures $\^P$ and $\^Q$ with $k_1$ and $k_2$ product distributions over $[q]^n$, respectively, we give a randomized algorithm that approximates $\DTV{\^P}{\^Q}$ within a multiplicative error of $(1\pm \varepsilon)$ in time $\mathrm{poly}((nq)^{k_1+k_2},1/\epsilon)$. 
    We also study the special case of mixtures of Boolean subcubes over $\{0,1\}^n$. For this class, we give a deterministic algorithm that exactly computes the total variation distance in time $\mathrm{poly}(n,2^{O(k_1+k_2)})$, and show that exact computation is $\#\mathsf{P}$-hard when $k_1+k_2=\Theta(n)$.
\end{abstract}

\section{Introduction}
Let $\^P$ and $\^Q$ be two discrete distributions over a sample space $\Omega$. The \emph{total variation distance (TV-distance)} between $\^P$ and $\^Q$ is defined as
\begin{align*}
    \DTV{\^P}{\^Q} = \frac{1}{2} \sum_{\omega \in \Omega} \left| \^P(\omega) - \^Q(\omega) \right|  =  \sum_{\omega \in \Omega}\max\{0, \^P(\omega) - \^Q(\omega)\}.
\end{align*}
The TV-distance is one of the most fundamental metrics for measuring the difference between distributions, as it characterizes the optimal distinguishability between $\^P$ and $\^Q$. Computing the TV-distance is therefore a basic problem arising in learning and testing.

A line of research has investigated the problem of \emph{approximating} the TV-distance between two high-dimensional distributions, which is particularly interesting when the two distributions admit \emph{succinct representations}. 
Early works~\cite{sahai2003complete,CanonneR14,ChenK14,Kiefer18,BGMV20} studied the algorithms and complexity of approximating the TV-distance with additive error.
More recently, \cite{BGMMPV23} established that even for two product distributions, the \emph{exact computation} of the TV-distance is $\#\mathsf{P}$-hard.
In the same paper, the authors initiated the study of approximating the TV-distance with \emph{relative error}, a more challenging task than the classical additive-error approximation. Subsequently, polynomial-time approximation algorithms have been developed for product distributions~\cite{BGMMPV23, FGJW23,FengLL24}, as well as for many high-dimensional distributions defined by \emph{local interactions}, including Markov chains~\cite{FengLL24}, Bayesian networks~\cite{BGMMPV24ICML}, and undirected graphical models~\cite{feng2025approximating}.

Beyond local interactions, it is natural to consider more general high-dimensional distributions with inherently \emph{non-local structure}.
A fundamental and extensively studied example is given by \emph{mixtures of product distributions}.
Let $[q]=\{1,2,\ldots,q\}$ and let $\Omega = [q]^n$ denote the $n$-dimensional sample space.
Fix two integers $k_1, k_2 \geq 1$. For each $s \in [k_1]$, let $\^P^{(s)} = \bigotimes_{i=1}^n \^P_i^{(s)}$ be a product distribution over $\Omega$, where $\^P_i^{(s)}$ denotes the marginal on the $i$-th coordinate, so that the coordinates are mutually independent under $\^P^{(s)}$. Let $\alpha^{(1)},\alpha^{(2)},\ldots,\alpha^{(k_1)}$ be mixing weights satisfying $\sum_{s=1}^{k_1} \alpha^{(s)} = 1$. The resulting mixture distribution is written as 
$$\^P = \sum_{s=1}^{k_1} \alpha^{(s)} \bigotimes_{i=1}^n \^P^{(s)}_i.$$
More explicitly, $\^P$ is the distribution over $\Omega = [q]^n$ given by
\begin{align}\label{equation:mixture-of-product-distributions}
\forall \omega \in \Omega, \quad  \^P(\omega) = \sum_{s=1}^{k_1} \alpha^{(s)} \^P^{(s)}(\omega) = \sum_{s=1}^{k_1} \alpha^{(s)} \prod_{i=1}^n \^P_i^{(s)}(\omega_i),
\end{align}

Mixtures of product distributions form a fundamental class of \emph{latent-variable models} and have been extensively studied in learning theory~\cite{FeldmanOS08,JainO14,ChenM19,GordonMRS21,GordonJMRS24}.
To draw a random sample $X \sim \^P$, one first samples a \emph{hidden} component index $S \in [k_1]$ according to $\Pr[]{S=s} = \alpha^{(s)}$, and then samples $X \sim \^P^{(S)} = \bigotimes_{i=1}^n \^P^{(S)}_i$ from the product distribution $\^P^{(S)}$.
Although each component $\^P^{(s)}$ is itself a product distribution, the hidden index $S$ induces a highly \emph{non-local dependency} that simultaneously couples all coordinates of the sample $X = (X_1, X_2, \ldots, X_n)$, which cannot be captured by any sparse local structure.

In this paper, we study the problem of \emph{approximating} the TV-distance between two mixtures of product distributions to within an arbitrarily small relative error, formally stated as follows. 

\begin{problem}\label{problem:approx-tv-distance}
Approximate the TV-distance between two mixtures of product distributions.
\begin{itemize}
    \item \textbf{Input}: An error bound $\epsilon > 0$ and two mixtures of product distributions over $\Omega = [q]^n$: 
    \begin{align*}
        \^P = \sum_{s=1}^{k_1} \alpha^{(s)} \bigotimes_{i=1}^n \^P^{(s)}_i \text{ and } \^Q = \sum_{t=1}^{k_2} \beta^{(t)} \bigotimes_{i=1}^n \^Q^{(t)}_i.
    \end{align*}
    The distribution $\^P$ is specified by the $k_1$ mixing weights $\alpha^{(1)},\alpha^{(2)},\ldots,\alpha^{(k_1)}$ together with the $k_1$ product distributions $\^P^{(1)},\^P^{(2)},\ldots,\^P^{(k_1)}$, where each $\^P^{(s)}$ is described by its $n$ marginals $\^P_1^{(s)}, \^P_2^{(s)}, \ldots, \^P_n^{(s)}$ over $[q]$. The distribution $\^Q$ is given analogously. The total input size of the two mixtures is $O((k_1+k_2)nq)$.
    \item \textbf{Output}: A number $\hat{d}$ such that
    \begin{align*}
       (1-\epsilon) \DTV{\^P}{\^Q} \leq \hat{d} \leq (1+\epsilon) \DTV{\^P}{\^Q}.
    \end{align*}
\end{itemize}
\end{problem}

Due to the non-local nature of these dependencies among all coordinates, existing techniques for TV-distance approximation cannot be directly applied to mixtures of product distributions. The only prior progress is a polynomial-time algorithm for \emph{deciding} whether $\DTV{\^P}{\^Q} = 0$ or $\DTV{\^P}{\^Q} > 0$~\cite{0001GMMPV25}, leaving the approximation problem open.
We partially resolve this open question by giving an FPRAS for the TV-distance between two mixtures of product distributions, provided that the numbers of components $k_1$ and $k_2$ are constants.

\begin{theorem}\label{thm:tv-mix-prod}
Let $k_1,k_2 \geq 1$ be constants. There exists an FPRAS for \Cref{problem:approx-tv-distance}: given any $\epsilon > 0$ and any pair of mixtures of product distributions $\^P,\^Q$ over $[q]^n$, where $\^P = \sum_{s=1}^{k_1} \alpha^{(s)} \bigotimes_{i=1}^n \^P^{(s)}_i$ has $k_1$ components and $\^Q = \sum_{t=1}^{k_2} \beta^{(t)} \bigotimes_{i=1}^n \^Q^{(t)}_i$ has $k_2$ components, it solves the problem in time $O(\frac{q^2(nq)^{2(k_1+k_2-1)}}{\varepsilon^2})$ with success probability at least $99\%$.
\end{theorem}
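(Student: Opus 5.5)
Our plan is the standard ``good estimator plus importance sampling'' route, with the mixture structure handled by a linear-algebraic decomposition of the sample space. Write $d=\DTV{\^P}{\^Q}=\sum_{\omega}\max\{0,\^P(\omega)-\^Q(\omega)\}$. The sign of $\^P(\omega)-\^Q(\omega)$ depends on $\omega$ only through the vector
\[
v(\omega)=\bigl(\^P^{(1)}(\omega),\dots,\^P^{(k_1)}(\omega),\^Q^{(1)}(\omega),\dots,\^Q^{(k_2)}(\omega)\bigr)\in\mathbb{R}^{k_1+k_2},
\]
namely through the sign of a fixed linear form $\langle c,v(\omega)\rangle$ with $c=(\alpha,-\beta)$. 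In general we cannot compute $d$ exactly, because the event $\{\^P>\^Q\}$ is not a product set. Instead we aim to construct, in time $\mathrm{poly}((nq)^{k_1+k_2})$, a proposal distribution $\mu$ together with weights so that $Z=\max\{0,\^P(X)-\^Q(X)\}/\mu(X)$ for $X\sim\mu$ is unbiased for $d$ and satisfies $\E{Z^2}\le M\,d^2$ with $M=O(q^2(nq)^{2(k_1+k_2-1)})$. Averaging $O(M/\epsilon^2)$ samples and applying Chebyshev then gives the stated running time and a $99\%$ success probability.

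The key step is to find a proposal dominating $\max\{0,\^P-\^Q\}$ pointwise up to the factor $M$. First, note the bound $\max\{0,\^P(\omega)-\^Q(\omega)\}\le \^P(\omega)=\sum_s\alpha^{(s)}\^P^{(s)}(\omega)$, and that $\^P$ is easy to sample. Importance sampling from $\^P$ alone fails, however, when $d\ll 1$. So we refine by working in log-space. For each $\omega$, the quantities $\log \^P^{(s)}(\omega)$ and $\log\^Q^{(t)}(\omega)$ are additive over coordinates: each equals $\sum_i \ell_i(\omega_i)$ with $\ell_i\in\mathbb{R}^{k_1+k_2}$. We then partition $\Omega$ by which component attains the maximum, or more finely by the ``dominant'' index pattern. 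Using the fact that $k_1+k_2-1$ pairwise log-ratio comparisons determine this pattern, we enumerate candidate cells via a dynamic program over coordinates that tracks the $k_1+k_2-1$ log-ratio differences. This is where the factor $(nq)^{k_1+k_2-1}$ should come from: after discretizing (or ranking by $\omega_i$-choices), each difference takes at most $nq$ relevant values.

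Within a cell where, say, $\^P^{(s^*)}$ is the dominant $\^P$-component and $\^P^{(s^*)}\ge$ every $\^Q$-component up to a constant factor, the positive part is comparable to $\^P^{(s^*)}(\omega)$ restricted to that cell. A DP then samples from $\^P^{(s^*)}$ conditioned on the cell and computes the cell's mass exactly. The proposal $\mu$ mixes these conditioned products with weights proportional to the computed masses, so that $\mu(\omega)\ge \max\{0,\^P-\^Q\}(\omega)/(\text{poly}\cdot d)$ wherever the numerator is nonzero. The extra $q^2$ factor accounts for a single coordinate being responsible for the sign change.

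The main obstacle is the lower bound relating the total dominating mass to $d$: we must show that $\sum_{\text{cells}}\mu$-mass is at most $\mathrm{poly}\cdot d$, i.e., that the cells we keep carry genuine positive excess, and not just regions where $\^P\approx\^Q$ with cancellation. The natural tool is a one-coordinate perturbation argument in the style of the product-distribution case~\cite{FGJW23,FengLL24}. If $\^P(\omega)$ exceeds $\^Q(\omega)$ only marginally, changing one coordinate $\omega_i$ (one of $nq$ choices) should produce a point with a constant-factor excess. This would charge each such cell to a point counted in $d$, losing a factor of $nq$ per dimension of the linear form. I would first try to prove this charging lemma via the fact that the sign function of a linear form over $k_1+k_2$ monotone-in-log coordinates has at most $(nq)^{k_1+k_2-1}$ relevant breakpoints. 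I expect this to be the hardest and most delicate part of the argument.
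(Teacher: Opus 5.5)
The outer shell of your plan matches the paper's \Cref{theorem:general-approach}: an unbiased importance-sampling estimator whose proposal dominates $\max\{0,\mathbb{P}-\mathbb{Q}\}$ up to a polynomial factor, followed by Chebyshev. There is a genuine gap, however, at the step you call key. No tractable proposal $\mu$ is actually constructed, and the concrete ingredients you suggest do not hold.

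\textbf{The DP over cells fails.} A DP that tracks the log-ratio sums $\sum_i\log\bigl(\mathbb{P}^{(s)}_i(\omega_i)/\mathbb{Q}^{(t)}_i(\omega_i)\bigr)$ has up to $q^n$ distinct values per difference, not $nq$. Discretizing makes the cells approximate, which destroys both the exact mass computation and the pointwise domination you need. In fact, exact masses of such cells are weighted $\#$Knapsack quantities. For $k_1=k_2=1$ and the cell $A=\{\mathbb{P}>\mathbb{Q}\}$, they would give $\DTV{\mathbb{P}}{\mathbb{Q}}=\mathbb{P}(A)-\mathbb{Q}(A)$ exactly, which is $\#\mathsf{P}$-hard by the result quoted in the introduction.

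\textbf{The domination step fails.} Dominance of one component within a cell says nothing about the size of $\max\{0,\mathbb{P}-\mathbb{Q}\}$ there, because of cancellation. Your proposed repair, that a one-coordinate change produces a point with constant-factor excess, is false. Take $\mathbb{P}=\mathrm{Bern}(1/2+\eta)^{\otimes n}$ and $\mathbb{Q}=\mathrm{Bern}(1/2)^{\otimes n}$ with $\eta\ll 1/n$: every point has $\mathbb{P}(\omega)/\mathbb{Q}(\omega)=1\pm O(n\eta)$. What is needed is a \emph{global} bound of the normalizing mass of $\mu$ by $\mathrm{poly}(nq)\cdot d$, where $d=\DTV{\mathbb{P}}{\mathbb{Q}}$. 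No breakpoint count for the sign of a linear form supplies that.

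\textbf{The missing idea.} The paper never touches $\{\mathbb{P}>\mathbb{Q}\}$. It takes $\mu$ to be the law of $X$ given $X\neq Y$ under an explicit coupling. Then $f=\max\{0,\mathbb{P}-\mathbb{Q}\}/\mathbf{Pr}[X=\omega\wedge X\neq Y]\in[0,1]$ automatically by \Cref{fact:fact}, and only the failure probability has to be compared with $d$.

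The coupling proceeds coordinate by coordinate:
\begin{itemize}
\item With probability $\ell(j,\bar\alpha,\bar\beta,c)$, the minimum over active components of their marginals at $c$, both sides keep their weights.
\item With the residual probability $\min\{\bar{\mathbb{P}}_j(c),\bar{\mathbb{Q}}_j(c)\}-\ell$, both sides move to reweighted mixtures. Because $\ell$ equals some active component's marginal, each such move deactivates a component.
\end{itemize}
Hence there are at most $k_1+k_2-2$ residual moves and at most $(nq+1)^{k_1+k_2-1}$ states. All three oracles are then dynamic programs on a DAG of size $O(q^2(nq+1)^{k_1+k_2-1})$.

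The comparison with $d$ uses the potential $\delta(j,\bar\alpha,\bar\beta)$: the largest marginal discrepancy over sets of at most $k(\bar\alpha)+k(\bar\beta)-1$ coordinates, which is at most $d$ by data processing. The key inequality is $\bigl(\min\{\bar{\mathbb{P}}_j(c),\bar{\mathbb{Q}}_j(c)\}-\ell\bigr)\,\delta(j+1,\bar\alpha_{j,c},\bar\beta_{j,c})\le 3\,\delta(j,\bar\alpha,\bar\beta)$. It shows that each component elimination costs only a factor $O(nq)$, which gives $\mathbf{Pr}[X\neq Y]\le(4nq)^{k_1+k_2-1}d$.

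\textbf{Running-time accounting.} Your accounting also conflates the variance factor with the per-sample cost. The variance blow-up is only $\gamma^{-1}=(4nq)^{k_1+k_2-1}$. The stated time comes from $\gamma^{-1}\epsilon^{-2}$ samples times the $O(q^2(nq)^{k_1+k_2-1})$ DP cost per sample and evaluation. By contrast, $O(M/\epsilon^2)$ samples with your $M$ would already exceed the stated bound unless each sample cost $O(1)$.
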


Our algorithm is a Monte Carlo algorithm based on a carefully designed estimator, and we provide an overview of it in \Cref{sec:overview}. Since the number of components $k_1+k_2$ appears as an exponent in the running time, we require $k_1+k_2$ to be a constant. A similar dependence on the number of components also arises in the running time of certain learning tasks for mixtures of product distributions~\cite{FeldmanOS08,ChenM19,GordonMRS21}. We leave it as an open question to understand the computational complexity when the number of components grows with $n$.

Our second result concerns mixtures of Boolean subcubes, a special case of mixtures of product distributions. A distribution $\^P = \sum_{s=1}^k \alpha^{(s)} \bigotimes_{i=1}^n \^P_i^{(s)}$ over $\{0,1\}^n$ is called a mixture of Boolean subcubes if $\^P_i^{(s)}(1) \in \{0,\frac{1}{2},1\}$ for every $s \in [k]$ and $i \in [n]$, where $\^P_i^{(s)}$ denotes the marginal distribution of the $i$-th coordinate in the $s$-th component. Equivalently, for each component $s$, the product distribution $\^P^{(s)}$ fixes every coordinate $i$ with $\^P_i^{(s)}(1) = 1$ to $1$, fixes every coordinate $i$ with $\^P_i^{(s)}(1) = 0$ to $0$, and is uniform on the remaining coordinates. Mixtures of Boolean subcubes have been extensively studied in learning theory~\cite{ChenM19,BlancLMT23,BlancLST25} and capture many natural distributions; for instance, the uniform distribution over the satisfying assignments of a decision tree is a mixture of Boolean subcubes~\cite{FeldmanOS08}. 



In contrast to the general case, where we only obtain an FPRAS when $k_1+k_2 = O(1)$, we show that the TV-distance between two mixtures of Boolean subcubes $\^P$ and $\^Q$ can be computed \emph{exactly} in time polynomial in the dimension $n$, provided that $k_1+k_2 = O(\log n)$.

\begin{problem}\label{problem:compute-subcube}
    Compute exactly the total variation distance between two mixtures of Boolean subcubes $\^P = \sum_{s=1}^{k_1} \alpha^{(s)} \bigotimes_{i=1}^n \^P^{(s)}_i$ and $\^Q = \sum_{t=1}^{k_2} \beta^{(t)} \bigotimes_{i=1}^n \^Q^{(t)}_i$.
    \begin{itemize}
        \item \textbf{Input}: two mixtures of Boolean subcubes, given in the same format as in \Cref{problem:approx-tv-distance}.
        \item \textbf{Output}: the \emph{exact} value of $\DTV{\^P}{\^Q}$
    \end{itemize}
\end{problem}

\begin{theorem}\label{thm:tv-subcube}
    There exists a deterministic algorithm for \Cref{problem:compute-subcube}: 
    given any pair of mixtures of Boolean subcubes $\^P,\^Q$ over $\{0,1\}^n$, where $\^P = \sum_{s=1}^{k_1} \alpha^{(s)} \bigotimes_{i=1}^n \^P^{(s)}_i$ has $k_1$ components and $\^Q = \sum_{t=1}^{k_2} \beta^{(t)} \bigotimes_{i=1}^n \^Q^{(t)}_i$ has $k_2$ components, it solves the problem in $O((k_1+k_2)3^{k_1+k_2} \cdot n)$ time.
\end{theorem}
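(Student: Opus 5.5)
We compute $\DTV{\^P}{\^Q}=\sum_{\omega}\max\{0,\^P(\omega)-\^Q(\omega)\}$ by grouping points $\omega\in\{0,1\}^n$ into classes on which every component density takes a fixed value, and then summing over classes. Each component $s$ of $\^P$ is a subcube $C_s$: it fixes the coordinates in a set $F_s$ to prescribed bits and leaves $m_s=n-|F_s|$ coordinates free. Its density is therefore $\^P^{(s)}(\omega)=2^{-m_s}\mathbbm{1}[\omega\in C_s]$, and the same holds for each $\^Q^{(t)}$ with subcube $D_t$ and $m'_t$ free coordinates. Write $k=k_1+k_2$ and list all $k$ subcubes together. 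For $\omega$, let $\sigma(\omega)\subseteq[k]$ be the set of subcubes containing $\omega$. Then
\begin{align*}
\^P(\omega)-\^Q(\omega)=\sum_{s\in\sigma(\omega)\cap[k_1]}\alpha^{(s)}2^{-m_s}-\sum_{t:\,k_1+t\in\sigma(\omega)}\beta^{(t)}2^{-m'_t}=:g(\sigma(\omega)).
\end{align*}
This depends only on $\sigma(\omega)$, so
\begin{align*}
\DTV{\^P}{\^Q}=\sum_{\sigma\subseteq[k]}N(\sigma)\max\{0,g(\sigma)\},
\end{align*}
where $N(\sigma)=|\{\omega:\sigma(\omega)=\sigma\}|$. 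The remaining task is to compute all $N(\sigma)$ exactly.

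To get the $3^k$ dependence, we compute $N(\sigma)$ coordinate by coordinate. Each coordinate $i$ assigns every subcube $j\in[k]$ one of three labels: fixed to $0$, fixed to $1$, or free. Choosing $\omega_i=b$ keeps $\omega$ inside exactly the subcubes whose label at $i$ is not "fixed to $1-b$". So $\sigma(\omega)=\bigcap_i A_i(\omega_i)$, where $A_i(b)\subseteq[k]$ is the set of subcubes compatible with bit $b$ at coordinate $i$. We run a dynamic program over coordinates whose state is a vector $c\in\^N^{2^{[k]}}$: $c(S)$ counts prefixes $\omega_{1..i}$ whose surviving set is $S$. The update is $c'(S\cap A_i(b))\mathrel{+}=c(S)$ for $b\in\{0,1\}$. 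This costs $O(2^k)$ set operations per coordinate, each taking $O(k)$ time with bitmasks (or $O(1)$ word operations), for $O(k2^k n)$ total. That is within the claimed $O(k3^k n)$ bound.

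An alternative that matches the $3^k$ form is inclusion–exclusion. For each $T$, the count $M(T)=|\{\omega:\omega\in D_j\ \forall j\in T\}|$ equals $2^{\#\text{coords where all } j\in T \text{ are free}}$ if the fixed values in $T$ are consistent, and $0$ otherwise. All $M(T)$ can be found in $O(k2^kn)$ time. Then $N(\sigma)=\sum_{T\supseteq\sigma}(-1)^{|T\setminus\sigma|}M(T)$, which takes $3^k$ terms in total. Both routes work; we will present the DP and mention the Möbius inversion.

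The main obstacle is exactness. The numbers $N(\sigma)\le 2^n$ have $n$ bits, and the weights $2^{-m_s}$ are dyadic rationals, so "exact" requires exact rational arithmetic. We will argue the following. Each $g(\sigma)$ is a sum of at most $k$ input rationals times powers of two, so its sign can be determined exactly. The final sum has at most $2^k$ terms. Counting arithmetic on $O(n)$-bit integers as unit cost (or noting the bit-complexity overhead is polynomial) gives the stated running time. We also need to verify the identity $\^P^{(s)}(\omega)=2^{-m_s}\mathbbm{1}[\omega\in C_s]$ directly from the definition $\^P_i^{(s)}(1)\in\{0,\frac12,1\}$. That check is immediate.
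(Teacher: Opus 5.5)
Your proposal is correct, and your primary route differs from the paper's. The paper computes each $N_\chi$ separately by inclusion--exclusion, $N_\chi=\sum_{S\subseteq S_0}(-1)^{|S|}\,|\Phi(S_1\cup S)|$. It recomputes every $|\Phi(S_1\cup S)|$ from scratch; each is $0$ or a power of two, found by a consistency check in $O(n(k_1+k_2))$ time. Summing $\binom{K}{m}2^m$ over $m$ is exactly where the $3^{K}$ in the stated bound comes from.

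Your coordinate-by-coordinate dynamic program runs over the $2^k$ possible ``surviving sets'', initialized with $c([k])=1$ (you should state this). It produces the whole histogram $(N(\sigma))_{\sigma}$ in a single pass using only nonnegative additions. This takes $O(k2^kn)$ time under the same unit-cost arithmetic model the paper implicitly uses, so it proves the theorem with room to spare. The price is a table of $2^k$ integers of $n$ bits each. The paper, by contrast, handles one $\chi$ at a time with essentially no working memory beyond the input.

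Your M\"obius-inversion alternative is literally the paper's formula (put $T=S_1\cup S$). Memoizing $M(T)$ across different $\sigma$ already improves it to $O(k2^kn+3^k)$. A fast superset M\"obius transform would bring that route down to $O(k2^kn)$ as well.

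Summing $N(\sigma)\max\{0,g(\sigma)\}$ instead of the paper's $\frac12 N_\chi|g(\chi)|$ makes no difference, since $\sum_\omega(\mathbb{P}(\omega)-\mathbb{Q}(\omega))=0$.
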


The running time in \Cref{thm:tv-subcube} is polynomial in $n$ whenever $k_1+k_2 = O(\log n)$, yielding a deterministic polynomial-time algorithm in this regime.
This does not contradict the $\#\mathsf{P}$-hardness of exact TV-distance computation established in~\cite{BGMMPV23}: their hardness already holds for two product distributions with arbitrary marginals, whereas our algorithm crucially exploits the restriction that each marginal of a Boolean subcube takes value in $\{0,\frac{1}{2},1\}$.

The idea of the algorithm is as follows.
For each component $\^P^{(s)}$ (and similarly for each $\^Q^{(t)}$), the value of $\^P^{(s)}(\omega)$ is either 0 or $2^{-r}$, where $r$ is the number of coordinates $i \in [n]$ with $\^P_i^{(s)}(1) = \frac{1}{2}$. Indeed, if some coordinate is fixed to be 1 (or 0) but $\omega_i=0$ (or 1), then $\^P^{(s)}(\omega)=0$; otherwise, each free coordinate contributes a factor of $1/2$, so $\^P^{(s)}(\omega)=2^{-r}$.
Thus, although there are $2^n$ configurations $\omega \in \{0,1\}^n$, the possible values of $|\^P(\omega)-\^Q(\omega)|$ are at most $2^{k_1+k_2}$.
Our algorithm enumerates these possible values of $|\^P(\omega)-\^Q(\omega)|$  and counts the number of configurations $\omega \in \{0,1\}^n$ that achieve each one.
The details are given in~\Cref{sec:algorithm-subcube}.


Finally, we complement this positive result by showing that the dependence on the number of components cannot be removed entirely: when the number of components grows linearly with $n$, exact computation becomes $\#\mathsf{P}$-hard.

\begin{theorem}\label{thm:hardness-subcube}
    If the number of mixture components is $k_1 + k_2 = \Theta(n)$, then the exact computation of the total variation distance in \Cref{problem:compute-subcube}
  is $\#\mathsf{P}$-hard.
\end{theorem}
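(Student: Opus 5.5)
We will reduce from a $\#\mathsf{P}$-complete counting problem: counting satisfying assignments of a monotone 2-CNF or, equivalently, counting independent sets in a graph ($\#\textsc{IS}$), or more simply counting satisfying assignments of a DNF formula. The most natural fit is \#DNF. A DNF term $C_j$, a conjunction of literals over $x_1,\dots,x_n$, corresponds exactly to a Boolean subcube: it fixes the coordinates appearing in $C_j$ and leaves the others free. So a formula $\varphi=C_1\vee\dots\vee C_m$ with $m=\Theta(n)$ terms yields $m$ subcubes $\^P^{(1)},\dots,\^P^{(m)}$, where $\^P^{(j)}$ is uniform on the satisfying set of $C_j$, of size $2^{n-|C_j|}$. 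We may assume $m=\Theta(n)$, since \#DNF is still $\#\mathsf{P}$-hard in this regime: pad with dummy variables, or add duplicate or unsatisfiable-free terms.

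The key identity we aim for is the following. Take $\^Q$ to be the uniform distribution on $\{0,1\}^n$, which is a single subcube with all marginals $\frac12$, so $k_2=1$. Take $\^P=\sum_j \alpha^{(j)}\^P^{(j)}$. Then
\[
\DTV{\^P}{\^Q}=\sum_\omega \max\{0,\^Q(\omega)-\^P(\omega)\}\ \ge\ 2^{-n}\,\bigl|\{\omega:\varphi(\omega)=0\}\bigr|,
\]
with equality when $\^P(\omega)\ge 2^{-n}$ on every satisfying $\omega$. This equality need not hold for general weights. Our first idea is therefore to choose weights $\alpha^{(j)}$ so that $\^P(\omega)\ge 2^{-n}$ whenever $\omega$ satisfies $\varphi$. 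However, $\sum_\omega \^P=1$ forces the mass to be distributed, and we cannot guarantee this without knowing the count.

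The cleaner route uses the other sign. Write $\DTV{\^P}{\^Q}=\sum_\omega\max\{0,\^P(\omega)-\^Q(\omega)\}$ and choose $\^Q$ as a scaled uniform component mixed with something, so that the comparison becomes threshold-free.

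Concretely, the alternative we commit to is to compute the $L_1$ distance via $\^P$ uniform on subcube unions in a way where all values are known. We use single-term comparisons: the quantity $\sum_\omega|\^P(\omega)-\^Q(\omega)|$ with $\^P=\^U$ uniform and $\^Q=\frac12\^U+\frac12\^P^{(j)}$-type mixtures is piecewise linear in the weights. Querying the TV oracle at $\^P=\^U$ and $\^Q=\lambda\^U+(1-\lambda)\sum_j \frac1m\^P^{(j)}$ gives
\[
\DTV{\^U}{\^Q}=(1-\lambda)\sum_\omega \max\{0,2^{-n}-R(\omega)\},\qquad R(\omega)=\tfrac1m\sum_j \^P^{(j)}(\omega).
\]
This equals $(1-\lambda)2^{-n}\#\{\omega:\varphi(\omega)=0\}$ plus a correction coming from satisfying $\omega$ with $0<R(\omega)<2^{-n}$.

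To kill the correction, make every term huge. Add $t=\Theta(n)$ fresh variables that no term mentions, so each subcube still has $R(\omega)\ge 2^{-(n+t)}\cdot 2^{t}/m$ on satisfying points. Alternatively, and more simply, replace the uniform $\^U$ by a mixture whose density on every point is at most $\min_\omega\{R(\omega):R(\omega)>0\}$. Since $\^P^{(j)}(\omega)\ge 2^{-n}$ on its support, $R(\omega)\ge 2^{-n}/m$ whenever $\varphi(\omega)=1$. Thus we compare $\^P=\^U$ against $\^Q=\frac1m\^U\cdot$ (padded): formally, compute $\sum_\omega\max\{0,\tfrac1m2^{-n}-R(\omega)\}$ by scaling. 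This is linear in the TV oracle after taking $\^Q'=\gamma\^U+(1-\gamma)\sum_j\frac1m\^P^{(j)}$ and $\^P'=\^U$ with $\gamma$ chosen so that $(1-\gamma)\cdot\frac{1}{m}2^{-n}\ge(1-\gamma)2^{-n}\cdot$ threshold. Then every satisfying $\omega$ has $\^Q'(\omega)\ge\^P'(\omega)$ exactly when $(1-\gamma)2^{-n}/m\ge(1-\gamma)2^{-n}$.

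The main obstacle is precisely this scaling. The clean fix is to pad: add $\ell=\lceil\log_2 m\rceil$ dummy coordinates fixed to $0$ in $\^P'$ but free in the $\^P^{(j)}$'s. Then $\^P'$ is uniform on a set with density $2^{-n}$, while each $\^P^{(j)}$ has density at least $2^{-n}$ on its support intersected with that slice, and contributes $2^{\ell}/m\ge 1$ after reweighting. We would verify that with weights $\alpha^{(j)}=1/m$ and padding, $\^Q(\omega)\ge \^P(\omega)$ on every satisfying point of the slice. Hence
\[
\DTV{\^P}{\^Q}=2^{-n}\#\{\omega:\varphi(\omega)=0\},
\]
which recovers $\#\varphi$ with $k_1+k_2=m+1=\Theta(n)$ and all parameters polynomial-size.
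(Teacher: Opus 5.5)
There is a genuine gap: the construction you finally commit to does not give the claimed formula, so the reduction is not established. Your choice of \#DNF as the source problem is fine, and so is your intermediate identity $\DTV{U}{\lambda U+(1-\lambda)R}=(1-\lambda)\sum_\omega\max\{0,2^{-n}-R(\omega)\}$. What you need is a comparison threshold strictly between $0$ and the smallest positive value of $R$, and that value is only guaranteed to be $2^{-n}/m$. Mixing $U$ into $\mathbb{Q}$ cannot lower the threshold; your own condition $(1-\gamma)2^{-n}/m\ge(1-\gamma)2^{-n}$ fails for $m\ge 2$.

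Your padding fix goes in the wrong direction. If the $\ell$ dummies are fixed to $0$ in $\mathbb{P}'$ and free in each $\mathbb{P}^{(j)}$, then $\mathbb{P}'$ has density $2^{-n}$ on the slice, while $\mathbb{P}^{(j)}$ has density $2^{-(n+\ell-|C_j|)}$ on its support. Freeing coordinates divides the density by $2^{\ell}$ rather than multiplying it, so the gap grows from a factor $m$ to about $m2^{\ell}\approx m^2$. For example, take $\varphi=x_1\vee x_2$ ($n=m=2$, $\ell=1$). The satisfying slice points $(1,0,0)$ and $(0,1,0)$ have $R=1/8<1/4=\mathbb{P}'$. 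So the pointwise inequality fails for $\mathbb{Q}=\gamma\mathbb{P}'+(1-\gamma)R$ with any $\gamma<1$. For $\mathbb{Q}=R$ you get $\DTV{\mathbb{P}'}{\mathbb{Q}}=1/2$, not $2^{-n}\#\{\omega:\varphi(\omega)=0\}=1/4$. Your final identity also drops the factor $(1-\gamma)$, and the key inequality is only asserted (``we would verify''), never checked.

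The missing idea is a distribution that puts only small density on the region where the subcubes live and places the rest of its mass where the other side vanishes. Reversing your padding does this. Fix the dummies to $0$ in every $\mathbb{P}^{(j)}$, let them be free in the uniform distribution $U'$ on $\{0,1\}^{n+\ell}$, and take $2^{\ell}\ge m$. Then $U'\equiv 2^{-(n+\ell)}\le 2^{-n}/m\le R(\omega)$ at every satisfying point of the slice, and $R=0$ elsewhere. This gives $\DTV{U'}{R}=1-2^{-\ell}+2^{-(n+\ell)}\#\{\omega:\varphi(\omega)=0\}$, which is affine in the count.

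The paper does the same thing with a single extra bit $b$. It reduces from \#3SAT and takes $\mathbb{P}$ to be the uniform mixture of the falsifying subcubes of the clauses, all with $b=0$. It sets $\mathbb{Q}=\lambda V_0+(1-\lambda)V_1$ with $\lambda=1/(2m)$, so the threshold $\lambda 2^{-N}$ on the $b=0$ slice lies below the minimum positive density $8\cdot 2^{-N}/m$ of $\mathbb{P}$. Because the paper sums $\max\{0,\mathbb{P}-\mathbb{Q}\}$ over points above the threshold, it also needs the double count $\sum_\omega F(\omega)=m2^{N-3}$. Your sign convention, summing where $R=0$, avoids that step.
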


The theorem is proved by a reduction showing that if we can exactly compute the TV-distance between two mixtures of Boolean subcubes with $k_1+k_2=\Theta(n)$ in polynomial time, then we can also exactly compute $\#\mathsf{3SAT}$ in polynomial time.
The detailed proof is in \Cref{sec:proof-hardness-subcube}.

\paragraph{Open problems}
As discussed after \Cref{thm:tv-mix-prod}, the dependence on the number of components in the running time is not yet well understood. \cite{0001GMMPV25} gave an algorithm that decides whether $\DTV{\^P}{\^Q}=0$ in time $\mathrm{poly}(n,k_1+k_2)$, whereas the running time of our approximation algorithm depends exponentially on $k_1+k_2$. The analogous regime $k_1+k_2 = \omega(1)$ is also not well understood for learning tasks on mixtures of product distributions, where obtaining algorithms with polynomial dependence on the number of components remains a central challenge~\cite{FeldmanOS08,ChenM19,GordonMRS21}. A natural question is whether one can approximate $\DTV{\^P}{\^Q}$ in time polynomial in $k_1+k_2$, or prove a hardness result ruling this out when the number of components grows with $n$. As a first step toward the latter, \Cref{thm:hardness-subcube} already shows that exact computation is $\#\mathsf{P}$-hard when $k_1+k_2 = \Theta(n)$, and one direction is to extend its proof to rule out approximation as well.


One can ask analogous TV-distance approximation questions for other structured mixture models. Natural candidates include mixtures of Gaussians~\cite{AshtianiBHLPM18,LiS17}, mixtures of Markov chains and MDPs~\cite{KausikTT23}. These models also give succinct descriptions of high-dimensional distributions and are widely studied in machine learning, but their dependence structures are more complex than that of mixtures of product distributions. More generally, it would be interesting to understand the complexity of computing or approximating the TV-distance between trajectory distributions induced by reward-mixing Markov decision processes (RMMDPs). RMMDPs use latent mixture structure to model the reward context, and TV-distance bounds between trajectory distributions have been used to learn near-optimal policies~\cite{KwonECM23}. The problem of computing the TV-distance between RMMDPs appears to require new ideas.

At last, our algorithm for mixtures of product distributions is randomized. It remains open to design a deterministic algorithm for approximating the TV-distance between mixtures of product distributions. Currently, deterministic algorithms are only known for approximating the TV-distance between product distributions~\cite{FengLL24,BGMMPV23}.

\section{Technical Overview}\label{sec:overview}
We present the main ideas behind the algorithm in \Cref{thm:tv-mix-prod}.
An obstacle for approximating the TV-distance is that the TV-distance itself can be very small. 
To see the reason,
for two distributions $\^P$ and $\^Q$ over the sample space $\Omega$, where $\^Q$ is absolutely continuous w.r.t.\,$\^P$, one can write
\begin{align*}
    \DTV{\^P}{\^Q}=\sum_{\omega \in \Omega}\^P(\omega)\max\left\{0,1-\frac{\^Q(\omega)}{\^P(\omega)}\right\}=\E[\omega\sim \^P]{g(\omega)},
\end{align*}
where $g(\omega):=\max\left\{0,1-\frac{\^Q(\omega)}{\^P(\omega)}\right\}$. We can efficiently sample $\omega$ and compute $g(\omega)$, so that a direct Monte Carlo estimate of $\DTV{\^P}{\^Q} = \E[\omega\sim \^P]{g(\omega)}$ can only achieve \emph{additive} error in polynomial time. However, since the expectation $\E[\omega\sim \^P]{g(\omega)}$ itself can be exponentially small, this simple approach is not sufficient to achieve a \emph{relative} error approximation required by \Cref{problem:approx-tv-distance}.

\subsection{Approximating the TV-distance via a coarse coupling}\label{subsection:dtv-from-close-coupling}
A framework for approximating the TV-distance was proposed in \cite{FGJW23} for two product distributions (i.e., $k_1=k_2=1$ in our setting). It can be generalized to any two distributions $\^P$ and $\^Q$.
Instead of approximating $\DTV{\^P}{\^Q}$ directly, they estimated the normalized quantity $\DTV{\^P}{\^Q} / \tilde{d}$, where $\tilde{d}$ is a polynomial-time computable \emph{coarse} estimator. Moreover, if the ratio $\DTV{\^P}{\^Q} / \tilde{d}$ is lower bounded by $1/\mathrm{poly}(n)$, one can try to achieve a relative-error approximation to the 
ratio in polynomial time. 
Hence, we can recover $\DTV{\^P}{\^Q}$ by multiplying $\tilde{d}$ by the estimate of the ratio.


The coarse estimator $\tilde{d}$ can be constructed via a \emph{coupling}.
For two distributions $\^P$ and $\^Q$ over $\Omega$, a coupling is a pair of joint random variables $(X,Y)$ such that $X\sim \^P$ and $Y\sim \^Q$.
It is well known that $\DTV{\^P}{\^Q} =  \Pr[\+O]{X'\neq Y'}$, where $\+O=(X',Y')$ is the \emph{optimal} coupling between $\^P$ and $\^Q$. Suppose we can construct a coupling $\+C=(X,Y)$ that is not far from the optimal coupling $\+O$. Formally, $\Pr[\+O]{X'\neq Y'}=\gamma \cdot \Pr[\+C]{X\neq Y}$ for some $\gamma=1/\text{poly}(n)$. 
Then $\tilde{d} = \Pr[\+C]{X\neq Y}$ is a coarse estimator of $\DTV{\^P}{\^Q}$. 
To estimate the ratio $\DTV{\^P}{\^Q} / \tilde{d}$, define a distribution $\pi$:
\begin{align}\label{eq:pi-definition-overview}
\omega \in \Omega, \quad  \pi(\omega)=\Pr[\+C]{X=\omega\mid X\neq Y}
\end{align} 
and define function $f$ such that
$$ \omega \in \Omega, \quad  f(\omega) = 
    \frac{\max\{0, \^P(\omega) - \^Q(\omega)\}}{\Pr[\+C]{X = \omega \land X\neq Y}}.
$$
We can verify that $\E[\pi]{f} = \frac{\DTV{\^P}{\^Q}}{\Pr[\+C]{X\neq Y}} = \frac{\DTV{\^P}{\^Q}}{\tilde{d}} =\gamma \geq \frac{1}{\mathrm{poly}(n)}$. Using the approach described above, if we can perform the following operations in polynomial time
\begin{enumerate}[label=(\alph*)]
    \item sample $\omega \sim \pi$;\label{item:sample-omega}
    \item compute $f(\omega)$ for any $\omega$; \label{item:compute-f-omega}
    \item compute $\Pr[\+C]{X\neq Y}$;\label{item:compute-pr-c}
\end{enumerate}
 then we can get a relative-error approximation of $\DTV{\^P}{\^Q}$. Specifically, we draw $T = \mathrm{poly}(\frac{1}{\gamma}) = \mathrm{poly}(n)$ independent samples $\omega_1,\ldots,\omega_T\sim \pi$ and compute $\bar{f}=\frac{1}{T}\sum_{i=1}^{T}f(\omega_i)$. By Chebyshev's inequality, with high probability, $\bar{f}\cdot \Pr[\+C]{X\neq Y}$ is an estimate of $\DTV{\^P}{\^Q}$ within relative error.

\subsection{Recursive coupling for mixtures of product distributions}\label{subsection:recursive-coupling}
It remains to construct a coupling that supports the three operations above. For product distributions, \cite{FGJW23} used a coordinate-wise greedy coupling $\+C=(X,Y)$, which couples each coordinate optimally and independently, and showed that $1/n \leq \E[\pi]{f} \leq 1$.
This approach does not directly extend to mixtures of product distributions, where all the coordinates are \emph{correlated}, so one cannot optimally couple all coordinates simultaneously.

Our main technical contribution in this part is a new \emph{explicit recursive coupling} for mixtures of product distributions, together with an algorithmic implementation of the three operations above. At a high level, the recursive structure draws inspiration from~\cite{KwonECM23}, who introduced a recursive argument to bound the TV-distance between two mixtures of product distributions. Their analysis, however, is non-constructive: it yields a \emph{fixed-error} bound on the TV-distance, but does not by itself provide the algorithmic interface required here. To obtain a relative-error approximation with \emph{arbitrary} $\epsilon$, we need a coupling whose failure probability can be computed, and whose failure-conditioned distribution can be sampled and evaluated. We design such a coupling tailored to mixtures of product distributions, adapt the recursive analysis of \cite{KwonECM23} to this explicit construction, and show that the three operations in \Cref{item:sample-omega}, \Cref{item:compute-f-omega}, and \Cref{item:compute-pr-c} can all be implemented in polynomial time.

We construct a coupling between two mixtures of product distributions, $\^P = \sum_{s=1}^{k_1} \alpha^{(s)} \bigotimes_{i=1}^n \^P^{(s)}_i$ and $\^Q = \sum_{t=1}^{k_2} \beta^{(t)} \bigotimes_{i=1}^n \^Q^{(t)}_i$.
Although coupling the first coordinate would bias the marginal distribution of the remaining coordinates, we can still extract the common parts of $\^P^{(s)}_1$ and $\^Q^{(t)}_1$, and reduce the problem to coupling \emph{another} mixture of product distributions on the remaining coordinates.
For each $c\in [q]$, we first compute a lower bound for each active mixture component:
\begin{align}\label{eq:ell-definition}
    \ell\left(c\right) \triangleq \min\left\{ \min_{s\in [k_1]: {\alpha}^{(s)}>0} \^P^{(s)}_1(c),  \min_{t\in [k_2]: \beta^{(t)}>0} \^Q^{(t)}_1(c)\right\},
\end{align}
where $\^P^{(s)}_1$ (resp. $\^Q^{(t)}_1$) is the marginal distribution of the first coordinate of the $s$-th component of $\^P^{(s)}$ (resp. $\^Q^{(t)}$).
For $\omega\in [q]^n$ with $\omega_1=c$, we can write the probability of $\omega$ as
\begin{align*}
    \^P(\omega) &= \ell\left(c\right) \sum_{s=1}^{k_1}  \alpha^{(s)} \prod_{i=2}^n \^P^{(s)}_i(\omega_i) + {\sum_{s=1}^{k_1} \alpha^{(s)} \left(\^P^{(s)}_1(c) - \ell\left(c\right)\right) \prod_{i=2}^n \^P^{(s)}_i(\omega_i)}\\
 \tp{\alpha_c^{(s)} \defeq \frac{\alpha^{(s)} \left(\^P^{(s)}_1(c) - \ell\left(c\right)\right)}{\left(\^P_1(c)-\ell(c)\right)}} \quad   &=\ell\left(c\right) \underbrace{\sum_{s=1}^{k_1}  \alpha^{(s)} \prod_{i=2}^n \^P^{(s)}_i(\omega_i)}_{A_{\^P}} + \left(\^P_1(c)-\ell(c)\right) \underbrace{\sum_{s=1}^{k_1} \alpha_c^{(s)}  \prod_{i=2}^n \^P^{(s)}_i(\omega_i)}_{B_{\^P,c}},
\end{align*}
where $\alpha_c^{(s)}$ is a valid distribution over $s \in [k_1]$. Note that $\^P_1$ in the denominator is the marginal distribution of the mixture $\^P$ over the first coordinate.
Similarly, let $\beta_c^{(t)} \defeq \frac{\beta^{(t)} \left(\^Q^{(t)}_1(c) - \ell\left(c\right)\right)}{\left(\^Q_1(c)-\ell(c)\right)}$ and
\begin{align*}
    \^Q(\omega) = \ell\left(c\right) \underbrace{\sum_{t=1}^{k_2}  \beta^{(t)} \prod_{i=2}^n \^Q^{(t)}_i(\omega_i)}_{A_{\^Q}} + \left(\^Q_1(c)-\ell(c)\right) \underbrace{\sum_{t=1}^{k_2}  \beta_c^{(t)}  \prod_{i=2}^n \^Q^{(t)}_i(\omega_i)}_{B_{\^Q,c}}.
\end{align*}
Each of $A_{\^P}, A_{\^Q}$, $B_{\^P,c}$, and $B_{\^Q,c}$ is a mixture of product distributions over the remaining coordinates.
The component distributions in $A_{\^P}$ and $B_{\^P,c}$ are obtained from the components of $\^P$ by removing the first coordinate; the same relation holds between $A_{\^Q}, B_{\^Q,c}$ and $\^Q$.
However, the mixing weights may be different from those in $\^P$ and $\^Q$.

We use the above decomposition to construct a coupling $(X,Y)$ between $\^P$ and $\^Q$.
We first couple the first coordinate of $X$ and $Y$. For each $c\in [q]$, with probability $\ell(c)$, we set $X_1=Y_1=c$ and then couple $A_{\^P}$ and $A_{\^Q}$ \emph{recursively} for the remaining coordinates. 
For each $c \in [q]$,
with probability $\min\{\^P_1(c)-\ell(c), \^Q_1(c)-\ell(c)\}$, we set $X_1=Y_1=c$ and couple $B_{\^P,c}$ and $B_{\^Q,c}$ \emph{recursively} for the remaining coordinates. 
For the remaining probability mass\footnote{Specifically, the remaining probability for $X_1 = c$ is $\max\{0, \^P_1(c)-\^Q_1(c)\}$ and the remaining probability for $Y_1 = c$ is $\max\{0, \^Q_1(c)-\^P_1(c)\}$.} of the first coordinate, there is no common part with $X_1=Y_1=c$ for any $c\in [q]$. We arbitrarily couple this remaining mass; it must hold that $X_1\neq Y_1$, and hence $X\neq Y$. We call this case the \emph{coupling failure} case.
Suppose that after coupling the first coordinate, we have $X_1 = c \neq c' = Y_1$ for some $c,c'\in [q]$. We then independently sample the remaining coordinates according to $B_{\^P,c}$ and $B_{\^Q,c'}$, respectively.

The above recursive process terminates either when coupling fails at some coordinate or when it reaches the last coordinate. Denote the resulting recursive coupling by $\+C_{\-{RC}}$. Adapting the recursive TV-distance analysis of~\cite{KwonECM23} to this explicit coupling, we show that $\+C_{\-{RC}}$ is not far from the optimal coupling:
\begin{align*}
\Pr[\+C_{\-{RC}}]{X\neq Y} \leq (4nq)^{k_1+k_2-1} \cdot \DTV{\^P}{\^Q}.
\end{align*}
Therefore, we set $\gamma=(4nq)^{-k_1-k_2+1}$, which is inverse-polynomial in $n$ and $q$ when $k_1$ and $k_2$ are constants. 

\subsection{Implementation of three operations}
We now give polynomial-time algorithms for the three operations in \Cref{item:sample-omega}, \Cref{item:compute-f-omega}, and \Cref{item:compute-pr-c}. 
We create a state space $\+S$ to keep track of the recursive coupling process. A state $S\in \+S$ is a tuple $S=(i,\bar \alpha,\bar \beta)$, where $i\in [n]$ means we need to couple the remaining coordinates $i,i+1,\ldots, n$, and $\bar \alpha = (\bar \alpha^{(s)})_{s\in [k_1]}$ and $\bar \beta = (\bar \beta^{(t)})_{t\in [k_2]}$ are the \emph{current} mixing weights for each mixture component.
\Cref{subsection:recursive-coupling} describes the recursive coupling starting from the root state $(1,\alpha,\beta)$, and hence defines the quantities $A_{\^P}, A_{\^Q}, B_{\^P,c}, B_{\^Q,c}$ and $\ell(c)$ in that setting.
For the implementation, we use the same decomposition at an arbitrary state $S=(i,\bar \alpha,\bar \beta)$, replacing the first coordinate by the current coordinate $i$ and the original mixing weights by the current weights $\bar \alpha,\bar \beta$.
Thus, 
$$\ell(c)=\min \left\{\min_{s \in [k_1]: \bar \alpha^{(s)}>0} \^P_i^{(s)}(c),\min_{t \in [k_2]: \bar \beta^{(t)}>0} \^Q_i^{(t)}(c)\right\}.$$
Similarly, the updated weights $\bar \alpha_c=(\bar \alpha_c^{(s)})_{s\in [k_1]}$ in the $B_{\^P,c}$ branch are defined by
$$
\forall s \in [k_1],
\bar \alpha_c^{(s)}=\bar \alpha^{(s)} \cdot \frac{\^P_i^{(s)}(c)-\ell(c)}{\bar{\^P}_i(c)-\ell(c)} \text{ where } \bar{\^P}_i(c)=\sum_{s\in [k_1]}\bar \alpha^{(s)}\^P_i^{(s)}(c).$$
The $\bar \beta_c$ for the $B_{\^Q,c}$ is analogous.
There is also a failure state $\perp$. If we couple $X_i\neq Y_i$ (coupling failure case), the state $S$ transitions to the failure state $\perp$. 
If we couple $X_i=Y_i$ and then recursively couple $A_{\^P}$ and $A_{\^Q}$ for the remaining coordinates (we call it the good case A), the state $S$ transitions to $S'=(i+1,\bar \alpha, \bar \beta)$. If we couple $X_i=Y_i$ and then recursively couple $B_{\^P,c}$ and $B_{\^Q,c}$ for the remaining coordinates (we call it the good case B), the state $S$ transitions to $S'_c=(i+1,\bar \alpha_{c}, \bar \beta_{c})$.


The whole recursive coupling process is a random walk on the state space $\+S$ starting from the root state $S_{\mathrm{root}}=(1,\alpha, \beta)$. Every non-failure state $S = (i,\bar \alpha,\bar \beta) \neq \perp$ can transition to a state in $\{\perp\} \cup \{S'\} \cup \{S'_c\}_{c\in [q]}$, and it transitions to some $S'_c$ if and only if the good case B occurs.
The key observation is that case $B$ can occur at most $k_1+k_2-2$ times in the whole recursive coupling process.
By the definition of $\ell(c)$, $\ell(c)$ is equal to some $\^P_i^{(s)}(c)$ or $\^Q_i^{(t)}(c)$. Suppose $\ell(c) = \^P_i^{(s)}(c)$. Then, the new mixing weights in $S'_c$ satisfy
$\bar \alpha_{c}^{(s)} = \bar \alpha^{(s)} \cdot \frac{\^P_i^{(s)}(c) - \ell(c)}{\^P_i(c) - \ell(c)} = 0$.
Thus, whenever case $B$ occurs, the total number of positive mixing weights decreases by at least one. Since the root state has $k_1+k_2$ positive mixing weights, case $B$ can occur at most $k_1+k_2-2$ times.
Using this property, we can show that the total number of states is $|\+S| = (nq)^{O(k_1+k_2)}$, which is polynomial in $n$ and $q$.

Given a state $S$, it is easy to compute the transition probabilities to the next states. 
The whole transition graph forms a DAG with polynomial size.
To sample $\omega \sim \pi$ in~\eqref{eq:pi-definition-overview}, it suffices to sample a trajectory starting from the root state, conditioned on the random walk reaching the failure state $\perp$. 
We first run a dynamic programming algorithm to compute the probability of reaching $\perp$ from each state in $\+S$. Then, we use these probabilities to sample the desired conditional trajectory.
Similarly, using dynamic programming, we can compute $f(\omega)$ and $\Pr[\+C_{\-{RC}}]{X\neq Y}$ in polynomial time.



\begin{remark}[Comparison with the coupling in~\cite{BGMMPV24ICML}]
Finally, it is worth comparing our coupling with the coupling used in~\cite{BGMMPV24ICML}. They proposed a simpler coupling and applied it to approximating the TV-distance between two Bayesian networks.
Given two $n$-dimensional distributions $\^P$ and $\^Q$ over $[q]^n$, they couple the coordinates one by one using the optimal coupling of the conditional marginals. Suppose the first $i-1$ coordinates of $X$ and $Y$ have been generated and coupled consistently such that $X_{1:i-1}=Y_{1:i-1}= \sigma \in [q]^{i-1}$. For the $i$-th coordinates $X_i$ and $Y_i$, they use the optimal coupling of the conditional marginals $\^P_i$ and $\^Q_i$ given $\sigma$. The coupling terminates if $X_i \neq Y_i$ at some coordinate or when it reaches the last coordinate.

Their coupling is \emph{different} from ours. 
Unlike Bayesian networks, for general mixtures of product distributions, the conditional marginals at the $i$-th coordinate require the \emph{full information} of the prefix condition $\sigma \in [q]^{i-1}$. Since there are $O(q^n)$ possible prefix conditions $\sigma$, this coupling needs to consider exponentially many possible distributions for our problem.
In our coupling, even when the good event $X_i = Y_i$ occurs, we further split it into two sub-cases, $A$ and $B$. This splitting allows us to exploit the structure of mixtures of product distributions, so the total number of possible distributions appearing in our coupling is bounded by a polynomial.
\end{remark}

\section{A General Approach for Approximating the TV-Distance}\label{section:general-approximate-dtv}

When $k_1 = k_2 = 1$, the problem reduces to approximating the TV-distance between two product distributions $P$ and $Q$ over $\Omega = [q]^n$. This special case was studied in~\cite{FGJW23}. Although the algorithm there is tailored to product distributions, its underlying idea extends to a more general framework for approximating TV-distances for general discrete distributions.

Let $\^P$ and $\^Q$ be arbitrary discrete distributions over $\Omega$. A \emph{coupling} of $\^P$ and $\^Q$ is a pair of jointly distributed random variables $(X, Y)$ such that $X \sim \^P$ and $Y \sim \^Q$. For every coupling $\+C$ of $\^P$ and $\^Q$, the coupling inequality states that
\begin{align*}
    \DTV{\^P}{\^Q}  \leq \Pr[\+C]{X \neq Y}.
\end{align*}
A coupling $\+C$ is called \emph{optimal} if equality holds, namely if $\Pr[\+C]{X \neq Y} = \DTV{\^P}{\^Q}$. An optimal coupling exists for every pair of distributions $\^P$ and $\^Q$.

The following fact holds for every coupling $\+C$ of $\^P$ and $\^Q$.
\begin{fact}\label{fact:fact}
For any coupling $\+C$ of $\^P$ and $\^Q$, it holds that
\begin{align*}
    \forall \omega \in \Omega, \quad \Pr[\+C]{X = \omega \land X \neq Y} \geq \max\{0, \^P(\omega) - \^Q(\omega)\}.
    \end{align*}
If $\+C$ is an optimal coupling, then for all $\omega \in \Omega$, $\Pr[\+C]{X = \omega \land X \neq Y} = \max\{0, \^P(\omega) - \^Q(\omega)\}$.
\end{fact}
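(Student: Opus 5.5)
The plan is to decompose the event $\{X=\omega\}$ according to whether $X=Y$. For a fixed $\omega\in\Omega$ I will write
\begin{align*}
\^P(\omega)=\Pr[\+C]{X=\omega}=\Pr[\+C]{X=\omega\land X\neq Y}+\Pr[\+C]{X=\omega\land Y=\omega}.
\end{align*}
The last term is at most $\Pr[\+C]{Y=\omega}=\^Q(\omega)$, since $Y\sim\^Q$. This gives $\Pr[\+C]{X=\omega\land X\neq Y}\geq \^P(\omega)-\^Q(\omega)$. The left-hand side is also trivially at least $0$, so the first inequality follows.

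For the equality under optimality, I will sum the inequality over all $\omega\in\Omega$. The left-hand sides add up to $\Pr[\+C]{X\neq Y}$. The right-hand sides add up to $\sum_{\omega}\max\{0,\^P(\omega)-\^Q(\omega)\}=\DTV{\^P}{\^Q}$, which is the second form of the definition in the introduction. If $\+C$ is optimal, these two totals coincide. Each summand on the left is at least the corresponding summand on the right, so all the termwise differences are nonnegative and sum to zero. Hence every difference vanishes, which gives equality for each $\omega$.

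No step is a real obstacle. The only point needing care is to use the $\max\{0,\cdot\}$ form of the TV-distance rather than the $\frac12\sum|\cdot|$ form, so that the summed inequality matches $\DTV{\^P}{\^Q}$ exactly. If $\Omega$ is countably infinite, the sums are of nonnegative terms, so exchanging summation with the probability measure is justified by countable additivity.
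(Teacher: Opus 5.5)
Your proposal is correct and follows essentially the same route as the paper. Both arguments write the probability of $\{X=\omega\}\cap\{X\neq Y\}$ as $\mathbb{P}(\omega)$ minus the probability of $\{X=Y=\omega\}$, then bound that last term. Under optimality, both sum over $\omega$ and use equality of the totals to force termwise equality. The only cosmetic difference is that the paper bounds the probability of $\{X=Y=\omega\}$ by $\min\{\mathbb{P}(\omega),\mathbb{Q}(\omega)\}$ in one step, whereas you bound it by $\mathbb{Q}(\omega)$ and add the trivial lower bound $0$ separately.
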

\begin{proof}
By definition,
\begin{align*}
    \Pr[\+C]{X=\omega\land X\neq Y}=\^P(\omega)-\Pr[\+C]{X=Y=\omega}
    \text{ and }
    \Pr[\+C]{X=Y=\omega}\leq\min\{\^P(\omega),\^Q(\omega)\}.
    \end{align*}
Therefore,
\begin{align*}
    \Pr[\+C]{X=\omega\land X\neq Y}\geq\max\{0,\^P(\omega)-\^Q(\omega)\}.
\end{align*}
When $\+C$ is optimal, we have 
\begin{align*}
    \DTV{\^P}{\^Q} = \Pr[\+C]{X\neq Y} = \sum_{\omega \in \Omega} \Pr[\+C]{X=\omega\land X\neq Y} \overset{(\ast)}{\geq} \sum_{\omega \in \Omega} \max\{0,\^P(\omega)-\^Q(\omega)\} = \DTV{\^P}{\^Q},
\end{align*}
which implies that the equality in $(\ast)$ must be achieved by the optimal coupling, i.e., for any $\omega \in \Omega$, $\Pr[\+C]{X=\omega\land X\neq Y}=\max\{0,\^P(\omega)-\^Q(\omega)\}$. The inequality $(\ast)$ itself is due to the fact that for any valid coupling, $\Pr{X = \omega} = \^P(\omega)$ and $\Pr{Y = \omega} = \^Q(\omega)$, and thus 
\begin{align*}
\Pr[\+C]{X = \omega \land X \neq Y} &= \Pr{X = \omega} - \Pr{X = \omega \land Y = \omega} \\
&\geq \^P(\omega) - \min\{\^P(\omega),\^Q(\omega)\} = \max\{0,\^P(\omega)-\^Q(\omega)\}. \qedhere
\end{align*}
\end{proof}

We now state an abstract version of the algorithm from~\cite{FGJW23}, formulated for two arbitrary discrete distributions $\^P$ and $\^Q$. Let $\gamma,T_0,T_1,T_2,T_3 > 0$ be parameters.

\begin{assumption}\label{assumption:probability-mass-oracle}
    There exists a probability mass oracle that, for any $\omega \in \Omega$, returns $\^P(\omega)$ and $\^Q(\omega)$ in time $T_0$. 
\end{assumption}

\begin{assumption}\label{assumption:coupling}
    There exists a coupling $\+C$ between $\^P$ and $\^Q$ such that $\frac{\DTV{\^P}{\^Q}}{\Pr[\+C]{X \neq Y}} \geq \gamma$, together with an oracle that can be preprocessed in time $T_1$ and supports the following query:
    \begin{itemize}
        \item Discrepancy query: return the value of $\Pr[\+C]{X \neq Y}$ in time $O(1)$.
    \end{itemize}
    Furthermore, if $\Pr[\+C]{X \neq Y} > 0$, then the oracle also supports the following two queries:
    \begin{itemize}
        \item Sampling query: draw an independent sample of $X$ from the conditional distribution of $X$ given $X \neq Y$ under $\+C$ in time $T_2$. Equivalently, it returns $\omega \in \Omega$ with probability $\Pr[\+C]{X = \omega \mid X \neq Y}$.
        \item Evaluation query: given any $\omega \in \Omega$, return the value of $\Pr[\+C]{X = \omega \land X \neq Y}$ in time $T_3$.
    \end{itemize}
\end{assumption}

Under \Cref{assumption:probability-mass-oracle} and \Cref{assumption:coupling}, the following theorem gives an algorithm for approximating the TV-distance between two \emph{arbitrary} discrete distributions $\^P$ and $\^Q$.

\begin{theorem}[\text{\cite{FGJW23}}]\label{theorem:general-approach}
Suppose \Cref{assumption:probability-mass-oracle} and   \Cref{assumption:coupling} hold for distributions $\^P$ and $\^Q$ with parameters $\gamma,T_0,T_1,T_2,T_3$. Then there exists a randomized algorithm such that, given any $\epsilon > 0$, the parameters $\gamma$, and access to the two oracles above, it outputs a random number $\hat{d}$ in time $O(T_1 + \frac{T_0+T_2+T_3}{\gamma \varepsilon^2})$ satisfying
$
\Pr{(1-\epsilon)\DTV{\^P}{\^Q} \leq \hat{d} \leq (1+\epsilon)\DTV{\^P}{\^Q}} \geq 99\%.
$
\end{theorem}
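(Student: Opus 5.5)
The plan is to reduce approximating $\DTV{\^P}{\^Q}$ to estimating a bounded-ratio mean, exactly as in \Cref{subsection:dtv-from-close-coupling}. First, at cost $T_1$, preprocess the oracle of \Cref{assumption:coupling} and query $p \defeq \Pr[\+C]{X\neq Y}$. If $p=0$, the coupling inequality gives $\DTV{\^P}{\^Q}=0$, and we output $\hat d=0$. Otherwise let $\pi(\omega)=\Pr[\+C]{X=\omega\mid X\neq Y}$ and
\begin{align*}
f(\omega)=\frac{\max\{0,\^P(\omega)-\^Q(\omega)\}}{\Pr[\+C]{X=\omega\land X\neq Y}}
\end{align*}
on the support of $\pi$. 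By \Cref{fact:fact}, $0\le f(\omega)\le 1$ there. Also, whenever $\^P(\omega)>\^Q(\omega)$, \Cref{fact:fact} shows that $\omega$ lies in the support of $\pi$. Hence
\begin{align*}
\E[\pi]{f}=\frac{1}{p}\sum_{\omega}\max\{0,\^P(\omega)-\^Q(\omega)\}=\frac{\DTV{\^P}{\^Q}}{p}\ge\gamma.
\end{align*}

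Next, draw $T=\lceil C/(\gamma\epsilon^2)\rceil$ independent samples $\omega_1,\dots,\omega_T\sim\pi$ using the sampling query. For each sample, compute $f(\omega_j)$ with one call to the mass oracle of \Cref{assumption:probability-mass-oracle} and one evaluation query. Output $\hat d=p\cdot\frac1T\sum_j f(\omega_j)$. Each sample costs $O(T_0+T_2+T_3)$, so the total running time is $O(T_1+\frac{T_0+T_2+T_3}{\gamma\epsilon^2})$, as claimed.

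For correctness, let $\mu=\E[\pi]{f}$. Since $f\in[0,1]$, we have $\Var[\pi]{f}\le\E[\pi]{f^2}\le\mu$. Chebyshev's inequality then gives
\begin{align*}
\Pr{\left|\tfrac1T\textstyle\sum_j f(\omega_j)-\mu\right|>\epsilon\mu}\le\frac{\mu}{T\epsilon^2\mu^2}=\frac{1}{T\epsilon^2\mu}\le\frac{1}{T\epsilon^2\gamma}.
\end{align*}
Choosing $C=100$ makes this failure probability at most $1\%$. Multiplying through by the exact value $p$ turns the bound into a $(1\pm\epsilon)$ relative guarantee for $\DTV{\^P}{\^Q}$.

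The only subtle step is the variance bound. It uses $f\le1$, i.e.\ the first inequality of \Cref{fact:fact}, together with the lower bound $\mu\ge\gamma$, so that the relative variance is at most $1/\gamma$ rather than something exponential. We also need the support check above, to ensure that no mass of $\max\{0,\^P-\^Q\}$ falls outside the support of $\pi$. Everything else is routine accounting of oracle costs.
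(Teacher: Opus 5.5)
Your proposal is correct and follows the paper's proof essentially step for step: handle $\Pr[\+C]{X\neq Y}=0$ separately, use \Cref{fact:fact} to get $0\le f\le 1$ and hence $\Var[\pi]{f}\le\E[\pi]{f}$, identify $\E[\pi]{f}=\DTV{\^P}{\^Q}/\Pr[\+C]{X\neq Y}\ge\gamma$, and apply Chebyshev with about $100/(\gamma\epsilon^2)$ samples. The only small difference is your support check. You read off $\pi(\omega)>0$ directly from the inequality in \Cref{fact:fact} whenever $\^P(\omega)>\^Q(\omega)$, which is slightly cleaner than the paper's case analysis of why $\pi(\omega)=0$ forces $\max\{0,\^P(\omega)-\^Q(\omega)\}=0$.
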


\begin{proof}
If $\Pr[\+C]{X\neq Y} = 0$, then $0 \leq \DTV{\^P}{\^Q} \leq \Pr[\+C]{X\neq Y}$ implies $\DTV{\^P}{\^Q}=0$, so we may directly return $\hat d = 0$. We therefore assume that $\Pr[\+C]{X\neq Y} > 0$. By \Cref{assumption:coupling}, after preprocessing time $T_1$ we can obtain the value of $\Pr[\+C]{X\neq Y}$ in constant time. Let $\pi$ denote the conditional distribution of $X$ given the event $X\neq Y$ under the coupling $\+C$, namely,
\begin{align*}
\forall \omega \in \Omega, \quad \pi(\omega) = \Pr[\+C]{X=\omega\mid X\neq Y} = \frac{\Pr[\+C]{X=\omega \land X\neq Y}}{\Pr[\+C]{X\neq Y}}.
\end{align*}
For all $\omega \in \Omega$ with $\pi(\Omega) \neq 0$, it holds that $\Pr[\+C]{X = \omega \land X\neq Y} > 0$, and we define 
$$
f(\omega) = \frac{\max\{0, \^P(\omega) - \^Q(\omega)\}}{\Pr[\+C]{X = \omega \land X\neq Y}}.
$$
By \Cref{fact:fact}, $0 \leq f(\omega) \leq 1$ for all $\omega \in \Omega$ with $\pi(\omega) > 0$, hence $\E[\pi]{f^2} \leq \E[\pi]{f}$, and $$\Var[\pi]{f} = \E[\pi]{f^2} - (\E[\pi]{f})^2 \leq \E[\pi]{f^2} \leq \E[\pi]{f}.$$ 
Now let $\Omega_+=\{\omega\in \Omega \mid \pi(\omega)>0\}$. Then
\begin{align*}
\E[\pi]{f} = \sum_{\omega \in \Omega_+}\pi(\omega)\frac{\max\{0, \^P(\omega) - \^Q(\omega)\}}{\Pr[\+C]{X = \omega \land X\neq Y}} = \sum_{\omega\in \Omega_+}\frac{\max\{0, \^P(\omega) - \^Q(\omega)\}}{\Pr[\+C]{X\neq Y}}.
\end{align*}
For $\omega \in \Omega$ with $\pi(\omega)=0$, then either $\Pr[\+C]{X = \omega} = 0$ (hence $\^P(\omega)=0$), or $\Pr[\+C]{X\neq Y \mid X = \omega} = 0$ (which implies that whenever $X=\omega$, we also have $Y=\omega$, and therefore $\^Q(\omega)\geq \^P(\omega)$). Hence,
\begin{align}\label{eq:f-omega-0}
 \pi(\omega) = 0 \quad \implies \quad \max\{0,\^P(\omega)-\^Q(\omega)\}=0.
\end{align}
By~\eqref{eq:f-omega-0}, it holds that $\max\{0, \^P(\omega) - \^Q(\omega)\} \neq 0$ implies $\pi(\omega) > 0$. Hence, $\Omega_+$ contains all $\omega \in \Omega$ such that $\max\{0, \^P(\omega) - \^Q(\omega)\} \neq 0$. Therefore,
\begin{align*}
\DTV{\^P}{\^Q} = \sum_{\omega \in \Omega}\max\{0, \^P(\omega) - \^Q(\omega)\}
= \sum_{\omega \in \Omega_+}\max\{0, \^P(\omega) - \^Q(\omega)\},
\end{align*}
we obtain
\begin{align*}
\E[\pi]{f} = \frac{\DTV{\^P}{\^Q}}{\Pr[\+C]{X\neq Y}}\in[\gamma, 1].
\end{align*}

By the sampling query in \Cref{assumption:coupling}, we can draw a sample from $\pi$ in time $T_2$. Moreover, using \Cref{assumption:probability-mass-oracle} and \Cref{assumption:coupling}, for any $\omega \in \Omega$ we can compute
\[
\^P(\omega), \quad \^Q(\omega), \quad \Pr[\+C]{X=\omega\land X\neq Y}, \quad f(\omega)
\]
in time $T_0 + T_3$. 
Let $m = \frac{100}{\gamma \varepsilon^2}$. We draw $m$ independent samples $\omega^{(1)}, \omega^{(2)}, \ldots \omega^{(m)} \sim \pi$ and compute $\bar{f} = \frac{1}{m}\sum_{i=1}^mf(\omega^{(i)})$. By Chebyshev's inequality,
\begin{align*} 
    \Pr[]{\left|\bar f - \E[\pi]{f}\right| \geq \varepsilon \E[\pi]{f}} \leq \frac{\Var[\pi]{f}}{m\varepsilon^2 \E[\pi]{f}^2} \leq \frac{1}{m\varepsilon^2 \E[\pi]{f}} \leq \frac{1}{\frac{100}{\gamma \varepsilon^2}\varepsilon^2 \gamma} = 0.01.
\end{align*}
Thus, with probability at least $99\%$, $\bar f \in \left[(1-\varepsilon)\E[\pi]{f}, (1+\varepsilon)\E[\pi]{f}\right]$. The proof is completed by outputting $\hat d = \bar{f} \cdot \Pr[\+C]{X\neq Y}$, where $\Pr[\+C]{X\neq Y}$ is the value returned by the discrepancy query in \Cref{assumption:coupling} in constant time.
\end{proof}

\section{Algorithm for Mixtures of Product Distributions}

To apply \Cref{theorem:general-approach} to the problem of approximating the TV-distance between two mixtures of product distributions, we need to verify  \Cref{assumption:probability-mass-oracle} and \Cref{assumption:coupling}. By the definition of mixture of product distributions in~\eqref{equation:mixture-of-product-distributions}, the first assumption holds with $T_0 = O(n(k_1 + k_2))$. Our main task is to construct a coupling to verify the second assumption.

\begin{lemma}\label{lemma:coupling-mix-prod}
    Let $k_1,k_2 \geq 1$ be two constants. For any two mixtures of product distributions $\^P = \sum_{s=1}^{k_1} \alpha^{(s)} \bigotimes_{i=1}^n \^P^{(s)}_i$ and $\^Q = \sum_{t=1}^{k_2} \beta^{(t)} \bigotimes_{i=1}^n \^Q^{(t)}_i$, there exists a coupling $\+C$ between $\^P$ and $\^Q$ such that \Cref{assumption:coupling} holds with parameters $\gamma = (4nq)^{-k_1 - k_2 + 1}$, $T_1,T_2,T_3 = O(q^2(nq)^{k_1 + k_2 - 1})$.
\end{lemma}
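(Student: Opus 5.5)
We take $\+C=\+C_{\-{RC}}$, the recursive coupling of \Cref{subsection:recursive-coupling}, and the proof has three parts: (i) check that it is a valid coupling; (ii) bound the state space and implement the three queries by dynamic programming on the DAG of states; (iii) prove the ratio bound $\Pr[\+C]{X\neq Y}\le (4nq)^{k_1+k_2-1}\DTV{\^P}{\^Q}$.

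\textbf{Validity.} We argue by backward induction on $i$. At a state $(i,\bar\alpha,\bar\beta)$, the marginal of $X$ is the mixture $\bar{\^P}$ restricted to coordinates $i,\dots,n$. For $X_i=c$ the coupling puts mass $\ell(c)$ on branch $A$, mass $\min\{\cdot\}$ on branch $B$, and the leftover mass $\max\{0,\bar{\^P}_i(c)-\bar{\^Q}_i(c)\}$ on failure. In the failure case $X$ continues from $B_{\^P,c}$. Hence the conditional law of $X_{i+1:n}$ given $X_i=c$ is
\[
\frac{\ell(c)A_{\^P}+(\bar{\^P}_i(c)-\ell(c))B_{\^P,c}}{\bar{\^P}_i(c)},
\]
which is exactly the decomposition of $\bar{\^P}$. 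By induction the subsequent couplings have the correct marginals, and the same argument applies to $Y$.

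\textbf{State space and queries.} Non-failure states are indexed by the path's B-steps: an ordered list of at most $k_1+k_2-2$ pairs (coordinate, symbol $c$), since each B-step zeroes at least one positive weight. This gives at most $\sum_j \binom{n}{j}q^j = O((nq)^{k_1+k_2-2})$ weight-vectors, and $n$ times that many states; this should land within $(nq)^{k_1+k_2-1}$. Each state has $O(q)$ out-edges of good type and $O(q^2)$ failure pairs $(c,c')$.

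For the discrepancy query, we compute $F(S)=\Pr{\text{reach }\perp\mid S}$ bottom-up in total time $O(q^2|\+S|)$.

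For the sampling query, we walk down from the root choosing edges proportionally to (transition probability)$\times F(\cdot)$. At the failure edge we draw $(c,c')$ from the arbitrary coupling of the leftover masses (fix one, e.g.\ the product of normalized leftovers, which is efficiently samplable). We then sample $X_{i+1:n}$ from the mixture $B_{\^P,c}$, which takes $O(n)$ time.

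For the evaluation query with a given $\omega$, the path is forced by $\omega$ except for the A/B choice at each step. We run a DP over states consistent with the prefix of $\omega$, summing over A/B branches. The failure contribution at coordinate $i$ is $\max\{0,\bar{\^P}_i(\omega_i)-\bar{\^Q}_i(\omega_i)\}\cdot B_{\^P,\omega_i}(\omega_{i+1:n})$. Suffix products can be precomputed so each term is $O(k_1)$, keeping the cost within the stated bound.

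\textbf{Ratio bound (main obstacle).} We prove by induction on $m=$ the number of positive weights that, at any state, the failure probability satisfies
\[
F(S)\le (4nq)^{m-1}\, d_S,
\]
where $d_S$ is the TV-distance between the two current mixtures on the remaining coordinates. Expanding the recursion along A-steps and unrolling gives
\[
F(S)=\sum_{j\ge i}\Big[\mathrm{fail}_j+\sum_c \min\{\cdot\}\,F(S'_c)\Big],
\]
where $\mathrm{fail}_j=\tfrac12\|\bar{\^P}_j-\bar{\^Q}_j\|_1$ is evaluated at weights that stay fixed along A-steps. Each $\mathrm{fail}_j$ is at most $d_S$, since projecting to coordinate $j$ cannot increase TV-distance; this accounts for a factor $n$.

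The hard part is to bound $d_{S'_c}$ in terms of $d_S$. The plan, following \cite{KwonECM23}, is to write
\[
\ell(c)\,d_A+\bigl(\bar{\^P}_j(c)-\ell(c)\bigr)\,d_{B,c}\;\le\;\text{(the conditional TV-distance given } X_j=c)\;+\;\text{(a marginal mismatch term)},
\]
using that $A_{\^P}-A_{\^Q}$ restricted to the remaining coordinates equals the difference $\bar{\^P}-\bar{\^Q}$ marginalized over coordinate $j$, so $d_A\le d_S$. Since $\bar{\^P}_j(c)\,B$ can be expressed as (the conditional measure) minus $\ell(c)A$, the triangle inequality yields
\[
\bigl(\bar{\^P}_j(c)-\ell(c)\bigr)\,d_{B,c}\le 2d_S+|\bar{\^P}_j(c)-\bar{\^Q}_j(c)|\le 4d_S
\]
per symbol $c$. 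Summing over $q$ symbols and $n$ coordinates, with the inductive factor $(4nq)^{m-2}$ for the child states, gives $(4nq)^{m-1}$.

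The delicate point is that the mismatch between the normalizers $\bar{\^P}_j(c)-\ell(c)$ and $\bar{\^Q}_j(c)-\ell(c)$ must be absorbed by the $\min$ weight; I would handle this with the inequality $\min\{a,b\}\,\DTV{\mu}{\nu}\le \tfrac12\|a\mu-b\nu\|_1+|a-b|$.

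Finally, the base case: with only one positive weight on each side, both mixtures are product distributions, and the recursion is the coordinate-wise greedy coupling, for which $F\le n\,d$ as in \cite{FGJW23}.
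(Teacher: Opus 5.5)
Your proposal is correct. Its skeleton matches the paper's: validity by induction on the suffix, counting states via the at most $k_1+k_2-2$ type-II steps, and dynamic programming on the DAG for the three queries. The real difference is the potential used in the discrepancy bound.

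\textbf{The paper's route.} Following \cite{KwonECM23}, the paper works with $\delta(j,\bar\alpha,\bar\beta)$, the largest pointwise gap $|\bar{\^P}_\Lambda(\omega)-\bar{\^Q}_\Lambda(\omega)|$ over coordinate sets $\Lambda$ of size at most $k(\bar\alpha)+k(\bar\beta)-1$. It proves the pointwise inequality \eqref{eq:delta-recursive-coupling} by testing $\delta(j,\bar\alpha,\bar\beta)$ on $\{j\}\cup\Lambda'$; the size cap is exactly what makes this set admissible. It then runs a backward induction on $j$ with factor $(4(n-j+1)q)^{k-1}$, and only at the end invokes $\delta\le\DTV{\^P}{\^Q}$.

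\textbf{Your route.} You use $d_S$ itself, induct on the number of active components, and unroll the type-I chain. Along that chain the weights are frozen, so every $\mathrm{fail}_j$ and every slice term is controlled by $d_S$ through projection. Your slice estimate is sound. Write $a_c=\bar{\^P}_j(c)-\ell(c)$, $b_c=\bar{\^Q}_j(c)-\ell(c)$ and $d_{S'_c}=\DTV{B_{\^P,c}}{B_{\^Q,c}}$. The identity $a_cB_{\^P,c}-b_cB_{\^Q,c}=(\bar{\^P}(c,\cdot)-\bar{\^Q}(c,\cdot))-\ell(c)(A_{\^P}-A_{\^Q})$, combined with your normalizer inequality, gives $\min\{a_c,b_c\}\,d_{S'_c}\le 2d_S+|a_c-b_c|\le 3d_S$, since $|\bar{\^P}_j(c)-\bar{\^Q}_j(c)|\le\DTV{\bar{\^P}_j}{\bar{\^Q}_j}\le d_S$.

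Two bookkeeping fixes:
\begin{itemize}
\item Use $3d_S$, not your $4d_S$. With $4d_S$ the induction overshoots by $n\,d_S$, whereas $n\,d_S+3nq(4nq)^{m-2}d_S\le(4nq)^{m-1}d_S$ holds for $m\ge 2$.
\item Your unrolled formula for $F(S)$ is only an inequality, because the factors $\sum_c\ell(c)\le 1$ along the type-I chain are dropped.
\end{itemize}

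\textbf{What each route buys.} The $\delta$-argument proves a stronger, Kwon-style fact: the failure probability is governed by discrepancies of marginals on at most $k_1+k_2-1$ coordinates. The lemma does not need this. Your TV potential avoids all subset bookkeeping. It is also additive over the slices $\{X_j=c\}$, so you can sum over $c$ before bounding, using the sharper form $\min\{a,b\}\DTV{\mu}{\nu}\le\tfrac12\|a\mu-b\nu\|_1+\tfrac12|a-b|$ of your inequality. This gives $\sum_c\min\{a_c,b_c\}\,d_{S'_c}\le d_S+\bigl(\sum_c\ell(c)\bigr)d_S+\DTV{\bar{\^P}_j}{\bar{\^Q}_j}\le 3d_S$. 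Hence $\Pr[\+C_{\-{RC}}]{X\neq Y}\le(4n)^{k_1+k_2-1}\DTV{\^P}{\^Q}$. This removes the factor $q$ that the pointwise $\delta$ unavoidably pays per coordinate.
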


\Cref{thm:tv-mix-prod} is a simple corollary of \Cref{theorem:general-approach} and \Cref{lemma:coupling-mix-prod}.
The rest of this section is devoted to the proof of the lemma.

\subsection{The recursive coupling for mixtures of product distributions}

In this subsection, we construct a recursive coupling between two mixtures of product distributions $\^P$ and $\^Q$ in two steps. In \Cref{subsection:recursive-sampling-process}, we first introduce a recursive procedure for drawing a sample from a single mixture $\^P$. Building on this procedure, in \Cref{subsection:recursive-coupling-process} we then construct a recursive coupling between $\^P$ and $\^Q$.
Our construction is inspired by the technique of~\cite{KwonECM23}, who develop a recursive analysis that bounds the total variation distance between two mixtures of product distributions up to a \emph{fixed} error. 
Our main contribution is a new recursive coupling process, which is crucial for approximating the TV-distance with \emph{arbitrary} relative error.
In \Cref{subsection:implementation-oracles}, we use this process to implement an oracle that answers the queries in \Cref{assumption:coupling} within polynomial time.


\subsubsection{The recursive sampling process}\label{subsection:recursive-sampling-process}

We now describe a recursive sampling procedure that generates a sample $X$ from the mixture of product distributions
$$
\^P = \sum_{s=1}^{k_1} \alpha^{(s)} \bigotimes_{i=1}^n \^P^{(s)}_i .
$$
This sampling procedure will later serve as the basis for our coupling construction.
Throughout the recursion, the component product distributions $\^P^{(s)}$ remain fixed, while the mixing weights are updated from one recursive step to the next. We use $(\bar\alpha^{(s)})_{s=1}^{k_1}$ to denote the mixing weights at the current recursive step, to distinguish them from the original mixing weights $(\alpha^{(s)})_{s=1}^{k_1}$.
At recursion depth $j$, let
\[
\bar{\^P} = \sum_{s=1}^{k_1} \bar{\alpha}^{(s)} \bigotimes_{i=j}^{n}\^P^{(s)}_i
\]
be the current mixture over $[q]^{n-j+1}$. We consider the recursive coordinate-wise procedure
\[
\text{RecursiveSample}\left(j, \bar \alpha\right),
\]
where $j \in [n]$ is the current coordinate. At this point, the coordinates $X_1, \ldots, X_{j-1}$ have already been fixed, and the task is to generate $X_j$ from the current mixture $\bar{\^P}$ and then update the mixing weights for the next recursive step. To describe the procedure, we define a lower-bound selector.
\begin{definition}[lower-bound selector]\label{def:lower-bound selector}
    A lower-bound selector $\ell^{\^P}$ is a function $\ell^{\^P}\left(j, \bar\alpha, c\right)$ defined for any coordinate $j \in [n]$, any mixing-weight vector $\bar \alpha = (\bar{\alpha}^{(s)})_{s=1}^{k_1}$, and any $c \in [q]$, satisfying: 
    \begin{align}\label{eq:probability-lower-bound-inequality}
        \min_{s \in [k_1]: \bar{\alpha}^{(s)} > 0}\^P_j^{(s)}(c)\geq \ell^{\^P}(j, \bar{\alpha}, c) \geq 0.
    \end{align}
\end{definition}

Here the minimum ranges over all components $s \in [k_1]$ with $\bar{\alpha}^{(s)} > 0$, namely over the components that are still active at the current recursive step. Thus $\ell^{\^P}(j,\bar\alpha,c)$ is required to be a nonnegative lower bound on the marginal probability of seeing value $c$ in coordinate $j$ across all active components.
The above definition does \emph{not} specify a unique lower-bound selector $\ell^{\^P}$. Any function satisfying~\eqref{eq:probability-lower-bound-inequality} is valid.
Let us fix an arbitrary lower-bound selector $\ell^{\^P}$ satisfying~\eqref{eq:probability-lower-bound-inequality}.
Our recursive sampling process in \Cref{subsection:recursive-sampling-process} works for any lower-bound selector $\ell^{\^P}$. 
In \Cref{subsection:recursive-coupling-process}, when constructing the coupling, we will specify a particular choice of $\ell^{\^P}$.


For the current recursive state $(j, (\bar{\alpha}^{(s)})_{s=1}^{k_1})$, we first compute the lower bounds $\ell^{\^P}\left(j, \bar\alpha, c\right)$ for all $c\in [q]$.
Fix a value $c \in [q]$ and an assignment $\omega \in [q]^{n-j+1}$ satisfying $\omega_j = c$.
The probability of $\omega$ under the current mixture $\bar{\^P}$ can be written as
\begin{align*}
\bar{\^P}(\omega) = \sum_{s=1}^{k_1} \bar \alpha^{(s)} \prod_{i=j}^{n} \^P^{(s)}_i(\omega_i) = \sum_{s=1}^{k_1} \bar\alpha^{(s)} \^P^{(s)}_j(c) \prod_{i=j+1}^n \^P^{(s)}_i(\omega_i),
\end{align*}
where the second equality uses the assumption $\omega_j = c$.
By~\eqref{eq:probability-lower-bound-inequality}, it holds that $\^P^{(s)}_j(c) \geq \ell^{\^P}\left(j, \bar\alpha, c\right)$ for all $s \in [k_1]$ with $\bar \alpha^{(s)} > 0$. Hence, for each $s$ with $\bar \alpha^{(s)} > 0$, we can decompose $\^P^{(s)}_j(c)$ as two non-negative terms: $\ell^{\^P}\left(j, \bar\alpha, c\right)$ and $\^P^{(s)}_j(c) - \ell^{\^P}\left(j, \bar\alpha, c\right)$. We have the following decomposition:
\begin{align}\label{eq:P-decomposition-1}
    \bar{\^P}(\omega) = \ell^{\^P}\left(j, \bar\alpha, c\right) \underbrace{\sum_{s=1}^{k_1} \bar \alpha^{(s)} \prod_{i=j+1}^n \^P^{(s)}_i(\omega_i)}_{A} + {\sum_{s=1}^{k_1} \bar \alpha^{(s)} \left(\^P^{(s)}_j(c) - \ell^{\^P}\left(j, \bar\alpha, c\right)\right) \prod_{i=j+1}^n \^P^{(s)}_i(\omega_i)}.
\end{align}
The above summation enumerates all $s \in [k_1]$, which may contain inactive components $s$ with $\bar \alpha^{(s)} = 0$. However, those inactive components only contribute 0 to the whole sum.
In~\eqref{eq:P-decomposition-1},
the first term is the product of $\ell^{\^P}\left(j, \bar\alpha, c\right)$ and $A$, where $A$ is exactly the probability of the suffix $(\omega_i)_{i=j+1}^n$ under the current mixture $\sum_{s=1}^{k_1} \bar\alpha^{(s)} \bigotimes_{i=j+1}^n \^P^{(s)}_i$. 
To write the second term in a similar form, we introduce a modified mixing weight $\bar\alpha_{j,c}^{(s)}$ for all $s \in [k_1]$ as follows:
\begin{align*}
    \bar\alpha_{j,c}^{(s)} \defeq  
    \begin{cases}
        \bar{\alpha}^{(s)}, &\bar{\^P}_j(c) - \ell^{\^P}\left(j, \bar\alpha, c\right) = 0;\\
        \bar\alpha^{(s)} \frac{\^P^{(s)}_j(c) - \ell^{\^P}\left(j, \bar\alpha, c\right)}{\bar{\^P}_j(c) - \ell^{\^P}\left(j, \bar\alpha, c\right)}, & \text{otherwise,}
    \end{cases} \quad \text{where } \bar{\^P}_j(c) = \sum_{s=1}^{k_1} \bar\alpha^{(s)} \^P^{(s)}_j(c).
\end{align*}
We next verify that $\bar\alpha_{j,c}$ defines a valid mixing-weight vector.
By \Cref{def:lower-bound selector}, $\^P^{(s)}_j(c) - \ell^{\^P}\left(j, \bar\alpha, c\right) \geq 0$, and $\bar{\^P}_j(c) = \sum_{s=1}^{k_1} \bar\alpha^{(s)} \^P^{(s)}_j(c)$ is an average of the active values $\^P^{(s)}_j(c)$, therefore $\bar{\^P}_j(c) \geq \ell^{\^P}\left(j, \bar\alpha, c\right)$.
If $\bar{\^P}_j(c) - \ell^{\^P}\left(j, \bar\alpha, c\right) = 0$, $\bar{\alpha}_{j,c} = \bar{\alpha}$ is a valid mixing-weight vector as long as $\bar{\alpha}$ is valid. 
Otherwise, when $\bar{\^P}_j(c) > \ell^{\^P}\left(j, \bar\alpha, c\right)$,
\begin{align*}
    \sum_{s=1}^{k_1} \bar\alpha_{j,c}^{(s)} = \sum_{s=1}^{k_1} \bar\alpha^{(s)} \frac{\^P^{(s)}_j(c) - \ell^{\^P}\left(j, \bar\alpha, c\right)}{\bar{\^P}_j(c) - \ell^{\^P}\left(j, \bar\alpha, c\right)} = \frac{\sum_{s=1}^{k_1} \bar\alpha^{(s)}\^P^{(s)}_j(c) }{\bar{\^P}_j(c) - \ell^{\^P}\left(j, \bar\alpha, c\right)} - \frac{\ell^{\^P}\left(j, \bar\alpha, c\right)}{\bar{\^P}_j(c) - \ell^{\^P}\left(j, \bar\alpha, c\right)} = 1.
\end{align*}
Hence, for $\omega \in [q]^{n-j+1}$ with $\omega_j = c$, we can rewrite $\bar{\^P}(\omega)$ as
\begin{align}\label{eq:P-decomposition}
    \bar{\^P}(\omega)
    = \ell^{\^P}\left(j, \bar\alpha, c\right)\sum_{s=1}^{k_1} \bar\alpha^{(s)} \prod_{i=j+1}^n \^P^{(s)}_i(\omega_i)
    + \left(\bar{\^P}_j(c) - \ell^{\^P}\left(j, \bar\alpha, c\right)\right)\sum_{s=1}^{k_1} \bar\alpha_{j,c}^{(s)} \prod_{i=j+1}^n \^P^{(s)}_i(\omega_i).
\end{align}

Equation~\eqref{eq:P-decomposition} naturally suggests a recursive sampling rule. Given the state $(j, (\bar{\alpha}^{(s)})_{s=1}^{k_1})$, we first sample the current coordinate $X_j$, and then reduce the task of sampling the suffix $X_{j+1:n}$ to sampling from one of two smaller mixtures. Concretely, once the current coordinate is set to $c$, the suffix distribution is either
$\sum_{s=1}^{k_1} \bar\alpha^{(s)} \bigotimes_{i=j+1}^n \^P^{(s)}_i$
or $\sum_{s=1}^{k_1} \bar\alpha_{j,c}^{(s)} \bigotimes_{i=j+1}^n \^P^{(s)}_i$.
The component distributions remain the same (except that their dimension is reduced by one), while the mixing weights may change from $\bar\alpha$ to $\bar\alpha_{j,c}$. Applying this rule recursively yields the sampling procedure shown in Algorithm \ref{alg:recursive-sampling}.


\begin{algorithm}[ht]
    \caption{Recursive sampling process for mixtures of product distributions}
    \label{alg:recursive-sampling}
    \SetKwInOut{Input}{Static Input}
    \SetKwInOut{Output}{Output}
    \SetKwInOut{Parameter}{Parameter}
    \underline{RecursiveSample} $(j, \bar\alpha)$\;
    \Parameter{an index $j \in [n+1]$ and mixing weights $\bar{\alpha}^{(1)},\bar{\alpha}^{(2)},\ldots,\bar{\alpha}^{(k_1)}$}
    \Input{product distributions $\^P^{(1)},\^P^{(2)},\ldots,\^P^{(k_1)}$ over $[q]^n$; an arbitrary lower-bound selector function $\ell^{\^P}$ satisfying~\eqref{eq:probability-lower-bound-inequality}}
    \Output{a random sample $X = (X_j,X_{j+1},\ldots,X_n) \sim \bar{\^P}= \sum_{s=1}^{k_1} \bar{\alpha}^{(s)} \bigotimes_{i=j}^n \^P^{(s)}_i$}
    \textbf{if}  $j > n$ \textbf{then} \Return $\emptyset$\;
    Compute the probability lower bound $\ell^{\^P}\left(j, \bar\alpha, c\right)$ for all $c\in [q]$\; 
    Sample a random pair $(c,e) \in [q] \times \{0,1\}$ such that for each $c \in [q]$, the probability of $(c,0)$ is $\ell^{\^P}\left(j, \bar\alpha, c\right)$ and the probability of $(c,1)$ is $\bar{\^P}_j(c) - \ell^{\^P}\left(j, \bar\alpha, c\right)$, where $\bar{\^P}_j(c) = \sum_{s=1}^{k_1} \bar{\alpha}^{(s)} \^P^{(s)}_j(c)$ is the marginal probability of coordinate $j$ under $\bar{\^P}$\label{line:sample-random-pair}\;
    \If{$e = 0$}{
        Sample $(X_{j+1},X_{j+2},\ldots,X_n) \gets \text{RecursiveSample}(j+1, \bar\alpha)$\;
    } \Else{
        Compute mixing weights $\bar{\alpha}_{j,c}^{(s)} = \bar{\alpha}^{(s)} \frac{\^P^{(s)}_j(c) - \ell^{\^P}(j,\bar\alpha, c)}{\bar{\^P}_j(c) - \ell^{\^P}(j,\bar\alpha,c)}$ for all $s \in [k_1]$\;
        Sample $(X_{j+1},X_{j+2},\ldots,X_n) \gets \text{RecursiveSample}(j+1, \bar{\alpha}_{j,c})$\;
    }
    \Return $\tp{X_j = c, X_{j+1},X_{j+2},\ldots,X_n}$\;
\end{algorithm}

In Algorithm \ref{alg:recursive-sampling}, the lower-bound selector function $\ell^{\^P}$ is treated as part of the input. The following lemma shows that the algorithm outputs a correct sample from $\^P$ as long as $\ell^{\^P}$ satisfies~\eqref{eq:probability-lower-bound-inequality}.

\begin{lemma}\label{lemma:validity-of-recursive-sampling}
Given a mixture of product distributions $\^P = \sum_{s=1}^{k_1} \alpha^{(s)} \bigotimes_{i=1}^n \^P^{(s)}_i$ and an arbitrary lower-bound selector function $\ell^{\^P}$ satisfying~\eqref{eq:probability-lower-bound-inequality}, the procedure \text{RecursiveSample}($1, \alpha$) in Algorithm \ref{alg:recursive-sampling} outputs a sample $X \sim \^P$.
\end{lemma}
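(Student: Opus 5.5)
We prove the stronger statement that for every $j\in[n+1]$ and every valid mixing-weight vector $\bar\alpha$, the call $\text{RecursiveSample}(j,\bar\alpha)$ outputs $(X_j,\ldots,X_n)$ distributed exactly as $\bar{\^P}=\sum_{s=1}^{k_1}\bar\alpha^{(s)}\bigotimes_{i=j}^n\^P^{(s)}_i$. We argue by backward induction on $j$, from $j=n+1$ down to $j=1$; the lemma is the case $j=1$, $\bar\alpha=\alpha$. The base case $j=n+1$ is trivial, since the procedure returns the empty tuple and the empty product distribution is the point mass on it.

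For the inductive step, we first check that line~\ref{line:sample-random-pair} is well defined. Every probability assigned there is nonnegative, because $\bar{\^P}_j(c)\geq \ell^{\^P}(j,\bar\alpha,c)\geq 0$; this was verified right before \eqref{eq:P-decomposition}. The probabilities also sum to $\sum_c \bar{\^P}_j(c)=1$.

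Next we check the weights used in the recursive calls. We noted above that $\bar\alpha_{j,c}$ is a valid mixing-weight vector. There is one subtlety: in the branch $e=1$, the algorithm uses the ratio formula without the case split. This branch has probability $\bar{\^P}_j(c)-\ell^{\^P}(j,\bar\alpha,c)$, so it is reached only when this quantity is positive, and there the ratio formula coincides with the definition of $\bar\alpha_{j,c}$. Also, $\ell^{\^P}$ is an arbitrary function of its arguments, so it is defined at the updated weights. Hence each recursive call receives a valid vector of weights, and the induction hypothesis applies to it.

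Now fix $\omega\in[q]^{n-j+1}$ with $\omega_j=c$, and condition on the pair $(c',e)$ sampled at line~\ref{line:sample-random-pair}. The output can equal $\omega$ only if $c'=c$. With $e=0$, the suffix is distributed as $\sum_s\bar\alpha^{(s)}\bigotimes_{i>j}\^P^{(s)}_i$ by the induction hypothesis. With $e=1$, the suffix is distributed as $\sum_s\bar\alpha_{j,c}^{(s)}\bigotimes_{i>j}\^P^{(s)}_i$. By the law of total probability,
\[
\Pr{\text{output}=\omega}=\ell^{\^P}(j,\bar\alpha,c)\,A(\omega_{j+1:n})+\left(\bar{\^P}_j(c)-\ell^{\^P}(j,\bar\alpha,c)\right)B_c(\omega_{j+1:n}),
\]
where $A$ and $B_c$ denote the two suffix mixtures. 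When $\bar{\^P}_j(c)=\ell^{\^P}(j,\bar\alpha,c)$, the second term vanishes, whatever convention is used for $\bar\alpha_{j,c}$. This expression is exactly the right-hand side of \eqref{eq:P-decomposition}, so it equals $\bar{\^P}(\omega)$, which completes the induction.

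We expect no real obstacle. The only points needing care are the degenerate case $\bar{\^P}_j(c)=\ell^{\^P}(j,\bar\alpha,c)$ and the requirement that the induction hypothesis be stated for all valid weight vectors rather than only the original $\alpha$. The latter is needed because the recursion changes the weights.
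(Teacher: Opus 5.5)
Your proposal is correct and follows essentially the same route as the paper. Both use backward induction on $j$ for the strengthened statement over all valid weight vectors $\bar\alpha$, condition on the sampled pair $(c,e)$, and match the resulting expression with the decomposition \eqref{eq:P-decomposition}. Beyond the paper's proof, you treat the degenerate case $\bar{\^P}_j(c)=\ell^{\^P}(j,\bar\alpha,c)$ explicitly and note that the $e=1$ branch, where the algorithm uses the ratio formula, is reached only when that formula is well defined; the paper leaves both points implicit.
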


\Cref{lemma:validity-of-recursive-sampling} follows by induction on the recursion depth using~\eqref{eq:P-decomposition}. The proof is in Appendix \ref{section:analysis-recursive-coupling}.

The same recursive procedure applies to any mixture of product distributions. In particular, we may use it to sample from
$\bar{\^Q}=\sum_{t=1}^{k_2} \bar\beta^{(t)} \bigotimes_{i=j}^{n} \^Q^{(t)}_i$
by replacing $k_1$ with $k_2$, replacing $(\bar\alpha^{(s)})_{s = 1}^{k_1}$ with $(\bar\beta^{(t)})_{t = 1}^{k_2}$, replacing $\^P^{(s)}$ with $\^Q^{(t)}$, and using an arbitrary lower-bound selector $\ell^{\^Q}(j, \bar\beta, c)$ satisfying
\begin{align}\label{eq:probability-lower-bound-inequality-for-Q}
 \min_{t\in[k_2]: \bar{\beta}^{(t)} > 0}\^Q^{(t)}_j(c) \geq \ell^{\^Q}(j, \bar \beta, c) \geq 0.
\end{align}
\Cref{lemma:validity-of-recursive-sampling} also shows that by substituting $P$ with $Q$ and $\alpha$ with $\beta$, the procedure \text{RecursiveSample}($1, \beta$) outputs a sample $Y \sim \^Q$.


\subsubsection{The recursive coupling process}\label{subsection:recursive-coupling-process}

We now define a coupling between two mixtures of product distributions
\[
\^P=\sum_{s=1}^{k_1} \alpha^{(s)} \bigotimes_{i=1}^n \^P^{(s)}_i
\qquad\text{and}\qquad
\^Q=\sum_{t=1}^{k_2} \beta^{(t)} \bigotimes_{i=1}^n \^Q^{(t)}_i.
\]
At a high level, the coupling runs Algorithm \ref{alg:recursive-sampling} on both $\^P$ and $\^Q$ to generate samples $X \sim \^P$ and $Y \sim \^Q$, while sharing randomness between the two executions so that the resulting pair $(X,Y)$ is a coupling of $\^P$ and $\^Q$.


We use $\text{RecursiveSample}(1, \alpha)$ to generate a sample $X \sim \^P$ and $\text{RecursiveSample}(1, \beta)$ to generate a sample $Y \sim \^Q$, and couple the two recursive procedures at the same time.
Suppose that at some point the execution for $\^P$ is at recursion state $\text{RecursiveSample}(j, \bar\alpha)$ and the execution for $\^Q$ is at recursion state $\text{RecursiveSample}(j, \bar\beta)$, where $\bar{\alpha} = (\bar{\alpha}^{(s)})_{s=1}^{k_1}$ and $\bar\beta = (\bar{\beta}^{(t)})_{t=1}^{k_2}$ denote the current mixing weights and may differ from the original vectors $\alpha$ and $\beta$.
We define a shared lower-bound selector $\ell(j, \bar\alpha, \bar\beta, c)$ as follows. 
\begin{definition}\label{definition:probability-lower-bound}
For any $j\in [n]$, $c \in [q]$, and valid mixing-weight vectors $\bar\alpha = (\bar\alpha^{(s)})_{s=1}^{k_1}$ and $\bar\beta = (\bar{\beta}^{(t)})_{t=1}^{k_2}$, define
\begin{align*}
    \ell\left(j, \bar\alpha, \bar\beta, c\right) \triangleq \min\left\{ \min_{s\in [k_1]: \bar{\alpha}^{(s)}>0} \^P^{(s)}_j(c),  \min_{t\in [k_2]: \bar{\beta}^{(t)}>0} \^Q^{(t)}_j(c)\right\}.
\end{align*}
\end{definition}

The one-sided lower-bound selectors used by the two recursive sampling procedures are then defined by
$$
\ell^{\^P}(j, \bar\alpha, c) = \ell^{\^Q}(j, \bar\beta, c) = \ell(j, \bar\alpha, \bar\beta, c).
$$
By construction, $\ell^{\^P}(j, \bar\alpha, c)$ satisfies~\eqref{eq:probability-lower-bound-inequality} for $\^P$, and $\ell^{\^Q}(j, \bar\beta, c)$ satisfies~\eqref{eq:probability-lower-bound-inequality-for-Q} for $\^Q$.
In Line~\ref{line:sample-random-pair} of Algorithm \ref{alg:recursive-sampling}, each process draws a random pair $(c,e) \in [q] \times \{0,1\}$. Let $(c_{\bar{\^P}},e_{\bar{\^P}})$ and $(c_{\bar{\^Q}},e_{\bar{\^Q}})$ denote the pairs drawn by $\text{RecursiveSample}(j, \bar{\alpha})$ and $\text{RecursiveSample}(j, \bar{\beta})$, respectively. Since both procedures use the same lower bounds $\ell(j, \bar\alpha, \bar\beta, c)$, we define a coupling $\bar{\+C}$ between these two pairs as follows:
\begin{itemize}
    \item For each $c \in [q]$, $\Pr[\bar{\+C}]{c_{\bar{\^P}} = c_{\bar{\^Q}} = c  \land e_{\bar{\^P}} = e_{\bar{\^Q}} = 0} = \ell(j, \bar\alpha, \bar\beta, c)$.
    \item For each $c \in [q]$, $\Pr[\bar{\+C}]{c_{\bar{\^P}} = c_{\bar{\^Q}} = c  \land e_{\bar{\^P}} = e_{\bar{\^Q}} = 1} = \min\{\bar{\^P}_j(c), \bar{\^Q}_j(c)\}-\ell(j, \bar\alpha, \bar\beta, c)$, where
    $\bar{\^P} = \sum_{s=1}^{k_1} \bar{\alpha}^{(s)} \bigotimes_{i=j}^n \^P^{(s)}_i$
    and
    $\bar{\^Q} = \sum_{t=1}^{k_2} \bar{\beta}^{(t)} \bigotimes_{i=j}^n \^Q^{(t)}_i$
    are the mixtures of product distributions at the current recursive step.
    \item Given the above two points, for each $c \in A = \{x \in [q] \mid \bar{\^P}_j(x) > \bar{\^Q}_j(x)\}$,
    the random variable $(c_{\bar{\^P}},e_{\bar{\^P}})$ has a remaining probability $\bar{\^P}_j(c) - \bar{\^Q}_j(c)$ of taking value $(c, 1)$; and for each $c \in B = \{x \in [q] \mid \bar{\^P}_j(x) < \bar{\^Q}_j(x)\}$,
    the random variable $(c_{\bar{\^Q}},e_{\bar{\^Q}})$ has remaining probability $\bar{\^Q}_j(c) - \bar{\^P}_j(c)$ of taking value $(c, 1)$. The total remaining probabilities on the two sides are equal:
    $\sum_{c \in A} (\bar{\^P}_j(c) - \bar{\^Q}_j(c))  = \sum_{c \in B} (\bar{\^Q}_j(c) - \bar{\^P}_j(c))$.
    We therefore couple the remaining mass arbitrarily. Since $A$ and $B$ are disjoint, in this case we necessarily have
    \begin{align*}
        c_{\bar{\^P}} \neq c_{\bar{\^Q}} \quad \text{and} \quad e_{\bar{\^P}} = e_{\bar{\^Q}} = 1.
    \end{align*}
\end{itemize}

Using the coupling $\bar{\+C}$ between $(c_{\bar{\^P}},e_{\bar{\^P}})$ and $(c_{\bar{\^Q}},e_{\bar{\^Q}})$, we now couple two sampling processes $\text{RecursiveSample}(j, \bar{\alpha})$ and $\text{RecursiveSample}(j, \bar{\beta})$ as follows:
\begin{itemize}
    \item Sample $(c_{\bar{\^P}},e_{\bar{\^P}})$ and $(c_{\bar{\^Q}},e_{\bar{\^Q}})$ according to $\bar{\+C}$.
    \item If $c_{\bar{\^P}} = c_{\bar{\^Q}} = c$ for some $c \in [q]$ and $e_{\bar{\^P}} = e_{\bar{\^Q}} = 0$, then we set $X_j = Y_j = c$ and recursively couple the suffixes via $\text{RecursiveCouple}(j+1, \bar{\alpha}, \bar{\beta})$.
    \item If $c_{\bar{\^P}} = c_{\bar{\^Q}} = c$ for some $c \in [q]$ and $e_{\bar{\^P}} = e_{\bar{\^Q}} = 1$, then we set $X_j = Y_j = c$ and recursively couple the suffixes via $\text{RecursiveCouple}(j+1, \bar{\alpha}_{j,c}, \bar{\beta}_{j,c})$.
    \item In the remaining case, we have $X_j = c_{\bar{\^P}} \neq c_{\bar{\^Q}} = Y_j$ and $e_{\bar{\^P}} = e_{\bar{\^Q}} = 1$. Hence the coupling already fails at coordinate $j$, and we simply sample the suffixes independently using $\text{RecursiveSample}(j+1, \bar{\alpha}_{j,c_{\bar{\^P}}})$ and $\text{RecursiveSample}(j+1, \bar{\beta}_{j,c_{\bar{\^Q}}})$.
\end{itemize}

The pseudocode of the coupling process is given in Algorithm \ref{alg:coupling-process}.

\begin{algorithm}[ht]
    \caption{Recursive coupling process for mixtures of product distributions}
    \label{alg:coupling-process}
    \SetKwInOut{Input}{Static Input}
    \SetKwInOut{Output}{Output}
    \SetKwInOut{Parameter}{Parameter}
    \underline{RecursiveCouple} $(j, \bar{\alpha}, \bar{\beta})$\;
    \Parameter{an index $j \in [n+1]$ and mixing weights $\bar{\alpha} = (\bar\alpha^{(s)})_{s=1}^{k_1}$, $\bar{\beta} = (\bar\beta^{(t)})_{t=1}^{k_2}$}
    \Input{product distributions $\^P^{(1)},\^P^{(2)},\ldots,\^P^{(k_1)}$ and $\^Q^{(1)},\^Q^{(2)},\ldots,\^Q^{(k_2)}$ over $[q]^n$; a fixed shared probability lower-bound selector $\ell$ as in \Cref{definition:probability-lower-bound}}
    \Output{joint random variables $(X,Y)$ such that $X = (X_j,X_{j+1},\ldots,X_n) \sim \bar{\^P}= \sum_{s=1}^{k_1} \bar{\alpha}^{(s)} \bigotimes_{i=j}^n \^P^{(s)}_i$ and $Y = (Y_j,Y_{j+1},\ldots,Y_n) \sim \bar{\^Q}= \sum_{t=1}^{k_2} \bar{\beta}^{(t)} \bigotimes_{i=j}^n \^Q^{(t)}_i$}
    \textbf{if}  $j > n$ \textbf{then} \Return $(\emptyset,\emptyset)$\;
    Compute $\ell(j, \bar\alpha, \bar\beta, c)$ for all $c\in [q]$, and then use the coupling $\bar{\+C}$ to jointly sample $(c_{\bar{\^P}},e_{\bar{\^P}})$ and $(c_{\bar{\^Q}},e_{\bar{\^Q}})$\label{line:couple-random-pair}\;
    \If{$ e_{\bar{\^P}} = e_{\bar{\^Q}} = 0$}{
        Sample $(X_{j+1:n},Y_{j+1:n}) \gets \text{RecursiveCouple}(j+1, \bar{\alpha}, \bar{\beta})$\label{line:recursive-couple-0}\;
    } \Else{
        Let $\bar{\alpha}_{j,c_{\bar{\^P}}}^{(s)} = \bar{\alpha}^{(s)} \frac{\^P^{(s)}_j(c_{\bar{\^P}}) - \ell(j, \bar\alpha, \bar\beta, c_{\bar{\^P}})}{\bar{\^P}_j(c_{\bar{\^P}}) - \ell(j, \bar\alpha, \bar\beta, c_{\bar{\^P}})}$ for $s \in [k_1]$ and $\bar{\beta}_{j,c_{\bar{\^Q}}}^{(t)} = \bar{\beta}^{(t)} \frac{\^Q^{(t)}_j(c_{\bar{\^Q}}) - \ell(j, \bar\alpha, \bar\beta, c_{\bar{\^Q}})}{\bar{\^Q}_j(c_{\bar{\^Q}}) - \ell(j, \bar\alpha, \bar\beta, c_{\bar{\^Q}})}$ for $t \in [k_2]$\;
        \If{$c_{\bar{\^P}} = c_{\bar{\^Q}}$}{
            Sample $(X_{j+1:n},Y_{j+1:n}) \gets \text{RecursiveCouple}(j+1, \bar{\alpha}_{j, c}, \bar{\beta}_{j,c})$\label{line:recursive-couple-1}, where $c = c_{\bar{\^P}} = c_{\bar{\^Q}}$\;
        } \Else{
            Sample $X_{j+1:n} \gets \text{RecursiveSample}(j+1, \bar{\alpha}_{j,c_{\bar{\^P}}})$ with $\ell^{\^P} = 0$\label{line:recursive-sample-1}\;
            Sample $Y_{j+1:n} \gets \text{RecursiveSample}(j+1, \bar{\beta}_{j,c_{\bar{\^Q}}})$ with $\ell^{\^Q} = 0$\label{line:recursive-sample-2}\;
        }
    }
    Let $X = (X_j =c_{\bar{\^P}}, X_{j+1:n})$ and $Y = (Y_j =c_{\bar{\^Q}}, Y_{j+1:n})$ and return $(X,Y)$\;
\end{algorithm}

\begin{lemma}[Validity of recursive coupling]\label{lemma:validity-coupling-mix-prod}
    Given two mixtures of product distributions $\^P = \sum_{s=1}^{k_1} \alpha^{(s)} \bigotimes_{i=1}^n \^P^{(s)}_i$ and $\^Q = \sum_{t=1}^{k_2} \beta^{(t)} \bigotimes_{i=1}^n \^Q^{(t)}_i$, fix a shared lower-bound selector $\ell$ as in \Cref{definition:probability-lower-bound}. Then \text{RecursiveCouple}($1, \alpha, \beta$) in Algorithm \ref{alg:coupling-process} outputs joint random variables $(X,Y)$ such that $X  \sim \^P$ and $Y \sim \^Q$.
\end{lemma}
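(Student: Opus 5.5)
We prove the statement by showing that the marginal of the $X$-output of RecursiveCouple$(j,\bar\alpha,\bar\beta)$ coincides in law with the output of RecursiveSample$(j,\bar\alpha)$ run with the one-sided selector $\ell^{\^P}(j,\bar\alpha,c)=\ell(j,\bar\alpha,\bar\beta,c)$. Symmetrically, the $Y$-output should coincide with RecursiveSample$(j,\bar\beta)$ with $\ell^{\^Q}=\ell$. Combined with \Cref{lemma:validity-of-recursive-sampling}, applied at every recursion state rather than only at $j=1$, this gives $X\sim\bar{\^P}$ and $Y\sim\bar{\^Q}$. Taking $j=1$, $\bar\alpha=\alpha$, $\bar\beta=\beta$ then proves the lemma. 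One subtlety must be handled explicitly. The selector $\ell$ depends on both $\bar\alpha$ and $\bar\beta$, so it is not a function of $(j,\bar\alpha)$ alone. We therefore prove the stronger inductive claim directly: for every $j\in[n+1]$ and all valid $\bar\alpha,\bar\beta$, the output satisfies $X_{j:n}\sim\bar{\^P}$ and $Y_{j:n}\sim\bar{\^Q}$. The induction is backward on $j$, and the base case $j=n+1$ is trivial.

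\textbf{Step 1: marginals of the one-step coupling $\bar{\+C}$.} We check that $\bar{\+C}$ is a valid coupling. The $X$-marginal of $(c_{\bar{\^P}},e_{\bar{\^P}})$ assigns probability $\ell(j,\bar\alpha,\bar\beta,c)$ to $(c,0)$ and probability $\bar{\^P}_j(c)-\ell(j,\bar\alpha,\bar\beta,c)$ to $(c,1)$. Indeed, the mass on $(c,1)$ equals $\min\{\bar{\^P}_j(c),\bar{\^Q}_j(c)\}-\ell(j,\bar\alpha,\bar\beta,c)$ plus the leftover $\max\{0,\bar{\^P}_j(c)-\bar{\^Q}_j(c)\}$. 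All these masses are nonnegative because $\ell\le \min\{\bar{\^P}_j(c),\bar{\^Q}_j(c)\}$, since each of these marginals is an average of active component values, each of which is at least $\ell$. The leftover masses on the two sides balance because both marginals sum to $1$. The $Y$-side is symmetric.

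\textbf{Step 2: conditional law of the suffix.} Condition on $(c_{\bar{\^P}},e_{\bar{\^P}})=(c,e)$ and examine $X_{j+1:n}$. If $e=0$, then $e_{\bar{\^Q}}=0$ as well, and the algorithm calls RecursiveCouple$(j+1,\bar\alpha,\bar\beta)$. By induction, $X_{j+1:n}\sim\sum_s\bar\alpha^{(s)}\bigotimes_{i>j}\^P^{(s)}_i$. If $e=1$, there are two possibilities. If $c_{\bar{\^Q}}=c$, the call is RecursiveCouple$(j+1,\bar\alpha_{j,c},\bar\beta_{j,c})$, and induction gives $X_{j+1:n}\sim\sum_s\bar\alpha^{(s)}_{j,c}\bigotimes_{i>j}\^P^{(s)}_i$. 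If $c_{\bar{\^Q}}\ne c$, the call is RecursiveSample$(j+1,\bar\alpha_{j,c})$ with $\ell^{\^P}=0$, which is a valid selector, and \Cref{lemma:validity-of-recursive-sampling} gives the same law. Hence, conditioned on $(c,1)$, the suffix law is the same mixture regardless of the $Y$-side outcome. If $\bar{\^P}_j(c)=\ell$, the branch $(c,1)$ has probability zero and can be ignored.

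\textbf{Step 3: recombination.} Multiplying the probabilities from Step 1 by the conditional laws from Step 2 and summing over $e$ reproduces exactly decomposition~\eqref{eq:P-decomposition}. This gives $\Pr{X_{j:n}=\omega}=\bar{\^P}(\omega)$, and the argument for $Y$ is symmetric. The main obstacle is Step 2. The suffix must be conditionally independent of the $Y$-side information once $(c_{\bar{\^P}},e_{\bar{\^P}})$ is fixed, even though the $e=1$ mass is split between a coupled branch and a failed branch. The strengthened induction hypothesis, which quantifies over all pairs $(\bar\alpha,\bar\beta)$, is exactly what handles this.
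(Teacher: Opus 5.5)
Your proposal is correct and follows essentially the same route as the paper. Both use a backward induction on $j$ over all valid weight pairs $(\bar\alpha,\bar\beta)$, read off the $\bar{\^P}$-side marginal of $\bar{\+C}$, show that each of the three branches (recursion with $(\bar\alpha,\bar\beta)$, coupled recursion with $(\bar\alpha_{j,c},\bar\beta_{j,c})$, and the failed branch via \Cref{lemma:validity-of-recursive-sampling} with $\ell^{\^P}=0$) yields the prescribed suffix mixture, and recombine via~\eqref{eq:P-decomposition}. Your explicit remark that $\ell$ depends on $\bar\beta$, so it is not a one-sided selector, is a useful clarification of why the generalized statement is proved directly rather than by reduction to RecursiveSample.
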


\Cref{lemma:validity-coupling-mix-prod} follows by a straightforward induction. We defer the proof to Appendix \ref{section:analysis-recursive-coupling}.

\subsubsection{Discrepancy of recursive coupling}
We now analyze the discrepancy of the recursive coupling process. Let $\^P = \sum_{s=1}^{k_1} \alpha^{(s)} \bigotimes_{i=1}^n \^P^{(s)}_i$ and $\^Q = \sum_{t=1}^{k_2} \beta^{(t)} \bigotimes_{i=1}^n \^Q^{(t)}_i$ be two mixtures of product distributions. Define 
\begin{align*}
(X,Y) \gets \text{RecursiveCouple}(1, \alpha, \beta).
\end{align*}
Let $\+C_{\-{RC}}$ denote the joint distribution of $(X,Y)$. By the coupling inequality, the discrepancy $\Pr[\+C_{\-{RC}}]{X\neq Y}$ is always at least $\DTV{\^P}{\^Q}$. The following lemma provides an upper bound on the discrepancy of our recursive coupling.

\begin{lemma}[Discrepancy of recursive coupling]\label{lemma:discrepancy-recursive-coupling}
Let $[q]$ be a finite domain.
For two mixtures of product distributions $\^P = \sum_{s=1}^{k_1} \alpha^{(s)} \bigotimes_{i=1}^n \^P^{(s)}_i$ and $\^Q = \sum_{t=1}^{k_2} \beta^{(t)} \bigotimes_{i=1}^n \^Q^{(t)}_i$ over $[q]^n$, the recursive coupling satisfies
\begin{align*}
\Pr[\+C_{\-{RC}}]{X\neq Y} \leq (4nq)^{k_1+k_2-1} \cdot \DTV{\^P}{\^Q}.
\end{align*}
\end{lemma}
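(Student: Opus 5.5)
We argue by strong induction on the number $m$ of \emph{active} components, i.e.\ the number of $s$ with $\bar\alpha^{(s)}>0$ plus the number of $t$ with $\bar\beta^{(t)}>0$. Let $F(j,\bar\alpha,\bar\beta)$ be the failure probability $\Pr{X\neq Y}$ of $\text{RecursiveCouple}(j,\bar\alpha,\bar\beta)$, and let $\bar{\^P},\bar{\^Q}$ be the current suffix mixtures on coordinates $j,\dots,n$. The claim to prove is the stronger statement
\[
F(j,\bar\alpha,\bar\beta)\le (4nq)^{m-1}\,\DTV{\bar{\^P}}{\bar{\^Q}},
\]
where $m$ is the number of active components at the state $(j,\bar\alpha,\bar\beta)$. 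The lemma is the case $j=1$.

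First we write the one-step recursion given by Algorithm~\ref{alg:coupling-process}. Abbreviate $\ell(c)=\ell(j,\bar\alpha,\bar\beta,c)$ and $w_c=\min\{\bar{\^P}_j(c),\bar{\^Q}_j(c)\}-\ell(c)$. Then
\[
F(j,\bar\alpha,\bar\beta)=\DTV{\bar{\^P}_j}{\bar{\^Q}_j}+\Big(\sum_c\ell(c)\Big)F(j+1,\bar\alpha,\bar\beta)+\sum_c w_c\,F(j+1,\bar\alpha_{j,c},\bar\beta_{j,c}).
\]
The key structural fact is that the $B$-branch strictly reduces $m$. The minimum defining $\ell(c)$ is attained by some active component, say $\^P^{(s)}_j(c)=\ell(c)$. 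Whenever $w_c>0$, the denominators are nonzero, so $\bar\alpha^{(s)}_{j,c}=0$. Also $m\ge 2$ always holds, since each side keeps at least one active component.

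Next we unroll the $A$-branch. Along the $A$-branch the weights $\bar\alpha,\bar\beta$ never change, so the state at coordinate $i$ is $(i,\bar\alpha,\bar\beta)$. Its suffix mixtures are the marginals of $\bar{\^P},\bar{\^Q}$ on coordinates $i,\dots,n$, so by data processing their TV-distance is at most $\DTV{\bar{\^P}}{\bar{\^Q}}$. Since $\sum_c\ell(c)\le1$, iterating the recursion over $i=j,\dots,n$ bounds $F$ by a sum of at most $n$ terms of the form
\[
\DTV{\bar{\^P}_i}{\bar{\^Q}_i}+\sum_c w^{(i)}_c\,F(i+1,\bar\alpha_{i,c},\bar\beta_{i,c}).
\]
The first summand is at most $\DTV{\bar{\^P}}{\bar{\^Q}}$.

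\textbf{Base case $m=2$.} Each side is a single product distribution, so the $B$-branch keeps the same single component. The coupling then reduces to the coordinatewise greedy coupling, and the unrolled bound gives $F\le\sum_i\DTV{\bar{\^P}_i}{\bar{\^Q}_i}\le n\,\DTV{\bar{\^P}}{\bar{\^Q}}$.

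\textbf{Inductive step.} In the $B$-branch at coordinate $i$, the induction hypothesis gives
\[
F(i+1,\bar\alpha_{i,c},\bar\beta_{i,c})\le (4nq)^{m-2}\,\DTV{B_{\^P,c}}{B_{\^Q,c}}.
\]
The main obstacle is the following transfer inequality, which I would prove next:
\[
\sum_c w_c\,\DTV{B_{\^P,c}}{B_{\^Q,c}}\le 4q\cdot\DTV{\bar{\^P}}{\bar{\^Q}}.
\]
This is the place where the recursive argument of~\cite{KwonECM23} is adapted. The plan is to use the decomposition~\eqref{eq:P-decomposition} for both mixtures at a fixed value $c$ of coordinate $i$, which gives
\[
\bar{\^P}(c,\cdot)=\ell(c)A_{\^P}+(\bar{\^P}_i(c)-\ell(c))B_{\^P,c},
\]
and the analogous identity for $\bar{\^Q}$. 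Subtracting the two identities and applying the triangle inequality in $\ell_1$, with the mismatch $|\bar{\^P}_i(c)-\bar{\^Q}_i(c)|$ absorbed by the fact that $B$-distributions have mass $1$, should yield
\[
w_c\,\DTV{B_{\^P,c}}{B_{\^Q,c}}\le \|\bar{\^P}(c,\cdot)-\bar{\^Q}(c,\cdot)\|_1/2+\ell(c)\,\DTV{A_{\^P}}{A_{\^Q}}+|\bar{\^P}_i(c)-\bar{\^Q}_i(c)|.
\]
Summing over $c$, each of the three terms is controlled:
\begin{itemize}
\item the first sums to at most $\DTV{\bar{\^P}}{\bar{\^Q}}$;
\item the second is at most $q\,\DTV{\bar{\^P}}{\bar{\^Q}}$, because $A_{\^P},A_{\^Q}$ are suffix marginals and $\ell(c)\le1$ for each of the $q$ values $c$ (this is where the factor $q$ enters);
\item the third equals $2\,\DTV{\bar{\^P}_i}{\bar{\^Q}_i}$.
\end{itemize}
Altogether the sum is at most $(q+3)\,\DTV{\bar{\^P}}{\bar{\^Q}}\le 4q\,\DTV{\bar{\^P}}{\bar{\^Q}}$.

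Combining the pieces, each of the at most $n$ unrolled terms is at most $\big(1+4q(4nq)^{m-2}\big)\DTV{\bar{\^P}}{\bar{\^Q}}$. Hence
\[
F\le n\big(1+4q(4nq)^{m-2}\big)\DTV{\bar{\^P}}{\bar{\^Q}}\le (4nq)^{m-1}\,\DTV{\bar{\^P}}{\bar{\^Q}},
\]
where the last inequality needs a little slack in the constants (for instance, bounding $1\le (4nq)^{m-2}$ and checking $n(1+4q)\le 4nq\cdot\tfrac{5}{4}$; if this is too tight, the constant $4$ is adjusted or the first term is merged into the transfer inequality). Since $m\le k_1+k_2$ at the root, this gives the lemma.
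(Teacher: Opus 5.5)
Your route differs from the paper's and is sound, except that the final constant-chasing does not close as written.

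\textbf{Comparison with the paper.} The paper inducts on the coordinate $j$ (from $n+1$ down to $1$) with the potential $(4(n-j+1)q)^{k(\bar\alpha)+k(\bar\beta)-1}\,\delta(j,\bar\alpha,\bar\beta)$. Here $\delta$ is the largest pointwise gap $|\bar{\^P}_\Lambda(\omega)-\bar{\^Q}_\Lambda(\omega)|$ over marginals on at most $k(\bar\alpha)+k(\bar\beta)-1$ coordinates, the device of~\cite{KwonECM23}. Its key step is the per-value bound $w_c\,\delta(j+1,\bar\alpha_{j,c},\bar\beta_{j,c})\le 3\,\delta(j,\bar\alpha,\bar\beta)$. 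This works because the $B$-branch has fewer active components, so coordinate $j$ can be appended to $\Lambda'$. The factor $q$ then enters when summing over $c$ and via $\DTV{\bar{\^P}_j}{\bar{\^Q}_j}\le q\,\delta$.

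You instead induct on the number of active components, unroll the $A$-branch explicitly, and work with the TV-distance throughout, using an $\ell_1$ triangle inequality on~\eqref{eq:P-decomposition} plus data processing. Your per-$c$ transfer inequality is correct. The paper's potential gives a structurally stronger statement: the failure probability is controlled by low-order marginal discrepancies, which is what~\cite{KwonECM23} needed, though the application here does not use this. Your argument is more elementary and, done tightly, loses no factor of $q$.

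\textbf{The last step.} Your final display is false as written: $n\bigl(1+4q(4nq)^{m-2}\bigr)=(4nq)^{m-1}+n$. The $\tfrac54$ slack you propose compounds over the levels, so it cannot close an induction with a fixed target. The problem is rounding $q+3$ up to $4q$.

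Keeping $q+3$ already suffices for $q\ge 2$, since $n\bigl(1+(q+3)(4nq)^{m-2}\bigr)\le n(q+4)(4nq)^{m-2}\le (4nq)^{m-1}$. The case $q=1$ is trivial because then $F=0$.

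Better still, the factor $q$ in your second term is unnecessary, because $\sum_c\ell(c)\le\sum_c\bar{\^P}_i(c)=1$. Also, halving your $\ell_1$ bound leaves $\tfrac12|\bar{\^P}_i(c)-\bar{\^Q}_i(c)|$ in the third term. Together these give $\sum_c w_c\,\DTV{B_{\^P,c}}{B_{\^Q,c}}\le 3\,\DTV{\bar{\^P}}{\bar{\^Q}}$. The recursion $C_m\le n(1+3C_{m-1})$ with $C_2=n$ then yields $F\le (4n)^{m-1}\DTV{\bar{\^P}}{\bar{\^Q}}$. This implies the lemma and would remove a $q^{k_1+k_2-1}$ factor from $1/\gamma$.

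A minor point: at $m=2$ the $B$-branch does not ``keep the same single component''. It has probability zero, since there $\ell(c)=\min\{\bar{\^P}_j(c),\bar{\^Q}_j(c)\}$.
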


The proof of \Cref{lemma:discrepancy-recursive-coupling} follows the known techniques of~\cite{KwonECM23}. We include the proof in Appendix \ref{sec:discrepancy-RC} for completeness.

\subsection{Implementation of oracles}\label{subsection:implementation-oracles}

In this section, we implement the oracle queries in \Cref{assumption:coupling} based on Algorithm \ref{alg:coupling-process}. Consider the execution of Algorithm \ref{alg:coupling-process} on two mixtures of product distributions $\^P = \sum_{s=1}^{k_1} \alpha^{(s)} \bigotimes_{i=1}^n \^P^{(s)}_i$ and $\^Q = \sum_{t=1}^{k_2} \beta^{(t)} \bigotimes_{i=1}^n \^Q^{(t)}_i$. Define the following set of states and transitions between them.

For each recursion instance $\text{RecursiveCouple}(j, \bar{\alpha}, \bar{\beta})$, we create a state $S$ with label $(j, \bar{\alpha}, \bar{\beta})$, where $1 \leq j \leq n + 1$. A state $S$ is said to be in layer $j$ if $j$ is the first coordinate of the label. 
We add a special failure state $S=\perp$ to denote that the coupling fails.
Let $\+S$ denote the set of all possible states that can be reached by the coupling process with a \emph{positive} probability, including the failure state $\perp$.

For each state $S \in \+S$ with label $(j, \bar{\alpha}, \bar{\beta})$ for $j\in [n]$, we define the following transitions according to the coupling process. The coupling process samples $(c_{\bar{\^P}}, e_{\bar{\^P}})$ and $(c_{\bar{\^Q}}, e_{\bar{\^Q}})$ for $\bar{\^P}$ and $\bar{\^Q}$, respectively, and then makes the transition. 
\begin{itemize}
    \item \textbf{Type-I}: For any $c \in [q]$ with $\ell(j, \bar\alpha, \bar\beta, c) > 0$, with probability $\ell(j, \bar\alpha, \bar\beta, c)$, $S$ moves to a new state $S'$ with label $(j+1, \bar{\alpha}, \bar{\beta})$. Define a transition $t$ from $S$ to $S'$ with weight $w(t) = \ell(j, \bar\alpha, \bar\beta, c)$ and label $L(t) = (c,c)$. This means that $S$ moves to $S'$ through $t$ with probability $\ell(j, \bar\alpha, \bar\beta, c)$ and sets $X_j = Y_j = c$.
    \item \textbf{Type-II}: For any $c \in [q]$ with $p_c \defeq \min\{\bar{\^P}_j(c), \bar{\^Q}_j(c)\}-\ell(j, \bar\alpha, \bar\beta, c) > 0$, with probability $p_c$, $S$ moves to a new state $S'$ with label $(j+1, \bar{\alpha}_{j,c}, \bar{\beta}_{j,c})$. Define a transition $t$ from $S$ to $S'$ with weight $w(t) = p_c$ and label $L(t) = (c,c)$. This means that $S$ moves to $S'$ through $t$ with probability $p_c$ and sets $X_j = Y_j = c$.
    \item \textbf{Type-III}: With the remaining probability, the coupling fails ($X_j \neq Y_j$). For any $c,c'\in [q]$ with $c \neq c'$ and $p_{c,c'}\defeq \Pr{c_{\bar{\^P}} = c \land c_{\bar{\^Q}} = c'} > 0$, define a transition $t$ from $S$ to the failure state $\perp$ with weight $w(t) = p_{c,c'}$ and label $L(t) = (X_{j:n},Y_{j:n})$, where $X_j = c$, $Y_j = c'$, and $X_{j+1:n},Y_{j+1:n}$ are sampled independently in Lines~\ref{line:recursive-sample-1} and~\ref{line:recursive-sample-2} of Algorithm \ref{alg:coupling-process}. Hence, the label contains a pair of random configurations, both of length $n-j + 1$.
    The label $L(t)$ is random because we further sample the pair $(X_{j+1:n},Y_{j+1:n})$.
    We remark that there are multiple transitions from $S$ to the failure state $\perp$ with different weights and labels.
\end{itemize}

\begin{lemma}\label{lem:num-states}
The total number of states is bounded by $|\+S| \leq (nq+1)^{k_1+k_2-1} + 1$.
\end{lemma}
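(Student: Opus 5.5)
The plan is to encode each non-failure state by the path that produced it, and to observe that only Type-II transitions contribute information to that encoding. A state $S\in\+S$ with label $(j,\bar\alpha,\bar\beta)$ is reached from the root $(1,\alpha,\beta)$ by a sequence of Type-I and Type-II transitions. Type-I transitions leave the weight vectors unchanged and only increment $j$. A Type-II transition at layer $i$ with value $c$ replaces $(\bar\alpha,\bar\beta)$ by $(\bar\alpha_{i,c},\bar\beta_{i,c})$, which is a deterministic function of $(i,c,\bar\alpha,\bar\beta)$. Consequently the label of $S$ is determined by $j$ together with the ordered list of pairs $(i_1,c_1),\dots,(i_m,c_m)$ of Type-II transitions on the path, with $i_1<\dots<i_m<j$. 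Several paths may lead to the same label, but that only helps an upper bound.

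The key step is to show that $m\le k_1+k_2-2$, by proving that each Type-II transition strictly decreases the number of active components (positive entries of $\bar\alpha$ and $\bar\beta$ combined). Suppose a Type-II transition with value $c$ at layer $i$ occurs with positive probability, so $p_c=\min\{\bar{\^P}_i(c),\bar{\^Q}_i(c)\}-\ell>0$. By \Cref{definition:probability-lower-bound}, $\ell=\ell(i,\bar\alpha,\bar\beta,c)$ is attained by some active component; say $\ell=\^P_i^{(s)}(c)$ with $\bar\alpha^{(s)}>0$. Since $\bar{\^P}_i(c)-\ell\ge p_c>0$, we are in the ``otherwise'' branch, and therefore $\bar\alpha^{(s)}_{i,c}=0$. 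Weights that were zero stay zero, so the active count drops by at least one. The case where the minimum is attained on the $\^Q$ side is symmetric.

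Next, at least one component of each mixture must remain active, because $\bar\alpha_{i,c}$ and $\bar\beta_{i,c}$ are valid distributions. Starting from at most $k_1+k_2$ active components, this gives $m\le k_1+k_2-2$. I expect this monotonicity argument, and the care needed about the degenerate branch where the denominator vanishes, to be the main point; positivity of $p_c$ is exactly what rules that branch out.

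Finally, I would count the labels. Each index $i_r$ lies in $[n]$ and each value $c_r$ in $[q]$, so each Type-II step takes one of at most $nq$ choices; allowing an extra ``unused'' symbol gives $nq+1$ options per slot. The layer index $j\in[n+1]$ must also be encoded. To land exactly on $(nq+1)^{k_1+k_2-1}$, I would use $k_1+k_2-2$ slots for the Type-II pairs, padded with a blank symbol, plus one further slot of size $nq+1$ for $j\in[n+1]$, using $n+1\le nq+1$. This gives at most $(nq+1)^{k_1+k_2-2}\cdot(nq+1)$ non-failure states. Adding the failure state $\perp$ yields $|\+S|\le (nq+1)^{k_1+k_2-1}+1$.

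The increasing-index constraint and the padded encoding make this injection-style counting loose but valid. The case $k_1+k_2=2$ is handled directly: there are no Type-II transitions, so the non-failure states are just the $n+1$ layers, and $n+1\le nq+1$.
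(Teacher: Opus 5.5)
Your proposal is correct and follows the paper's proof: you use the same active-component count, preserved by Type-I and strictly decreased by Type-II transitions (with $p_c>0$ ruling out the degenerate branch), the same bound of $k_1+k_2-2$ Type-II steps because each mixture keeps an active component, and the same path-encoding argument. The only difference is cosmetic: you encode a state by $j$ plus a padded list of $(i,c)$ pairs, which gives $(nq+1)^{k_1+k_2-2}\cdot(nq+1)$ directly, whereas the paper uses a length-$(j-1)$ string over $[q]\cup\{0\}$ and bounds $\sum_{j}\sum_{r}\binom{j-1}{r}q^r$.
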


\begin{proof}

    For any non-failed state $S$ (i.e., $S\neq \perp$), recall that it is labeled by
    $(j,\bar\alpha,\bar\beta)$.
    Define the \emph{active component count}
    \[
        m(S)\defeq \bigl|\{s\in[k_1]:\bar\alpha^{(s)}>0\}\bigr|+\bigl|\{t\in[k_2]:\bar\beta^{(t)}>0\}\bigr|.
    \]
    At the root $S_{\mathrm{root}} = (1, \alpha, \beta)$, it holds that $m(S_{\mathrm{root}}) \leq k_1+k_2$.

    Consider any transition from a non-failed state $S$ with label $(j,\bar\alpha,\bar\beta)$ in layer $j$.
    Consider a type-I transition $t$ from $S$ to some state $S_I$, where the label on $t$ is $L(t) = (c,c)$.
    Then $S_I$ keeps the same mixing weights, hence 
    $$m(S_I)=m(S).$$
    Consider a type-II transition from $S$ to some state $S_{II}$ with label $L(t) = (c,c)$. Note that in this case, we must have $\min\{\bar{\^P}_j(c), \bar{\^Q}_j(c)\}-\ell(j, \bar\alpha, \bar\beta, c) > 0$.
    The mixing weights in $S_{II}$ are updated to
    \begin{align*}
      \bar{\alpha}^{(s)}_{j,c}
      &= \bar{\alpha}^{(s)} \cdot \frac{\^P^{(s)}_j(c) - \ell(j, \bar\alpha, \bar\beta, c)}{\bar{\^P}_j(c) - \ell(j, \bar\alpha, \bar\beta, c)}
      \quad \text{for all } s \in [k_1],\\
      \bar{\beta}^{(t)}_{j,c}
      &= \bar{\beta}^{(t)} \cdot \frac{\^Q^{(t)}_j(c) - \ell(j, \bar\alpha, \bar\beta, c)}{\bar{\^Q}_j(c) - \ell(j, \bar\alpha, \bar\beta, c)}
      \quad \text{for all } t \in [k_2].
    \end{align*}
    By the definition of $\ell(j, \bar\alpha, \bar\beta, c)$ in \Cref{definition:probability-lower-bound}, there exists at least one active index $s \in [k_1]$ or $t \in [k_2]$ such that
    ($\^P^{(s)}_j(c)-\ell(j, \bar\alpha, \bar\beta, c)=0$ and $\bar \alpha^{(s)}>0$) or ($\^Q^{(t)}_j(c)-\ell(j, \bar\alpha, \bar\beta, c)=0$ and $\bar \beta^{(t)}>0$).
    Note that the two denominators $\bar{\^P}_j(c) - \ell(j, \bar\alpha, \bar\beta, c)$ and $\bar{\^Q}_j(c) - \ell(j, \bar\alpha, \bar\beta, c)$ are both positive.
    Therefore, whenever the process moves to $S_{II}$, at least one previously-active mixture component becomes inactive, and
    \begin{align} \label{eq:active-component-count-update}
        m(S_{II})\le m(S)-1.
    \end{align}
    Finally, for a type-III transition, the next state is $S_{III} = \perp$, and the whole process terminates. All transitions except type-III transitions move to a new state in the next layer $j+1$.

    We bound the number of states by an encoding argument.
    Now fix a layer $j\in\{1,2,\ldots,n+1\}$.
    Any non-failed state at layer $j$ is encoded by the choices made along a path from the root to that state.
    Along this path, at each previous layer we either take a type-I transition (which we encode by the symbol $0$) or take a type-II transition with label $(c,c)$ for some value $c\in[q]$.
    Hence, given the root state $S_{\mathrm{root}} = (1, \alpha, \beta)$, we can use a string $d$ of length $j-1$ with characters in $[q]\cup\{0\}$ to encode the label of the current state. To decode such a string, start from $S_{\mathrm{root}}$ at layer $1$. For each $h=1,2,\ldots,j-1$, if $d_h = 0$, update the current state $(h,\bar{\alpha},\bar{\beta})$ to $(h+1, \bar{\alpha}, \bar{\beta})$; if $d_h = c$ for some $c\in[q]$, update it to $(h+1, \bar{\alpha}_{h,c}, \bar{\beta}_{h,c})$. Let $r$ be the number of nonzero symbols in the code $d$. Using~\eqref{eq:active-component-count-update}, $m(S_{\mathrm{root}}) \leq k_1+k_2$, and the fact that the two mixing-weight vectors remain valid probability distributions throughout the recursive coupling process (so each vector has at least one active component), we have
    \begin{align*}
    r \leq m(S_{\mathrm{root}}) - 2 = k_1 + k_2 - 2.
    \end{align*}

    We count the number of possible strings $d$ of length $j-1$ that encode a non-failed state at layer $1\leq j \leq n+1$. 
    For a fixed $j$ and $r$, there are at most $\binom{j-1}{r}$ ways to choose the $r$ positions at which a non-zero symbol occurs, and then $q^r$ ways to choose the corresponding nonzero symbols.
    Hence the number of non-failed states over all layers is at most
    \begin{align*}
        \sum_{j=1}^{n+1}\sum_{r=0}^{k_1+k_2-2} \binom{j-1}{r} q^r
        &\le (n+1)\sum_{r=0}^{k_1+k_2-2} (nq)^r
        \le (nq+1)^{k_1+k_2-1}.
    \end{align*}
    Finally, there is one failure state. Hence the total number of states is at most $(nq+1)^{k_1+k_2-1} + 1$.
\end{proof}

The oracle queries in \Cref{assumption:coupling} can be implemented by the following algorithm. One execution of the coupling process can be represented as a random walk on the state space $\+S$. 
The random walk starts from the root state $S = S_{\mathrm{root}} = (1, \alpha, \beta)$.
At each step, the random walk samples a transition $t = (S \to S')$ from $S$ with probability $w(t)$ and moves to the next state $S'$. The random walk ends either at a state in layer $n+1$ (coupling succeeds) or at the failure state $\perp$ (coupling fails). The coupling process outputs a pair $(X_{1:n},Y_{1:n})$, which is determined by the labels of all transitions along the random walk.

\paragraph{Preprocessing step}
We build the DAG with all states in $\+S$ and all transitions between them. Using \Cref{lem:num-states}, the total number of states and transitions in the DAG is at most $O(q^2(nq+1)^{k_1+k_2-1})$.
For each state $S \in \+S$, compute $p_{\text{fail}}(S)$, the probability that a random walk starting from $S$ reaches the failure state $\perp$. By definition,
\begin{align*}
p_{\text{fail}}(\perp) = 1 \text{ and } p_{\text{fail}}(S) = 0 \text{ for all } S \text{ in layer } n+1.
\end{align*}
For each non-failed state $S \in \+S$ in layer $j\in [n]$, we have the following recursive relation:
\begin{align*}
p_{\text{fail}}(S) = \sum_{t =(S \to S') \in \text{transitions from } S} w(t) p_{\text{fail}}(S').
\end{align*}
Using dynamic programming, we can compute $p_{\text{fail}}(S)$ for all $S \in \+S$ in time $O(q^2(nq+1)^{k_1+k_2-1})$.
By the definition of $p_{\text{fail}}(S)$, the failure probability of the recursive coupling is
\begin{align*}
 \Pr[\+C_{\-{RC}}]{X \neq Y} = p_{\text{fail}}(S_{\mathrm{root}}).
\end{align*}

\paragraph{Sampling query}
Next, we give an algorithm that generates a random sample of $X$ conditional on $X \neq Y$ in the recursive coupling. Equivalently, we can sample a random walk on the DAG starting from the root state $S_{\mathrm{root}}$, conditional on the random walk reaching the failure state $\perp$. If $p_{\text{fail}}(S_{\mathrm{root}})=0$, then $\DTV{\^P}{\^Q}=0$ by the coupling inequality, and this sampling query is not needed. Otherwise, the query can be implemented by the following algorithm.
\begin{itemize}
    \item Initialize $S = S_{\mathrm{root}}$.
    \item Sample a random transition $t = (S \to S')$ from $S$ with probability $\frac{w(t)p_{\text{fail}}(S')}{p_{\text{fail}}(S)}$ and then update the current state $S$ to $S'$.
    \item The process terminates when $S$ reaches the failure state $\perp$.
\end{itemize}
The above process gives a random sequence of transitions on the DAG. By concatenating the labels in the sequence, we obtain a random pair $(X,Y) \in [q]^n \times [q]^n$, and we output the sample $X \in [q]^n$.

The correctness of the conditional sample $X$ follows directly from the definition of $p_{\text{fail}}(\cdot)$. The running time to generate a random sample $X$ is $O(q^2(nq+1)^{k_1+k_2-1})$.

\paragraph{Evaluation query}
Given a fixed configuration $\sigma \in [q]^n$, we want to evaluate the probability 
\begin{align*}
    \Pr[\+C_{\-{RC}}]{X = \sigma \land X \neq Y}.
\end{align*}
For each non-failed state $S \in \+S$ in layer $j\in [n]$, let $p^\sigma_{\text{fail}}(S)$ be the probability that a random walk starting from $S$ reaches the failure state $\perp$ and that the label $X_{j:n}$ of the random walk is exactly $\sigma_{j:n}$. By definition, since $S_{\mathrm{root}}$ is the layer-1 state, it holds that
\begin{align*}
    p^\sigma_{\text{fail}}(S_{\mathrm{root}}) = \Pr[\+C_{\-{RC}}]{X = \sigma \land X \neq Y}.
\end{align*}

For each $S \in \+S$ in layer $j = n + 1$, by definition, we have $p^\sigma_{\text{fail}}(S) = 0$.

Fix any non-failed state $S \in \+S$ in layer $j\in [n]$ with $S = (j, \bar{\alpha}, \bar{\beta})$. Define 
\begin{align*}
\+T = \{t \mid t \text{ is a transition from } S \text{ to } S' \neq \perp \text{ and } L(t) = (\sigma_j,\sigma_j)\}.
\end{align*} 
The above definition considers all transitions from $S$ to non-failed states $S'$. We still need to consider the type-III transitions from $S$ to the failure state $\perp$. By definition, there are at most $q(q-1)$ such transitions.
Let $\+T_{\perp}(S)$ denote the set of all type-III transitions from $S$ to the failure state $\perp$.
Every such transition can be specified by a pair $(c,c')$ with $c \neq c'$. We use $t_{c,c'}$ to denote the corresponding type-III transition. Note that the weight $w(t_{c,c'})$ is $\Pr[]{c_{\bar{\^P}} = c \land c_{\bar{\^Q}} = c'}$ in the coupling at Line~\ref{line:couple-random-pair} of Algorithm \ref{alg:coupling-process}. The label of $t_{c,c'}$ is a random pair $(X_{j:n},Y_{j:n})$, where $X_j = c$, $Y_j = c'$, and $X_{j+1:n}$ and $Y_{j+1:n}$ are sampled independently in Lines~\ref{line:recursive-sample-1} and~\ref{line:recursive-sample-2} of Algorithm \ref{alg:coupling-process}. We define
\begin{align*}
\tau(t_{c,c'},\sigma) \defeq \Pr{X_{j+1:n} = \sigma_{j+1:n}} = \sum_{s=1}^{k_1} \bar{\alpha}^{(s)}_{j,c} \prod_{i=j+1}^n \^P^{(s)}_i(\sigma_i).
\end{align*}
Now, we have the following recursive relation:
\begin{align*}
    p^\sigma_{\text{fail}}(S) = \sum_{t=(S\to S') \in \+T} w(t) p^\sigma_{\text{fail}}(S') + \sum_{t_{c,c'} \in \+T_{\perp}(S):c = \sigma_j} w(t_{c,c'}) \tau(t_{c,c'},\sigma).
\end{align*}
Again, we can use dynamic programming on the DAG from layer $n+1$ to layer $1$ to compute all values $p^\sigma_{\text{fail}}(S)$ for all $S \in \+S$ in time $O(q^2(nq+1)^{k_1+k_2-1})$. The final result is $p^\sigma_{\text{fail}}(S_{\mathrm{root}})$.

\section{Algorithm for Mixtures of Boolean Subcubes}\label{sec:algorithm-subcube}

\subsection{Reduction to a counting problem}

Consider two mixtures of Boolean subcubes $\^P = \sum_{s=1}^{k_1} \alpha^{(s)} \bigotimes_{i=1}^n \^P^{(s)}_i$ and $\^Q = \sum_{t=1}^{k_2} \beta^{(t)} \bigotimes_{i=1}^n \^Q^{(t)}_i$. A component $\^P^{(s)}$ over $\{0,1\}^n$ is a Boolean subcube if, for any $i \in [n]$, $\^P^{(s)}_i(1) \in \{0,\frac{1}{2},1\}$. Thus, for any $\omega \in \{0,1\}^n$, $\^P^{(s)}(\omega) = \prod_{i=1}^n \^P^{(s)}_i(\omega_i)$ is a product of values in $\{0,\frac{1}{2},1\}$. We define the following partition $\Lambda^{\text{one}}_{\^P^{(s)}}$, $\Lambda^{\text{zero}}_{\^P^{(s)}}$, and $\Lambda^{\text{half}}_{\^P^{(s)}}$ of $[n]$:
\begin{align*}
\Lambda^{\text{one}}_{\^P^{(s)}} = \{i \in [n] \mid \^P^{(s)}_i(1) = 1\},\,\Lambda^{\text{zero}}_{\^P^{(s)}} = \{i \in [n] \mid \^P^{(s)}_i(1) = 0\},\,\Lambda^{\text{half}}_{\^P^{(s)}}= \left\{i \in [n] \mid \^P^{(s)}_i(1) = \frac{1}{2}\right\}.
\end{align*}
In words, for the product distribution $\^P^{(s)}$, the set $\Lambda^{\text{one}}_{\^P^{(s)}}$ contains all dimensions whose value is fixed to $1$, the set $\Lambda^{\text{zero}}_{\^P^{(s)}}$ contains all dimensions whose value is fixed to $0$, and the set $\Lambda^{\text{half}}_{\^P^{(s)}}$ contains all dimensions whose value is sampled uniformly at random. Similarly, for a component $\^Q^{(t)}$, we define the sets $\Lambda^{\text{one}}_{\^Q^{(t)}}$, $\Lambda^{\text{zero}}_{\^Q^{(t)}}$, and $\Lambda^{\text{half}}_{\^Q^{(t)}}$ in the same way. 

    

The probability of each $\omega \in \{0,1\}^n$ can be written as follows:
\begin{align*}
    \^P^{(s)}(\omega) &= \left(\frac{1}{2}\right)^{\left|\Lambda^{\text{half}}_{\^P^{(s)}}\right|} \cdot \mathbbm{1}{\left[\forall i \in \Lambda^{\text{one}}_{\^P^{(s)}}, \omega_i = 1\right]} \cdot \mathbbm{1}{\left[\forall i \in \Lambda^{\text{zero}}_{\^P^{(s)}}, \omega_i = 0\right]}\\
    \^Q^{(t)}(\omega) &= \left(\frac{1}{2}\right)^{\left|\Lambda^{\text{half}}_{\^Q^{(t)}}\right|} \cdot \mathbbm{1}{\left[\forall i \in \Lambda^{\text{one}}_{\^Q^{(t)}}, \omega_i = 1\right]} \cdot \mathbbm{1}{\left[\forall i \in \Lambda^{\text{zero}}_{\^Q^{(t)}}, \omega_i = 0\right]},
\end{align*}
where the indicator function $\mathbbm{1}{[*]}$ returns $1$ if statement $*$ holds and $0$ otherwise.
For the distribution $\^P^{(s)}$, we say $\omega \in \{0,1\}^n$ is \emph{feasible} if for all $i \in \Lambda^{\text{one}}_{\^P^{(s)}}$, $\omega_i = 1$ and for all $i \in \Lambda^{\text{zero}}_{\^P^{(s)}}$, $\omega_i = 0$. 
Each feasible $\omega \in \{0,1\}^n$ has probability $(1/2)^{|\Lambda^{\text{half}}_{\^P^{(s)}}|}$ under the distribution $\^P^{(s)}$, while each infeasible $\omega \in \{0,1\}^n$ has probability $0$. A similar structural property holds for the distribution $\^Q^{(t)}$.

The feasibility of a configuration $\omega \in \{0,1\}^n$ under the distribution $\^P^{(s)}$ can be expressed by a Boolean formula $F_{\^P^{(s)}}: \{0,1\}^n \to \{0,1\}$ as follows:
\begin{align}\label{eq:boolean-formula-mixtures-of-boolean-subcubes}
    F_{\^P^{(s)}}(\omega) = \bigwedge_{i \in \Lambda^{\text{one}}_{\^P^{(s)}}} \omega_i \land \bigwedge_{i \in \Lambda^{\text{zero}}_{\^P^{(s)}}} \neg \omega_i.
\end{align}
Similarly, for the distribution $\^Q^{(t)}$, define the Boolean formula $F_{\^Q^{(t)}}: \{0,1\}^n \to \{0,1\}$ as follows:
\begin{align}\label{eq:boolean-formula-mixtures-of-boolean-subcubes-2}
    F_{\^Q^{(t)}}(\omega) = \bigwedge_{i \in \Lambda^{\text{one}}_{\^Q^{(t)}}} \omega_i \land \bigwedge_{i \in \Lambda^{\text{zero}}_{\^Q^{(t)}}} \neg \omega_i.
\end{align}
By definition, $\^P^{(s)}(\omega) = (1/2)^{|\Lambda^{\text{half}}_{\^P^{(s)}}|} \cdot F_{\^P^{(s)}}(\omega)$ and $\^Q^{(t)}(\omega) = (1/2)^{|\Lambda^{\text{half}}_{\^Q^{(t)}}|} \cdot F_{\^Q^{(t)}}(\omega)$. The total variation distance between mixtures of Boolean subcubes $\^P = \sum_{s=1}^{k_1} \alpha^{(s)} \bigotimes_{i=1}^n \^P^{(s)}_i$ and $\^Q = \sum_{t=1}^{k_2} \beta^{(t)} \bigotimes_{i=1}^n \^Q^{(t)}_i$ can be expressed as follows:
\begin{align*}
    \DTV{\^P}{\^Q} &= \frac{1}{2} \sum_{\omega \in \{0,1\}^n} \left| \^P(\omega) - \^Q(\omega) \right|  = \frac{1}{2} \sum_{\omega \in \{0,1\}^n} \left| \sum_{s=1}^{k_1} \alpha^{(s)} \prod_{i=1}^n \^P^{(s)}_i(\omega_i) - \sum_{t=1}^{k_2} \beta^{(t)} \prod_{i=1}^n \^Q^{(t)}_i(\omega_i) \right|
    \\ &= \frac{1}{2} \sum_{\omega \in \{0,1\}^n} \left| \sum_{s=1}^{k_1} \alpha^{(s)} \cdot \left(\frac{1}{2}\right)^{\left|\Lambda^{\text{half}}_{\^P^{(s)}}\right|} \cdot F_{\^P^{(s)}}(\omega) - \sum_{t=1}^{k_2} \beta^{(t)} \cdot \left(\frac{1}{2}\right)^{\left|\Lambda^{\text{half}}_{\^Q^{(t)}}\right|} \cdot F_{\^Q^{(t)}}(\omega) \right|.
\end{align*}

Define the \emph{characteristic function} $F:\{0,1\}^n \to \{0,1\}^{k_1+k_2}$ as follows:
\begin{align*}
    F(\omega) = \left(F_{\^P^{(1)}}(\omega), F_{\^P^{(2)}}(\omega), \ldots, F_{\^P^{(k_1)}}(\omega), F_{\^Q^{(1)}}(\omega), F_{\^Q^{(2)}}(\omega), \ldots, F_{\^Q^{(k_2)}}(\omega)\right).
\end{align*}
By the above calculation, the value of $|\^P(\omega) - \^Q(\omega)|$ is fully determined by $F(\omega)$. The key observation is that the image space of $F$ has constant size $2^{k_1+k_2}$, so we can reorganize all $\omega \in \{0,1\}^n$ according to the value of the characteristic function $F(\omega)$. Formally,
\begin{align}\label{eq:dtv-mixtures-of-boolean-subcubes}
    \DTV{\^P}{\^Q} &= \frac{1}{2} \sum_{\chi \in \{0,1\}^{k_1+k_2}} N_{\chi} \cdot \left|  \sum_{s=1}^{k_1} \alpha^{(s)} \cdot \left(\frac{1}{2}\right)^{\left|\Lambda^{\text{half}}_{\^P^{(s)}}\right|} \cdot \chi_s - \sum_{t=1}^{k_2} \beta^{(t)} \cdot \left(\frac{1}{2}\right)^{\left|\Lambda^{\text{half}}_{\^Q^{(t)}}\right|} \cdot \chi_{t+k_1} \right|,
\end{align}
where $N_{\chi}$ counts the number of $\omega \in \{0,1\}^n$ such that $F(\omega) = \chi$, i.e.,
\begin{align*}
N_\chi = \left|\{\omega \in \{0,1\}^n \mid F(\omega) = \chi\}\right|.
\end{align*}
By~\eqref{eq:dtv-mixtures-of-boolean-subcubes}, computing the total variation distance between two mixtures of Boolean subcubes $\^P$ and $\^Q$ reduces to counting $N_{\chi}$ for all $\chi \in \{0,1\}^{k_1+k_2}$.

\subsection{The counting algorithm}

\begin{lemma}\label{lem:counting-algorithm}
Given any $\chi \in \{0,1\}^{k_1+k_2}$, the value of $N_{\chi}$ can be computed in time $O(2^{|S_0|} \cdot n (k_1+k_2))$, where $S_0 = \{\, j \in [k_1 + k_2] \mid \chi_j = 0 \,\}$ is the set of indices $j$ with $\chi_j = 0$.
\end{lemma}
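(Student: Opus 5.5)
Each $F_j$ with $j \in [k_1+k_2]$ is the indicator of a subcube $C_j \subseteq \{0,1\}^n$; here we index the components of $\^P$ and $\^Q$ jointly as in the characteristic function. Let $S_1 = [k_1+k_2]\setminus S_0$. Then $N_\chi$ counts the $\omega$ with $F_j(\omega)=1$ for all $j\in S_1$ and $F_j(\omega)=0$ for all $j\in S_0$. The plan is to remove the negative constraints by inclusion--exclusion over $S_0$:
\begin{align*}
N_\chi = \sum_{T\subseteq S_0} (-1)^{|T|}\, \Bigl|\{\omega \in \{0,1\}^n \mid F_j(\omega)=1 \text{ for all } j \in S_1\cup T\}\Bigr|.
\end{align*}
This identity is the standard one: write $\prod_{j\in S_0}(1-F_j(\omega))=\sum_{T\subseteq S_0}(-1)^{|T|}\prod_{j\in T}F_j(\omega)$, multiply by $\prod_{j\in S_1}F_j(\omega)$, and sum over $\omega$.

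Each term in the sum counts the points of an intersection of subcubes, and such an intersection is either empty or again a subcube. Fix $U=S_1\cup T$ and scan the coordinates $i=1,\dots,n$. A coordinate $i$ is forced to $1$ if some $j\in U$ has $i\in\Lambda^{\text{one}}_j$, and forced to $0$ if some $j\in U$ has $i\in\Lambda^{\text{zero}}_j$. If some coordinate is forced both ways, the count is $0$. Otherwise the count is $2^{f}$, where $f$ is the number of coordinates that no $j \in U$ forces. For each coordinate this check inspects the at most $k_1+k_2$ components in $U$, so one term costs $O(n(k_1+k_2))$ time. Summing over the $2^{|S_0|}$ subsets $T$ gives the claimed total of $O(2^{|S_0|}\cdot n(k_1+k_2))$. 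The value $2^f$ has $O(n)$ bits, so I will take a power of two to be representable in unit time, as the stated bound implicitly does (for instance by storing the exponent).

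The argument has no real obstacle. The only points needing care are the correctness of the inclusion--exclusion identity, which is immediate, and the case $S_1=\emptyset$. In that case the empty intersection is all of $\{0,1\}^n$, so the $T=\emptyset$ term is $2^n$, and the formula above already covers it. For \Cref{thm:tv-subcube} afterwards, summing $2^{|S_0|}$ over all $\chi$ gives $3^{k_1+k_2}$, matching the stated running time.
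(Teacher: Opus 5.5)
Your proposal is correct and follows essentially the same route as the paper. The paper also writes $N_\chi=\sum_{S\subseteq S_0}(-1)^{|S|}\,|\Phi(S_1\cup S)|$ by inclusion--exclusion over $S_0$, and notes that each intersection of subcubes is either empty or of size $2^r$, computable in $O(n(k_1+k_2))$ time by fixing forced coordinates and checking for contradictions; your remark about storing $2^f$ by its exponent makes explicit a unit-cost assumption the paper leaves implicit.
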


Fix $\chi \in \{0,1\}^{k_1+k_2}$, and consider the set of $\omega \in \{0,1\}^n$ such that $F(\omega) = \chi$. We need to count the number of such $\omega \in \{0,1\}^n$. To simplify the notation, define a sequence of Boolean functions $F_1, F_2, \ldots, F_{k_1+k_2}: \{0,1\}^n \to \{0,1\}$ as follows:
\begin{align*}
\forall j \in [k_1+k_2], \quad F_j(\omega) = \begin{cases}
F_{\^P^{(j)}}(\omega) & \text{if } j \in [k_1], \\
F_{\^Q^{(j-k_1)}}(\omega) & \text{if } j \in [k_1+1, k_1+k_2],
\end{cases}
\end{align*}
where $F_{\^P^{(s)}}$ and $F_{\^Q^{(t)}}$ are defined in~\eqref{eq:boolean-formula-mixtures-of-boolean-subcubes} and~\eqref{eq:boolean-formula-mixtures-of-boolean-subcubes-2}, respectively. For a configuration $\omega \in \{0,1\}^n$ with $F(\omega) = \chi$, we have $F_j(\omega) = \chi_j$ for all $j \in [k_1 + k_2]$.

For a subset $S \subseteq [k_1+k_2]$ of indices, define the following set of configurations:
\begin{align*}
\Phi(S) = \{\omega \in \{0,1\}^n \mid F_j(\omega) = 1 \text{ for all } j \in S\}.
\end{align*}
Let $S_1 = \{j \in [k_1 + k_2] \mid \chi_j = 1\}$ and $S_0 = \{j \in [k_1 + k_2] \mid \chi_j = 0\}$. By the definition of $N_\chi$ and the inclusion-exclusion principle, we have
\begin{align}\label{eq:counting-algorithm}
N_\chi = |\Phi(S_1)| - \left | \{ \omega \in \Phi(S_1) \mid \exists\, j \in S_0, F_j(\omega) = 1 \} \right | = \sum_{S \subseteq S_0} (-1)^{|S|} \cdot |\Phi(S_1 \cup S)|.
\end{align}

\begin{observation}\label{obs:counting-algorithm}
Given any $S \subseteq [k_1+k_2]$, the size $|\Phi(S)|$ can be computed in time $O(n \cdot |S|)$.
\end{observation}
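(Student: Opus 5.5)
The approach is to notice that $\Phi(S)$ is the set of points satisfying a conjunction of literals. By definition, $\omega \in \Phi(S)$ if and only if $F_j(\omega)=1$ for every $j \in S$. Each $F_j$ is itself a conjunction of literals: it forces $\omega_i = 1$ for $i$ in the corresponding $\Lambda^{\text{one}}$ set and $\omega_i = 0$ for $i$ in the corresponding $\Lambda^{\text{zero}}$ set. Hence $\Phi(S)$ is the set of $\omega$ satisfying the union of all these coordinate constraints. Such a set is either empty, if two constraints conflict, or a subcube of dimension $n$ minus the number of constrained coordinates.

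The computation proceeds coordinate by coordinate, using an array $\mathrm{req}[1..n]$ with entries in $\{\ast,0,1,\text{conflict}\}$, initialized to $\ast$ (unconstrained).

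1. For each $j \in S$, scan all $n$ coordinates of the component $F_j$. If $F_j$ fixes coordinate $i$ to a bit $b$, merge $b$ into $\mathrm{req}[i]$:
   - if $\mathrm{req}[i]=\ast$, set $\mathrm{req}[i]=b$;
   - if $\mathrm{req}[i]=b$, leave it unchanged;
   - if $\mathrm{req}[i]=1-b$, mark it as a conflict.

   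Each component's marginals $\^P^{(s)}_i(1)$ (or $\^Q^{(t)}_i(1)$) are given explicitly in the input, so each scan costs $O(n)$. The total over all $j \in S$ is $O(n|S|)$.
2. If any coordinate is marked as a conflict, return $|\Phi(S)| = 0$.
3. Otherwise, let $u$ be the number of coordinates with $\mathrm{req}[i]=\ast$, and return $|\Phi(S)| = 2^{u}$.

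Correctness follows because $\omega \in \Phi(S)$ exactly when $\omega_i = \mathrm{req}[i]$ for every constrained coordinate, with the free coordinates arbitrary. These constraints are independent across coordinates, so the count factorizes as a product over coordinates: each free coordinate contributes a factor of $2$, each consistently fixed coordinate contributes $1$, and a conflicting coordinate contributes $0$.

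Two minor points remain. First, the case $S=\emptyset$ gives $2^n$, which is consistent with the procedure since no coordinate gets constrained. Second, $2^u$ can be output in a word-RAM sense, or simply as the exponent $u$, since in \eqref{eq:dtv-mixtures-of-boolean-subcubes} it is multiplied by powers of $1/2$. The only step requiring care is the conflict detection; because everything is coordinatewise, it is immediate, and there is no real obstacle.
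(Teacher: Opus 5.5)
Your proposal is correct and takes the same approach as the paper's proof: collect the coordinates forced by each $F_j$ for $j \in S$, return $0$ if two constraints conflict, and otherwise return $2^r$, where $r$ is the number of unfixed coordinates. The $\mathrm{req}$ array and the coordinatewise factorization simply spell out details the paper leaves implicit; one nitpick, which the paper shares, is that initializing and scanning the array costs $O(n)$ even when $S=\emptyset$, so the bound is really $O(n(|S|+1))$, which is harmless in the application.
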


\begin{proof}
By the definition of $F_{\^P^{(s)}}$ and $F_{\^Q^{(t)}}$, we have $F_{\^P^{(s)}}(\omega) = 1$ iff $\omega_i = 1$ for all $i \in \Lambda^{\text{one}}_{\^P^{(s)}}$ and $\omega_i = 0$ for all $i \in \Lambda^{\text{zero}}_{\^P^{(s)}}$. Similarly, $F_{\^Q^{(t)}}(\omega) = 1$ iff $\omega_i = 1$ for all $i \in \Lambda^{\text{one}}_{\^Q^{(t)}}$ and $\omega_i = 0$ for all $i \in \Lambda^{\text{zero}}_{\^Q^{(t)}}$. We can go through all $j \in S$ and fix the values of the corresponding dimensions $i \in [n]$. If there is a contradiction, then $|\Phi(S)|=0$. Otherwise, $|\Phi(S)|=2^r$, where $r$ is the number of dimensions whose values are not fixed. The running time is $O(n \cdot |S|)$.
\end{proof}

We now prove \Cref{lem:counting-algorithm}. Combining~\eqref{eq:counting-algorithm} and~\Cref{obs:counting-algorithm}, we can exactly compute the value of $N_{\chi}$ in time 
\begin{align*}
\sum_{S \subseteq S_0} O(n |S_1 \cup S|) \leq \sum_{0 \leq k \leq |S_0|} \binom{|S_0|}{k} \cdot O(n(k_1+k_2)) = O(2^{|S_0|} \cdot n (k_1+k_2)). 
\end{align*}

\subsection{The total variation distance algorithm}
The algorithm for computing the total variation distance between two mixtures of Boolean subcubes $\^P$ and $\^Q$ is obtained by combining~\eqref{eq:dtv-mixtures-of-boolean-subcubes} and \Cref{lem:counting-algorithm}. Let $K=k_1+k_2$. The overall running time is
\begin{align*}
\sum_{0 \leq m \leq K} \binom{K}{m} \cdot O(2^m \cdot n K) = O(3^K K \cdot n) = O(3^{k_1 + k_2} (k_1 + k_2) \cdot n).
\end{align*}
Here $m$ enumerates the number of indices $j$ with $\chi_j = 0$. This proves the main result in \Cref{thm:tv-subcube}.

\section{Proof of the hardness result}\label{sec:proof-hardness-subcube}
\begin{proof}[Proof of~\Cref{thm:hardness-subcube}]
    We show that if there is a polynomial-time algorithm for computing the TV-distance between two mixtures of Boolean subcubes with $k_1+k_2=\Theta(n)$, then there is a polynomial-time algorithm for $\#\mathsf{3SAT}$, restricted to 3-CNF formulas. 
    Let $$\varphi(x_1, x_2, \ldots, x_r) = C_1 \land C_2 \land \ldots \land C_m$$ be a 3-CNF formula over the used variables $x_1, x_2, \ldots, x_r$, and $m \geq 1$.
    Set $N = \max\{r, m\}$. If $N > r$, we add $N-r$ dummy variables $x_{r+1}, \ldots, x_N$. Let $S$ denote the number of satisfying assignments of $\varphi$ over $x_1, \ldots, x_N$. As the dummy variables do not appear in any clause, the number of satisfying assignments for $\varphi$ is given by
    $$\#\mathsf{SAT}(\varphi) = \frac{S}{2^{N-r}}.$$
    We establish the proof by showing the relation between $S$ and the TV-distance between certain mixtures of Boolean subcubes. 
    Let $n = N+1$. We construct two mixtures of subcubes $\^P$ and $\^Q$ over $\{0,1\}^n$. A point in this space can be represented as $(b, \omega_1, \ldots, \omega_N)$, where $b \in \{0, 1\}$ is an additional bit, and $(\omega_1, \ldots, \omega_N)$ is an assignment to $(x_1, \ldots, x_N)$. For each $j\in[m]$, let $x_{j,a},x_{j,b},x_{j,c}$ be the three variables appearing in the clause $C_j$. Define $(\omega_{j,a}, \omega_{j,b}, \omega_{j,c}) \in \{0, 1\}^3$ to be the unique assignment to $x_{j,a},x_{j,b},x_{j,c}$ that falsifies $C_j$. Let $U_j$ be a subcube such that 
    $$\Pr[U_j]{(b,x_{j,a},x_{j,b},x_{j,c}) = (0, \omega_{j,a}, \omega_{j,b}, \omega_{j,c})} = 1,$$ 
    and the other $N-3$ coordinates are uniform over $\{0,1\}$. Define 
    $$\^P = \frac{1}{m}\sum_{j=1}^m U_j,$$
    is a mixture of $k_1 = m$ subcubes. 
    Next we define $\^Q$. Let $V_0$ and $V_1$ be two subcubes such that $$\Pr[V_0]{b=0} = \Pr[V_1]{b=1} = 1,$$ and that are uniform over $(x_1, \ldots, x_N)$. Let $\lambda = \frac{1}{2m}$, and define $$\^Q = \lambda V_0 + (1-\lambda) V_1,$$ which is a mixture of $k_2 = 2$ subcubes.
    
    We now compute $\DTV{\^P}{\^Q}$. For any assignment $\omega = (\omega_1, \ldots, \omega_N)$, let $$F(\omega) = \left|\left\{ j\in[m]\mid \omega\text{ falsifies }C_j \right\} \right|$$ denote the number of clauses falsified by $\omega$.
    For every $\omega \in \{0,1\}^N$, we have $$\^P(0, \omega) = \frac{1}{m} \sum_{j=1}^m U_j(0,\omega) = \frac{1}{m} \sum_{j=1}^m \mathbf{1}[\omega\text{ falsifies }C_j] \cdot 2^{-(N-3)} = \frac{F(\omega)}{m}\cdot 2^{-(N-3)} = \frac{8F(\omega)}{m}\cdot 2^{-N},$$
    and $\^P(1, \omega) = 0$ by definition of $U_j$. Remark that $$\^Q(0,\omega) = \lambda \cdot 2^{-N} = \frac{1}{2m}\cdot 2^{-N} < \frac{8F(\omega)}{m}\cdot 2^{-N} = \^P(0, \omega) \quad \text{iff } F(\omega) \geq 1;$$ $$\^Q(1,\omega) = (1-\lambda) \cdot 2^{-N} > 0 = \^P(1, \omega).$$
    Therefore
    \begin{align}\label{eq:dtv-3CNF}
        \DTV{\^P}{\^Q} &= \sum_{(b,\omega) \in \{0,1\}^{N+1}} \max\left\{0, \^P(b, \omega) - \^Q(b, \omega)\right\} = \sum_{\omega \in \{0,1\}^N: F(\omega)\geq 1} \^P(0,\omega) - \^Q(0,\omega)\notag\\
        &= \sum_{\omega \in \{0,1\}^N: F(\omega)\geq 1} \frac{8F(\omega)}{m}\cdot 2^{-N} - \sum_{\omega \in \{0,1\}^N: F(\omega)\geq 1} \frac{1}{2m} \cdot 2^{-N}.
    \end{align}

    For the first term, we consider counting all pairs $(\omega,C_j)$ such that $\omega$ falsifies $C_j$. On the one hand, for each $\omega$, it falsifies $F(\omega)$ clauses; on the other hand, for each $C_j$, it is falsified by $2^{N-3}$ assignments. Thus $$\sum_{\omega \in \{0,1\}^N: F(\omega)\geq 1} F(\omega) = \sum_{\omega \in \{0,1\}^N}F(\omega) = m\cdot 2^{N-3}.$$
    
    For the second term, there are $S$ assignments $\omega$ satisfying all clauses, that is, $F(\omega) = 0$. So there are $2^N - S$ assignments satisfying $F(\omega)\geq 1$. Substituting back to~\eqref{eq:dtv-3CNF}, we have
    \begin{align*}
        \DTV{\^P}{\^Q} &= \sum_{\omega \in \{0,1\}^N: F(\omega)\geq 1} \frac{8F(\omega)}{m}\cdot 2^{-N} - \sum_{\omega \in \{0,1\}^N: F(\omega)\geq 1} \frac{1}{2m} \cdot 2^{-N}\\
        &= \frac{2^{-N+3}}{m} \cdot (m\cdot 2^{N-3}) - \frac{2^{-N}}{2m} \cdot (2^N - S) = 1-\frac{1}{2m} + \frac{2^{-N}}{2m}S.
    \end{align*}
    Thus the number of satisfying assignments is given by
    $$\#\mathsf{SAT}(\varphi) = \frac{S}{2^{N-r}} = 2^r(2m\cdot\DTV{\^P}{\^Q}-2m+1).$$

    The constructed distributions live on the $n = N+1$ dimensional subcube, with $k_1 + k_2 = m+2$ total components. Since $N = \max\{r,m\}$ and $r \leq 3m$, we have $m \leq N \leq 3m$, hence $$k_1+k_2 = m+2 = \Theta(N+1) = \Theta(n).$$
    Therefore, an algorithm for~\Cref{problem:compute-subcube} with $k_1 + k_2 = \Theta(n)$ yields an exact algorithm for $\#\mathsf{3SAT}$.
\end{proof}

\ifthenelse{\boolean{DoubleBlind}}{
}{
\section*{Acknowledgements}
Weiming Feng acknowledges the support of ECS grant 27202725 from Hong Kong RGC.
We thank Arnab Bhattacharyya and Guy Van den Broeck for bringing the problem to our attention.
}

\bibliographystyle{alpha}
\bibliography{refs}

\appendix

\section{Validity of Sampling and Coupling Processes}
\label{section:analysis-recursive-coupling}

\subsection{Validity of the sampling process}
\begin{proof}[Proof of~\Cref{lemma:validity-of-recursive-sampling}]
    We prove by induction on $j$ that for any $j\in\{1,2,\ldots,n+1\}$ and any mixing weights $\bar\alpha = (\bar{\alpha}^{(s)})_{s=1}^{k_1}$,
    \text{RecursiveSample}$(j,\bar \alpha)$ outputs a random sample
    \[
        X_{j:n}\sim \bar{\^P} \defeq \sum_{s=1}^{k_1} \bar{\alpha}^{(s)} \bigotimes_{i=j}^n \^P^{(s)}_i.
    \]

    The base case $j=n+1$ is immediate since the algorithm returns $\emptyset$.
    For the induction step, fix any $\omega_{j:n}\in [q]^{n+1-j}$.
    Let $c=\omega_j$.
    In Line~\ref{line:sample-random-pair}, the algorithm chooses $(c,0)$ with probability $\ell^{\^P}(j, \bar{\alpha}, c)$ and chooses $(c,1)$ with probability $\bar{\^P}_j(c)-\ell^{\^P}(j, \bar{\alpha}, c)$.
    Conditioned on $e=0$, by the induction hypothesis the recursive call returns $\omega_{j+1:n}$ with probability
    $\sum_{s=1}^{k_1}\bar{\alpha}^{(s)}\prod_{i=j+1}^n \^P^{(s)}_i(\omega_i)$.
    Conditioned on $e=1$, the algorithm updates the mixing weights to $\bar{\alpha}_{j,c} = (\bar{\alpha}^{(s)}_{j,c})_{s=1}^{k_1}$ and again by the induction hypothesis the recursive call returns $\omega_{j+1:n}$ with probability
    $\sum_{s=1}^{k_1}\bar{\alpha}^{(s)}_{j,c}\prod_{i=j+1}^n \^P^{(s)}_i(\omega_i)$. Therefore,
    \begin{align*}
        \Pr{X_{j:n} =\omega_{j:n}}
        &= \ell^{\^P}(j, \bar{\alpha}, c)\sum_{s=1}^{k_1}\bar{\alpha}^{(s)}\prod_{i=j+1}^n \^P^{(s)}_i(\omega_i)
        + \bigl(\bar{\^P}_j(c)-\ell^{\^P}(j, \bar{\alpha}, c)\bigr)\sum_{s=1}^{k_1}\bar{\alpha}^{(s)}_{j,c}\prod_{i=j+1}^n \^P^{(s)}_i(\omega_i)\\ &= \bar{\^P}(\omega_{j:n}),
    \end{align*}
    where the second equality is exactly the decomposition in~\eqref{eq:P-decomposition}.
    Note that~\eqref{eq:probability-lower-bound-inequality} guarantees $$\bar{\^P}_j(c)-\ell^{\^P}(j, \bar{\alpha}, c) = \sum_{s\in[k_1]: \bar{\alpha}^{(s)} >0} \bar{\alpha}^{(s)} \left(\^P^{(s)}_j(c) - \ell^{\^P}(j, \bar{\alpha}, c)\right) \geq 0,$$
    which completes the induction.
\end{proof}

\subsection{Validity of the coupling}
\begin{proof}[Proof of~\Cref{lemma:validity-coupling-mix-prod}]
    We prove a slightly more general statement: for any $j\in\{1,2,\ldots,n+1\}$ and any mixing-weight vectors $\bar\alpha = (\bar{\alpha}^{(s)})_{s=1}^{k_1}$ and $\bar\beta = (\bar{\beta}^{(t)})_{t=1}^{k_2}$, \text{RecursiveCouple}$(j,\bar{\alpha},\bar{\beta})$ outputs joint random variables $(X_{j:n},Y_{j:n})$ such that
    \[
        X_{j:n}\sim \bar{\^P}\defeq \sum_{s=1}^{k_1}\bar{\alpha}^{(s)}\bigotimes_{i=j}^n \^P^{(s)}_i,
        \qquad
        Y_{j:n}\sim \bar{\^Q}\defeq \sum_{t=1}^{k_2}\bar{\beta}^{(t)}\bigotimes_{i=j}^n \^Q^{(t)}_i.
    \]
    Taking $j=1$ with the initial weights $\alpha = (\alpha^{(s)})_{s=1}^{k_1}$ and $\beta = (\beta^{(t)})_{t=1}^{k_2}$ gives the lemma. We now prove $X_{j:n}\sim\bar{\^P}$ by induction on $j$.
    The proof for $Y_{j:n}\sim\bar{\^Q}$ is symmetric.

    We proceed by induction on $j$. The base case is $j=n+1$.
    The algorithm returns $(\emptyset,\emptyset)$, which matches the (trivial) distributions on the empty product space.

    Fix any $j\le n$. In Line~\ref{line:couple-random-pair} of Algorithm \ref{alg:coupling-process}, the algorithm samples
    $(c_{\bar{\^P}},e_{\bar{\^P}})$ and $(c_{\bar{\^Q}},e_{\bar{\^Q}})$ using the coupling $\bar{\+C}$.
    By construction of $\bar{\+C}$, the marginal distribution of $(c_{\bar{\^P}},e_{\bar{\^P}})$ is exactly as follows: for each $c\in[q]$,
    $\Pr[]{(c_{\bar{\^P}},e_{\bar{\^P}})=(c,0)}=\ell(j, \bar\alpha, \bar\beta, c)$ and 
        $\Pr[]{(c_{\bar{\^P}},e_{\bar{\^P}})=(c,1)}=\bar{\^P}_j(c)-\ell(j, \bar\alpha, \bar\beta, c)$.
    All the above probabilities are non-negative due to~\Cref{definition:probability-lower-bound}.
    Hence, ignoring the other side, the pair-sampling rule on the $\bar{\^P}$ side matches exactly the rule in
    \text{RecursiveSample} (Line~\ref{line:sample-random-pair} of Algorithm~\ref{alg:recursive-sampling}). 
    Now fix any $\omega_{j:n}\in[q]^{n+1-j}$ and write $c=\omega_j$.
    We compute $\Pr{X_{j:n}=\omega_{j:n}}$ by conditioning on $e_{\bar{\^P}}\in\{0,1\}$:
    \begin{align*}
        \Pr{X_{j:n}=\omega_{j:n}}
        &= \Pr{(c_{\bar{\^P}},e_{\bar{\^P}})=(c,0)}\cdot \Pr{X_{j+1:n}=\omega_{j+1:n}\mid (c_{\bar{\^P}},e_{\bar{\^P}})=(c,0)}\\
        &\quad + \Pr{(c_{\bar{\^P}},e_{\bar{\^P}})=(c,1)}\cdot \Pr{X_{j+1:n}=\omega_{j+1:n}\mid (c_{\bar{\^P}},e_{\bar{\^P}})=(c,1)}.
    \end{align*}
    If $e_{\bar{\^P}}=0$, the algorithm recurses with
    \text{RecursiveCouple}$(j+1,\bar{\alpha},\bar{\beta})$; by the induction hypothesis, the marginal distribution of $X_{j+1:n}$ is given by
        $\sum_{s=1}^{k_1}\bar{\alpha}^{(s)}\prod_{i=j+1}^n \^P^{(s)}_i(\omega_i)$.
    If $e_{\bar{\^P}}=1$, the algorithm updates the weights to $\bar{\alpha}_{j,c} = (\bar{\alpha}^{(s)}_{j,c})_{s=1}^{k_1}$.
    If $c_{\bar{\^P}}=c_{\bar{\^Q}}$, then the algorithm recurses via \text{RecursiveCouple}; by the induction hypothesis, the marginal law of $X_{j+1:n} = \omega_{j+1:n}$ is
    $\sum_{s=1}^{k_1}\bar{\alpha}^{(s)}_{j,c}\prod_{i=j+1}^n \^P^{(s)}_i(\omega_i)$.
    Otherwise, the algorithm samples $X_{j+1:n}$ via \text{RecursiveSample}; by \Cref{lemma:validity-of-recursive-sampling}, its law is again
    $\sum_{s=1}^{k_1}\bar{\alpha}^{(s)}_{j,c}\prod_{i=j+1}^n \^P^{(s)}_i(\omega_i)$.
    Plugging these together with the above marginal distribution of $(c_{\bar{\^P}},e_{\bar{\^P}})$ into the decomposition, we obtain $\Pr{X_{j:n}=\omega_{j:n}} = \bar{\^P}(\omega_{j:n})$ by the same argument as in the proof of \Cref{lemma:validity-of-recursive-sampling}.
    Therefore, $X_{j:n}\sim\bar{\^P}$.
\end{proof}

\section{Discrepancy of the Recursive Coupling}\label{sec:discrepancy-RC}
\begin{proof}[Proof of~\Cref{lemma:discrepancy-recursive-coupling}]
    Consider a step $(j, \bar\alpha, \bar\beta)$ in the \text{RecursiveCouple}($j, \bar{\alpha}, \bar{\beta}$) process with $\bar{\^P} = \sum_{s=1}^{k_1}\bar{\alpha}^{(s)}\bigotimes_{i=j}^n \^P^{(s)}_i$ and $\bar{\^Q} = \sum_{t=1}^{k_2}\bar{\beta}^{(t)}\bigotimes_{i=j}^n \^Q^{(t)}_i$. 
    Let $k(\bar{\alpha})$ and $k(\bar{\beta})$ denote the number of nonzero weights in $(\bar{\alpha}^{(s)})_{s=1}^{k_1}$ and $(\bar{\beta}^{(t)})_{t=1}^{k_2}$, respectively. 
    Formally, $k(\bar{\alpha})=|\{\bar{\alpha}^{(s)} > 0 \mid 1\leq s \leq k_1\}|$ and $k(\bar{\beta})$ is defined similarly. 
    For $j, \bar{\alpha}, \bar{\beta}$, we define a parameter $\delta(j, \bar{\alpha}, \bar{\beta})$ as follows:
    \begin{align*}
    \delta\left(j, \bar\alpha, \bar\beta \right) \defeq \max_{ \substack{ \Lambda \in \+S, \omega \in [q]^\Lambda}}\left| \bar{\^P}_\Lambda(\omega) - \bar{\^Q}_\Lambda(\omega) \right| = \max_{ \substack{ \Lambda \in \+S, \omega \in [q]^\Lambda}} \left| \sum_{s=1}^{k_1} \bar{\alpha}^{(s)} \prod_{i\in \Lambda} \^P_i^{(s)}(\omega_i) - \sum_{t=1}^{k_2}\bar{\beta}^{(t)}\prod_{i\in \Lambda} \^Q_i^{(t)}(\omega_i) \right|
    \end{align*}
    where the maximum is taken over all subsets $\Lambda \in \+S$ such that 
    \begin{align}\label{eq:S-definition}
    \+S \defeq \left\{ \Lambda \subseteq \{j, j+1, \ldots, n\} \mid |\Lambda| \leq k(\bar{\alpha})+k(\bar{\beta}) -1 \right\}.
    \end{align}
    Note that both $\bar{\^P}$ and $\bar{\^Q}$ are defined over the indices $\{j, j+1, \ldots, n\}$.
    The definition of $\delta\left(j, \bar\alpha, \bar\beta \right)$ enumerates all possible subsets $\Lambda$ of size at most $k(\bar{\alpha})+k(\bar{\beta}) -1$ and finds the configuration $\omega \in [q]^\Lambda$ that maximizes the absolute difference between $\bar{\^P}_\Lambda(\omega)$ and $\bar{\^Q}_\Lambda(\omega)$. Hence, for every such $\Lambda$, we have $\delta\left(j, \bar\alpha, \bar\beta \right) \leq \DTV{\bar{\^P}_\Lambda}{\bar{\^Q}_\Lambda}$, and by the data processing inequality for total variation distance, we have
    \begin{align}\label{eq:delta-dtv-bound}
    \delta\left(j, \bar\alpha, \bar\beta \right) \leq \DTV{\bar{\^P}}{\bar{\^Q}}.
    \end{align}
    
    The algorithm \text{RecursiveCouple} defines a coupling $\bar{\+C}_{\-{RE}}$ between $\bar{\^P}$ and $\bar{\^Q}$.
    Let $X_{j:n} \sim \bar{\^P}$ and $Y_{j:n} \sim \bar{\^Q}$ denote the output of the coupling algorithm.
    Let
     \[F\left(j, \bar\alpha, \bar\beta \right) \defeq \Pr[\bar{\+C}_{\-{RE}}]{X_{j:n}\neq Y_{j:n}},\qquad\text{where }X_{j:n} \sim \bar{\^P}\text{ and }Y_{j:n} \sim \bar{\^Q}.\] 
     We next prove that for the recursive coupling step $\text{RecursiveCouple}(j, (\bar{\alpha}^{(s)})_{s=1}^{k_1}, (\bar{\beta}^{(t)})_{t=1}^{k_2})$,
    \begin{align}\label{eq:F-recursive-coupling}
    F\left(j, \bar\alpha, \bar\beta \right) \leq \left(4(n-j+1)q\right)^{k(\bar{\alpha})+k(\bar{\beta})-1} \cdot \delta\left(j, \bar\alpha, \bar\beta \right).
    \end{align}
    Note that~\eqref{eq:F-recursive-coupling} implies the lemma by plugging in $j = 1$, $\alpha = \bar{\alpha}$, and $\beta = \bar{\beta}$, and using the fact that $k(\alpha) + k(\beta) \leq  k_1 + k_2$ and $\delta(1, \alpha, \beta) \leq \DTV{\^P}{\^Q}$ by~\eqref{eq:delta-dtv-bound}.

    We apply mathematical induction on $j$ from $n+1$ to $1$. When $j=n+1$, \text{RecursiveCouple} always returns $(X_{j:n},Y_{j:n}) = (\emptyset, \emptyset)$, which implies $F\left(n+1, \bar{\alpha}, \bar{\beta}\right) = 0$, and~\eqref{eq:F-recursive-coupling} holds trivially.
    For the inductive step, fix $1 \leq j \leq n$ and we assume that~\eqref{eq:F-recursive-coupling} holds for all $j'$ with $j+1\leq j'\leq n+1$. We need to show that~\eqref{eq:F-recursive-coupling} holds for $j$. We can decompose the probability as follows:
    \begin{align*}
        F\left(j, \bar\alpha, \bar\beta \right) = \Pr[\bar{\+C}_{\-{RE}}]{X_{j:n}\neq Y_{j:n}} &= \sum_{c\in [q]} \Pr[]{(c_{\bar{\^P}},e_{\bar{\^P}}) = (c,0)} \Pr[\bar{\+C}_{\-{RE}}]{X_{j:n}\neq Y_{j:n} \mid (c_{\bar{\^P}},e_{\bar{\^P}}) = (c,0)} \\
        & \, + \sum_{c\in [q]} \Pr[]{(c_{\bar{\^P}},e_{\bar{\^P}}) = (c,1)} \Pr[\bar{\+C}_{\-{RE}}]{X_{j:n}\neq Y_{j:n} \mid (c_{\bar{\^P}},e_{\bar{\^P}}) = (c,1)}.
    \end{align*}
    In the above equation, we partition the probability by enumerating all possible values of $c_{\bar{\^P}} \in [q]$ and $e_{\bar{\^P}} \in \{0,1\}$. The equation follows from the law of total probability. The event $(c_{\bar{\^P}},e_{\bar{\^P}}) = (c,0)$ happens with probability $\ell(j, \bar\alpha, \bar\beta, c)$. By the definition of the coupling, $c_{\bar{\^Q}} = c$ and $e_{\bar{\^Q}} = 0$ in this case, and the algorithm proceeds recursively to Line~\ref{line:recursive-couple-0}. For the case $(c_{\bar{\^P}},e_{\bar{\^P}})=(c,1)$, we can further divide it into two subcases: (1) $c_{\bar{\^P}} = c_{\bar{\^Q}} = c$, which happens with probability $\min\{\bar{\^P}_j(c), \bar{\^Q}_j(c)\} - \ell(j, \bar\alpha, \bar\beta, c)$, and the algorithm proceeds recursively to Line~\ref{line:recursive-couple-1}; (2) $c_{\bar{\^P}} \neq c_{\bar{\^Q}}$, which happens with probability $\max\{\bar{\^P}_j(c)-\bar{\^Q}_j(c), 0\}$, and the coupling fails ($X_j \neq Y_j$ implies $X \neq Y$). Overall, we have the following recursion:
    \begin{align*}
        F\left(j, \bar\alpha, \bar\beta \right) &\leq \sum_{c\in [q]} \ell(j, \bar\alpha, \bar\beta, c) \cdot F\left(j+1, \bar{\alpha}, \bar{\beta}\right)\\ 
        &\quad + \sum_{c\in [q]} \left(\min\{\bar{\^P}_j(c), \bar{\^Q}_j(c)\} - \ell(j, \bar\alpha, \bar\beta, c)\right) \cdot F\left(j+1, \bar{\alpha}_{j,c}, \bar{\beta}_{j,c} \right)\\
        &\quad + \sum_{c\in [q]} \max\{\bar{\^P}_j(c)-\bar{\^Q}_j(c),0\}.
    \end{align*}
    We analyze the three terms on the right-hand side one by one. By the induction hypothesis, the first term is at most $\left(4(n-j)q\right)^{k(\bar{\alpha})+k(\bar{\beta})-1} \cdot \delta\left(j+1, \bar\alpha, \bar\beta \right)$. For the second term, the induction hypothesis gives
    $F\left(j+1, \bar{\alpha}_{j,c}, \bar{\beta}_{j,c} \right) \leq \left(4(n-j)q\right)^{k(\bar{\alpha}_{j,c})+k(\bar{\beta}_{j,c})-1} \cdot \delta\left(j+1, \bar{\alpha}_{j,c}, \bar{\beta}_{j,c} \right)$. Moreover, to upper bound $F(j, \bar{\alpha}, \bar{\beta})$, we only consider $c\in [q]$ with $\min\{\bar{\^P}_j(c), \bar{\^Q}_j(c)\} - \ell(j, \bar\alpha, \bar\beta, c) \neq 0$. The third term is the total variation distance between $\bar{\^P}_j$ and $\bar{\^Q}_j$. By the definition of $\delta\left(j, \bar{\alpha}, \bar{\beta}\right)$, taking $\Lambda = \{j\}$ and $\omega_j = c$ implies $|\bar{\^P}_j(\omega) - \bar{\^Q}_j(\omega)| \leq \delta\left(j, \bar{\alpha}, \bar{\beta}\right)$. Hence, the third term is at most $q \cdot \delta\left(j, \bar{\alpha}, \bar{\beta}\right)$. Overall, we have the following recursion:
    \begin{align*}
        F\left(j, \bar\alpha, \bar\beta \right) &\leq \sum_{c\in [q]} \ell(j, \bar\alpha, \bar\beta, c) \cdot \left(4(n-j)q\right)^{k(\bar{\alpha})+k(\bar{\beta})-1} \cdot \delta\left(j+1, \bar\alpha, \bar\beta \right)\\
        &\quad + \sum_{c \in [q]} \left(\min\{\bar{\^P}_j(c), \bar{\^Q}_j(c)\} - \ell(j, \bar\alpha, \bar\beta, c)\right) \cdot \left(4(n-j)q\right)^{k(\bar{\alpha}_{j,c})+k(\bar{\beta}_{j,c})-1} \delta\left(j+1, \bar{\alpha}_{j,c}, \bar{\beta}_{j,c} \right)\\
        &\quad + q \cdot \delta\left(j, \bar{\alpha}, \bar{\beta}\right).
    \end{align*}
    Recall that the updated parameters when $\min\{\bar{\^P}_j(c), \bar{\^Q}_j(c)\} - \ell(j, \bar\alpha, \bar\beta, c) > 0$:
    \begin{align*}
        \bar\alpha_{j,c}^{(s)} \defeq \bar\alpha^{(s)} \frac{\^P^{(s)}_j(c) - \ell\left(j, \bar\alpha, \bar\beta , c\right)}{\bar{\^P}_j(c) - \ell\left(j, \bar\alpha, \bar\beta, c\right)} \text{ and } \bar\beta_{j,c}^{(t)} \defeq \bar\beta^{(t)} \frac{\^Q^{(t)}_j(c) - \ell\left(j, \bar\alpha, \bar\beta , c\right)}{\bar{\^Q}_j(c) - \ell\left(j, \bar\alpha, \bar\beta, c\right)}.
    \end{align*}
    Remark that if $\bar{\alpha}^{(s)}=0$ or $\bar{\beta}^{(t)}=0$, then the updated mixing weight $\bar{\alpha}^{(s)}_{j,c}=0$ or $\bar{\beta}^{(t)}_{j,c} = 0$, that is $k(\bar{\alpha}_{j,c}) \leq k(\bar{\alpha})$ and $k(\bar{\beta}_{j,c}) \leq k(\bar{\beta})$. By~\Cref{definition:probability-lower-bound}, there must exist some $s \in [k_1]$ with $\bar{\alpha}^{(s)} > 0$ or $t \in [k_2]$ with $\bar{\beta}^{(t)} > 0$, such that ${\^P}^{(s)}_j(c)= \ell(j, \bar\alpha, \bar\beta, c)$ or ${\^Q}^{(t)}_j(c) = \ell(j, \bar\alpha, \bar\beta, c)$, and thus there must exist $\bar{\alpha}^{(s)} > \bar\alpha_{j,c}^{(s)} = 0$ or $\bar{\beta}^{(t)}>\bar{\beta}_{j,c}^{(t)} = 0$.  Hence, if the recursion goes to Line~\ref{line:recursive-couple-1} with $c_{\bar{\^P}} = c_{\bar{\^Q}} = c$, then it must hold that $\min\{\bar{\^P}_j(c), \bar{\^Q}_j(c)\} - \ell(j, \bar\alpha, \bar\beta, c) > 0$ and we have $k(\bar{\alpha}_{j,c}) + k(\bar{\beta}_{j,c}) \leq k(\bar{\alpha}) + k(\bar{\beta})-1$. Next, we claim the following inequality:
    \begin{align}\label{eq:delta-recursive-coupling}
  \left(\min\{\bar{\^P}_j(c), \bar{\^Q}_j(c)\} - \ell(j, \bar\alpha, \bar\beta, c)\right) \delta\left(j+1, \bar{\alpha}_{j,c}, \bar{\beta}_{j,c} \right) \leq 3 \delta\left(j,\bar{\alpha},\bar{\beta}\right).
    \end{align}
    Let us first assume that~\eqref{eq:delta-recursive-coupling} holds. Then we have
    \begin{align*}
        F\left(j, \bar\alpha, \bar\beta \right) &\leq \left(4(n-j)q\right)^{k(\bar{\alpha})+k(\bar{\beta})-1} \cdot \delta\left(j+1, \bar\alpha, \bar\beta \right) + 3q\left(4(n-j)q\right)^{k(\bar{\alpha})+k(\bar{\beta})-2} \delta\left(j, \bar{\alpha}, \bar{\beta} \right) + q\delta\left(j, \bar{\alpha}, \bar{\beta}\right)\\
        &\leq \left(4(n-j+1)q\right)^{k(\bar{\alpha})+k(\bar{\beta})-1} \delta\left(j, \bar{\alpha}, \bar{\beta}\right),
    \end{align*}
    where the last inequality holds because $\delta(j+1,\bar\alpha, \bar \beta) \leq \delta(j,\bar\alpha,\bar\beta)$. 
    This finishes the induction step.

    Finally, we prove~\eqref{eq:delta-recursive-coupling}. Note that~\eqref{eq:delta-recursive-coupling} holds trivially when $\min\{\bar{\^P}_j(c), \bar{\^Q}_j(c)\} - \ell(j, \bar\alpha, \bar\beta, c) = 0$. We assume the case $\ell(j, \bar\alpha, \bar\beta, c) < \bar{\^P}_j(c) \leq \bar{\^Q}_j(c)$, and the other case, when $\ell(j, \bar\alpha, \bar\beta, c) < \bar{\^Q}_j(c) \leq \bar{\^P}_j(c)$, can be proved similarly.
    By the definition of $\delta(j+1, \bar{\alpha}_{j,c}, \bar{\beta}_{j,c})$ and~\eqref{eq:S-definition}, one needs to enumerate all possible subsets $\Lambda' \in \{j+1, \ldots, n\}$ with size at most $k(\bar{\alpha}_{j,c}) + k(\bar{\beta}_{j,c}) - 1$ and all configurations $\omega' \in [q]^{\Lambda'}$. Note that $k(\bar{\alpha}_{j,c}) + k(\bar{\beta}_{j,c}) \leq k(\bar{\alpha}) + k(\bar{\beta})-1$. We consider all $\Lambda'$ with size at most $k(\bar{\alpha}) + k(\bar{\beta}) - 2$, which covers all possible subsets $\Lambda'$ in the definition of $\delta(j+1, \bar{\alpha}_{j,c}, \bar{\beta}_{j,c})$. 
    Hence, we can get an upper bound on $\delta(j+1, \bar{\alpha}_{j,c}, \bar{\beta}_{j,c})$.
    Formally, for any $\Lambda' \subseteq \{j+1, \ldots, n\}$, $|\Lambda'| \leq k(\bar{\alpha}) + k(\bar{\beta}) - 2$, and any $\omega' \in [q]^{\Lambda'}$, we have
    \begin{align}
        &\left| \sum_{s=1}^{k_1} \bar{\alpha}^{(s)}_{j,c} \prod_{i\in \Lambda'} \^P_i^{(s)}(\omega'_i) - \sum_{t=1}^{k_2}\bar{\beta}^{(t)}_{j,c}\prod_{i\in \Lambda'} \^Q_i^{(t)}(\omega'_i) \right|\notag\\ 
        = &\left| \sum_{s=1}^{k_1} \bar{\alpha}^{(s)} \frac{{\^P}^{(s)}_j(c) - \ell(j, \bar\alpha, \bar\beta, c)}{\bar{\^P}_j(c) - \ell(j, \bar\alpha, \bar\beta, c)} \prod_{i\in \Lambda'} \^P_i^{(s)}(\omega'_i) - \sum_{t=1}^{k_2}\bar{\beta}^{(t)} \frac{{\^Q}^{(t)}_j(c) - \ell(j, \bar\alpha, \bar\beta, c)}{\bar{\^Q}_j(c) - \ell(j, \bar\alpha, \bar\beta, c)} \prod_{i\in \Lambda'} \^Q_i^{(t)}(\omega'_i) \right|\notag\\
        (\ast)\quad \leq & \frac{1}{\bar{\^P}_j(c) - \ell(j, \bar\alpha, \bar\beta, c)} \left| \sum_{s=1}^{k_1} \bar{\alpha}^{(s)} \^P^{(s)}_j(c) \prod_{i\in \Lambda'} \^P_i^{(s)}(\omega'_i) - \sum_{t=1}^{k_2}\bar{\beta}^{(t)} \^Q^{(t)}_j(c) \prod_{i\in \Lambda'} \^Q_i^{(t)}(\omega'_i) \right|\label{eq:pr-fail-3}\\
        &+ \frac{\ell(j, \bar\alpha, \bar\beta, c)}{\bar{\^P}_j(c) - \ell(j, \bar\alpha, \bar\beta, c)} \left| \sum_{s=1}^{k_1} \bar{\alpha}^{(s)} \prod_{i\in \Lambda'} \^P_i^{(s)}(\omega'_i) - \sum_{t=1}^{k_2}\bar{\beta}^{(t)} \prod_{i\in \Lambda'} \^Q_i^{(t)}(\omega'_i) \right|\label{eq:pr-fail-4}\\
        &+ \left| \frac{1}{\bar{\^P}_j(c) - \ell(j, \bar\alpha, \bar\beta, c)} - \frac{1}{\bar{\^Q}_j(c) - \ell(j, \bar\alpha, \bar\beta, c)} \right| \sum_{t=1}^{k_2}\bar{\beta}^{(t)} \left(\^Q^{(t)}_j(c) - \ell(j, \bar\alpha, \bar\beta, c)\right) \prod_{i\in \Lambda'} \^Q_i^{(t)}(\omega'_i)\label{eq:pr-fail-5}
    \end{align} 
    The inequality ($\ast$) is obtained by first inserting two terms, $\sum_{t=1}^{k_2}\bar{\beta}^{(t)} \frac{{\^Q}^{(t)}_j(c) - \ell(j, \bar\alpha, \bar\beta, c)}{\bar{\^P}_j(c) - \ell(j, \bar\alpha, \bar\beta, c)} \prod_{i\in \Lambda'} \^Q_i^{(t)}(\omega'_i)$ and $-\sum_{t=1}^{k_2}\bar{\beta}^{(t)} \frac{{\^Q}^{(t)}_j(c) - \ell(j, \bar\alpha, \bar\beta, c)}{\bar{\^P}_j(c) - \ell(j, \bar\alpha, \bar\beta, c)} \prod_{i\in \Lambda'} \^Q_i^{(t)}(\omega'_i)$, whose sum is $0$, and then applying the triangle inequality.
    Our task is now reduced to bounding the three terms in~\eqref{eq:pr-fail-3}, \eqref{eq:pr-fail-4}, and \eqref{eq:pr-fail-5}.
    For \eqref{eq:pr-fail-3}, consider the definition of $\delta(j, \bar{\alpha}, \bar{\beta})$. We can take $\Lambda = \{j\} \cup \Lambda'$ and $\omega = (\omega_j \gets c) \cup \omega'$ in the definition. Note that $\Lambda$ is allowed in the definition of $\delta(j,\bar\alpha,\bar\beta)$ because $|\Lambda'| \leq k(\bar{\alpha}) + k(\bar{\beta}) - 2$. Then
    \[\frac{1}{\bar{\^P}_j(c) - \ell(j, \bar\alpha, \bar\beta, c)}\left| \sum_{s=1}^{k_1} \bar{\alpha}^{(s)} \^P^{(s)}_j(c) \prod_{i\in \Lambda'} \^P_i^{(s)}(\omega'_i) - \sum_{t=1}^{k_2}\bar{\beta}^{(t)} \^Q^{(t)}_j(c) \prod_{i\in \Lambda'} \^Q_i^{(t)}(\omega'_i) \right| \leq \frac{\delta\left(j,\bar{\alpha},\bar{\beta}\right)}{\bar{\^P}_j(c) - \ell(j, \bar\alpha, \bar\beta, c)}.\]
    For \eqref{eq:pr-fail-4}, as $\ell(j, \bar\alpha, \bar\beta, c) \leq 1$, consider $\Lambda = \Lambda'$ and $\omega = \omega'$ in the definition of $\delta(j, \bar\alpha, \bar\beta)$; we have 
    \begin{align*}
    \frac{\ell(j, \bar\alpha, \bar\beta, c)}{\bar{\^P}_j(c) - \ell(j, \bar\alpha, \bar\beta, c)} \left| \sum_{s=1}^{k_1} \bar{\alpha}^{(s)} \prod_{i\in \Lambda'} \^P_i^{(s)}(\omega'_i) - \sum_{t=1}^{k_2}\bar{\beta}^{(t)} \prod_{i\in \Lambda'} \^Q_i^{(t)}(\omega'_i) \right| &\leq \frac{\ell(j, \bar\alpha, \bar\beta, c) \cdot \delta\left(j, \bar\alpha, \bar\beta\right)}{\bar{\^P}_j(c) - \ell(j, \bar\alpha, \bar\beta, c)}\\ 
    &\leq \frac{\delta\left(j, \bar\alpha, \bar\beta\right)}{\bar{\^P}_j(c) - \ell(j, \bar\alpha, \bar\beta, c)}.
    \end{align*}
    For \eqref{eq:pr-fail-5}, as $\^Q_i^{(t)}(\omega'_i) \leq 1$ and $\^Q_j^{(t)}(c) \geq \ell(j, \bar\alpha, \bar\beta, c)$ for $t \in [k_2]$ with $\bar\beta^{(t)} > 0$, 
    \[\sum_{t=1}^{k_2}\bar{\beta}^{(t)} \left(\^Q^{(t)}_j(c) - \ell(j, \bar\alpha, \bar\beta, c)\right) \prod_{i\in \Lambda'} \^Q_i^{(t)}(\omega'_i) \leq \sum_{t=1}^{k_2}\bar{\beta}^{(t)} \left(\^Q^{(t)}_j(c) - \ell(j, \bar\alpha, \bar\beta, c)\right) = \bar{\^Q}_j(c) - \ell(j, \bar\alpha, \bar\beta, c),\]
    thus
    \begin{align*}
        &\left| \frac{1}{\bar{\^P}_j(c) - \ell(j, \bar\alpha, \bar\beta, c)} - \frac{1}{\bar{\^Q}_j(c) - \ell(j, \bar\alpha, \bar\beta, c)} \right| \sum_{t=1}^{k_2}\bar{\beta}^{(t)} \left(\^Q^{(t)}_j(c) - \ell(j, \bar\alpha, \bar\beta, c)\right) \prod_{i\in \Lambda'} \^Q_i^{(t)}(\omega'_i)\\ 
        \leq &\left| \frac{1}{\bar{\^P}_j(c) - \ell(j, \bar\alpha, \bar\beta, c)} - \frac{1}{\bar{\^Q}_j(c) - \ell(j, \bar\alpha, \bar\beta, c)} \right| \left( \bar{\^Q}_j(c) - \ell(j, \bar\alpha, \bar\beta, c) \right)\\=&\left| \frac{\bar{\^P}_j(c) - \bar{\^Q}_j(c)}{\bar{\^P}_j(c) - \ell(j, \bar\alpha, \bar\beta, c)} \right|\leq \frac{\delta\left(j, \bar\alpha, \bar\beta\right)}{\bar{\^P}_j(c) - \ell(j, \bar\alpha, \bar\beta, c)}.
    \end{align*}
    Then \eqref{eq:delta-recursive-coupling} follows by combining the three inequalities above. 
\end{proof}

\end{document}